\documentclass[acmsmall,nonacm,authorversion]{acmart}

\settopmatter{printacmref=false}
\setcopyright{none}
\renewcommand\footnotetextcopyrightpermission[1]{}
\pagestyle{plain}

\usepackage{amsthm}
\usepackage[shortlabels]{enumitem}
\usepackage[ruled,linesnumbered,noend]{algorithm2e}
\usepackage{cleveref}
\usepackage{subcaption} %
\usepackage{todonotes}
\usepackage{thmtools}
\usepackage{etoolbox}  %

\makeatletter
\patchcmd{\@footnotetext}{\@makefntext}{\@@makefntext}{}{}
\patchcmd{\@mpfootnotetext}{\@makefntext}{\@@makefntext}{}{}
\makeatother

\settopmatter{printfolios=true}  %

\setcounter{secnumdepth}{2}

\newcommand{\eps}{\varepsilon}
\newcommand{\rank}{\mathrm{rank}}
\newcommand{\estRank}{\widehat{\rank}}

\newcommand{\consolidate}{\textsc{Consolidate}\xspace}

\newcommand{\ignored}[1]{}

\definecolor{RoyalBlue}{RGB}{26, 13, 230}   %
\definecolor{ForestGreen}{RGB}{34, 139, 34} %
\definecolor{BrickRed}{RGB}{203, 65, 84}    %
\definecolor{DarkOrchid}{RGB}{153, 50, 204}

\DeclareMathOperator{\Err}{Err}

\newtheorem{theorem}{Theorem}

\newtheorem{definition}[theorem]{Definition}
\newtheorem{lemma}[theorem]{Lemma}
\newtheorem{observation}[theorem]{Observation}

\theoremstyle{remark}

\title{SplineSketch: Even More Accurate Quantiles with Error Guarantees}

\author{Aleksander Łukasiewicz}
\orcid{0000-0003-1808-8330}
\affiliation{\institution{Computer Science Institute of Charles University}\city{Prague}\country{Czech Republic}}
\email{olekluka@iuuk.mff.cuni.cz}
    
\author{Jakub Tětek}
\orcid{0000-0002-2046-1627}
\affiliation{\institution{INSAIT, University of Sofia "St. Kliment Ohridski"}\city{Sofia}\country{Bulgaria}}
\email{j.tetek@gmail.com}

\author{Pavel Vesel{\'{y}}}
\orcid{0000-0003-1169-7934}
\affiliation{\institution{Computer Science Institute of Charles University}\city{Prague}\country{Czech Republic}}
\email{vesely@iuuk.mff.cuni.cz}

\begin{document}

\begin{abstract}
Space-efficient streaming estimation of quantiles in massive datasets is a fundamental problem with numerous applications in data monitoring and analysis. While theoretical research led to optimal algorithms, such as the Greenwald-Khanna algorithm or the KLL sketch, practitioners often use other sketches that perform significantly better in practice but lack theoretical guarantees. Most notably, the widely used $t$-digest has unbounded worst-case error.

In this paper, we seek to get the best of both worlds. We present a new quantile summary, SplineSketch, for numeric data, offering near-optimal theoretical guarantees, namely uniformly bounded rank error,  and outperforming $t$-digest by a factor of 2--20 on a range of synthetic and real-world datasets. To achieve such performance, we develop a novel approach that maintains a dynamic subdivision of the input range into buckets while fitting the input distribution using monotone cubic spline interpolation.
\end{abstract}

\keywords{quantile estimation, streaming algorithms, mergeable summaries, data sketches, cubic spline interpolation}

\begin{CCSXML}
	<ccs2012>
	<concept>
	<concept_id>10003752.10003809.10010055.10010057</concept_id>
	<concept_desc>Theory of computation~Sketching and sampling</concept_desc>
	<concept_significance>500</concept_significance>
	</concept>
	<concept>
	<concept_id>10003752.10003753.10003760</concept_id>
	<concept_desc>Theory of computation~Streaming models</concept_desc>
	<concept_significance>300</concept_significance>
	</concept>
	<concept>
	<concept_id>10003752.10010070.10010111.10011710</concept_id>
	<concept_desc>Theory of computation~Data structures and algorithms for data management</concept_desc>
	<concept_significance>300</concept_significance>
	</concept>
	</ccs2012>
\end{CCSXML}

\ccsdesc[500]{Theory of computation~Sketching and sampling}
\ccsdesc[300]{Theory of computation~Streaming models}
\ccsdesc[300]{Theory of computation~Data structures and algorithms for data management}

\maketitle

\section{Introduction}

Data sketching has become one of the main tools for dealing with massive data volumes. %
Sketches provide a scalable way to extract key features from large datasets, enabling
real-time analysis at streaming speed or massively parallel processing of distributed datasets, in applications such as network monitoring, machine learning, privacy, or bioinformatics; see %
e.g.~\cite{SifferFTL17,cormodeY20,Cormode21a,RothchildPUISB020,Smith0T20,PaghS21,baker2023genomic,bonnie2024dandd,ClemensSGH23}.

Approximating order statistics %
is a central sketching problem. %
The goal is to summarize a massive, potentially distributed dataset in a small space into a \emph{sketch} that is incrementally updatable, mergeable across data sources, and accurately approximates the data distribution.
Namely, the sketch should provide low-error estimates for the median, percentiles, and their generalization, quantiles,
and should also approximately answer rank queries -- that is, estimate the number of input items no larger than a given element.
Quantile sketches have been applied, e.g., to monitoring latencies~\cite{tene2015_latency_talk}
or for detecting anomalies~\cite{SifferFTL17}.

Streaming quantile estimation has thus received a significant attention in research community.
From the theory point of view, it has been solved to a large extent. Namely, optimal algorithms are known for the uniform error, such as the Greenwald-Khanna (GK) sketch~\cite{greenwald2001space,CormodeV20} and KLL~\cite{KarninLL16}.
However, the accuracy of these theory-based quantile summaries %
on real-world data is only a small constant factor better than the worst-case bound.
Instead, practitioners often use algorithms that perform significantly better in practice but have no guarantees.
Most notably, $t$-digest~\cite{dunning19-t-digest} is widely used --- according to~\cite{dunning21}, it has been adopted by major tech companies (reportedly Microsoft, Facebook, Google) and open-source projects (Elasticsearch \cite{::elasticdocs}, TimescaleDB \cite{::tigerdatadocumentation}, Apache Dubbo \cite{::apachedubbodocumentation}, etc.).
At the same time, no matter how large the $t$-digest is, its error can be arbitrarily bad~\cite{CormodeMRV21}.
Furthermore, to the best of our knowledge, other quantile sketches do not provide both worst-case bounds and error similar to $t$-digest in practice; see Section~\ref{sec:related}.

Here, we introduce a new quantile sketch, called SplineSketch, that not only gets the best of both worlds -- theoretical guarantees and very high accuracy in practice -- but also significantly outperforms the state of the art, including $t$-digest~\cite{dunning19-t-digest};
see Figure~\ref{figTop:error} and Section~\ref{sec:experiments}.
Our sketch processes any numerical input in a streaming fashion and comes with a merge operation,
enabling efficient parallel or distributed processing of large datasets.
Before explaining our contribution in more detail, we describe our setting, including the considered error guarantees.

\begin{figure}[t]
	\vspace{-0.2cm}
	\includegraphics[width=0.9\textwidth]{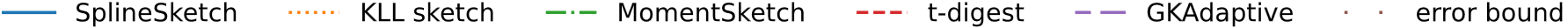}
	\vspace{-0.3cm}
	\centering
	\subcaptionbox{HEPMASS dataset~\cite{hepmass}\label{fig:intro-hepmass}}{%
		\includegraphics[width=0.32\linewidth]{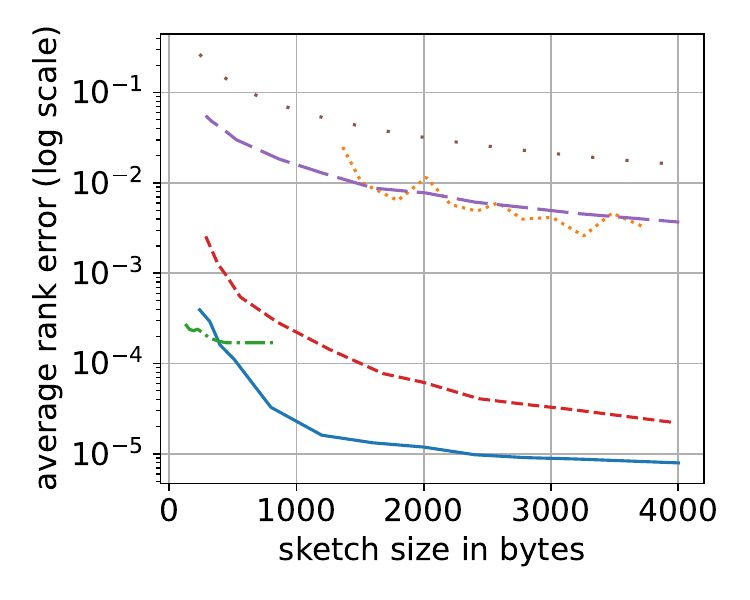}%
	}
	\subcaptionbox{Power dataset~\cite{individual_household_electric_power_consumption_235}\label{fig:intro-power}}{%
		\includegraphics[width=0.32\linewidth]{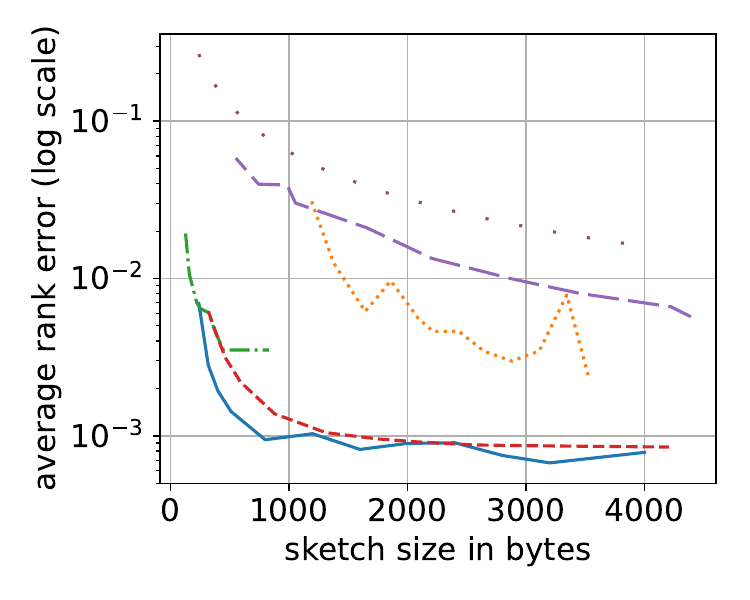}%
	}
	\subcaptionbox{Books dataset~\cite{sosd,MarcusKRSMK0K20sosd}\label{fig:intro-books}}{%
		\includegraphics[width=0.32\linewidth]{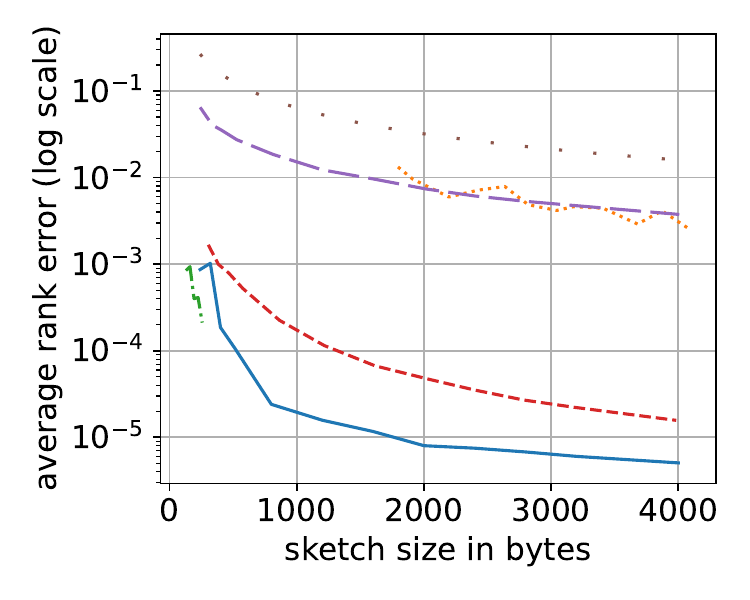}%
	}
	\caption{Average error of SplineSketch and other available sketches on three real-world datasets
		(Section~\ref{sec:experiments}).
		\vspace{-0.3cm}}
	\label{figTop:error}
\end{figure} 

\paragraph{Definitions.}
We use $n$ for the number of items currently summarized by the sketch.
For a multiset $S$ of real numbers and any $x \in \mathbb{R}$, let $\rank(x) = |\{y 
\in S : y \le x\}|$, i.e., the number of input items smaller than or equal to $x$.
For any $q \in (0,1]$, we call $y := \min \{x\in S : \rank(x) \geq q \cdot n\}$ the \emph{$q$-quantile} of $S$,
i.e., the $q$-quantile is the $\lceil q\cdot n\rceil$-th smallest item in $S$ (if there are no duplicates).
This definition implies that a method for estimating ranks may be used for estimating quantiles
with the same accuracy, by performing binary search.
The \emph{aspect ratio} $\alpha$ of $S$ is defined as 
$\alpha = (\max(S) - \min(S)) / \eta$, where $\eta = \min_{i \neq j: x_i \neq x_j} |x_i - x_j|$ 
is the smallest difference of distinct items on input, assuming at least two distinct items are present.
(To avoid formal issues on inputs with $\alpha < 2$, we in fact use $\max(2, \alpha)$
	so that $\log_2 \alpha \ge 1$.)

	We focus on the \emph{streaming and mergeability settings}. The streaming model requires that the dataset $S$ is
	processed in one pass with low memory (in an arbitrary, possibly adversarial, order).
	In the more general setting of mergeable summaries~\cite{AgarwalCHPWY13},
	$S$ is arbitrarily split into a number of parts, each summarized in the streaming fashion,
	and the resulting summaries are combined via pairwise merge operations.
	In other words, besides answering rank and quantile queries,
	the sketch needs to have an update operation, which adds a new observation to the dataset,
	and a merge operation, which takes two sketches and outputs a single sketch of their combined input.

	Following a long line of work, e.g., \cite{shrivastava2004medians,greenwald2001space,AgarwalCHPWY13,KarninLL16,luo16_quantiles_experimental,CormodeV20,AssadiJPS23-GK_weighted,GuptaSW24optimaQS-non-comp,GribelyukSWY24_GK-KG_sketch},
	we study the \emph{uniform rank error}, where, for a given accuracy parameter $\eps > 0$ and any query item $x$ (that may or may not be in the input), we require
	the algorithm to return a rank estimate $\estRank(x)$ such that
	$|\estRank(x) - \rank(x)| \le \eps \cdot n$,
	and for any quantile query $q\in (0, 1]$, the algorithm should return a $q'$-quantile for $q' = q \pm \eps$.

\paragraph{Our contributions}
Our main contribution is a new technique for designing quantile summaries with bounded worst-case error and high accuracy in practice, combining dynamically maintained subdivision of the input range with a suitable interpolation
over the buckets, outlined in \Cref{sec:overview} and described in \Cref{sec:SplineSketchDescription}.

We demonstrate the power of this technique on a new quantile sketch, called SplineSketch.
It  requires a single parameter $k$ that determines its space usage, which
is $O(k)$ memory words for processing the input and $16\cdot k$ bytes when serialized for storage.
The sketch deterministically guarantees a \emph{theoretically near-optimal uniform rank error} on any input, close to the best possible error of $O(n/k)$
with $k$ memory words; namely, the error is $O(n/k \cdot \log \alpha)$,
where $\alpha$ is the aspect ratio. %
At the same time, the accuracy of our sketch in practice is in fact typically 10--200 times better than $n/k$.
Furthermore, we analyze mergeability of our sketch with respect to worst-case error and show that a similar error bounds applies when merging sketches summarizing a similar number of items.
Finally, our sketch can be \emph{resized} based on the current memory requirements.

We evaluate SplineSketch in a prototype implementation and compare it with state-of-the-art sketches on a range of synthetic and real-world datasets.
We demonstrate both in the streaming and mergeability settings
that SplineSketch has error smaller by a factor of 2--20 compared to $t$-digest's error when given the same space
(see Figure~\ref{figTop:error});
in some cases, the improvement of the maximum error is by two orders of magnitude.
For an input with frequent items, a.k.a.\ heavy hitters, such accuracy is achieved by additionally using
the Misra-Gries sketch~\cite{misra1982finding} to filter them out.
The improvement over the GK~\cite{greenwald2001space} and KLL sketches~\cite{KarninLL16} is even larger, typically by 2--3 orders of magnitude.
MomentSketch~\cite{GanDTSB18} performs better on many smooth distributions as it is very compact,
but requires very large query time and its accuracy on real-world datasets is worse, without the possibility for improvement by increasing the number of moments and log-moments due to numerical issues.
We also evaluate the update, query, and merge times. Similarly to $t$-digest, we keep a buffer of size $O(k)$ when building the sketch, with bigger buffer resulting in faster update time.
Depending on the data size, we measured the average time per update of our prototype implementation to be 0.1 to 1$\mu$s, similar to $t$-digest.
Nevertheless, MomentSketch and KLL achieve 2--3 times faster update times than SplineSketch due to their simplicity.
However, the query time of SplineSketch is also 0.1 to 1$\mu$ depending on the number of queries, one of the fastest in our evaluation.
The time per merge operation in our implementation is worse than for other sketches (up to hundreds of $\mu$s) and remains to be optimized.	
Finally, we also provide ablation studies, justifying the design choices behind our sketch.

Overall, we demonstrate that besides worst-case guarantees and high accuracy in practice,
our approach offers flexibility, making it suitable for a range of applications.

\subsection{Overview of Our Approach}\label{sec:overview}

The core of our approach is to maintain a subdivision of the input range into \emph{buckets}
in a way that adapts well to the input distribution.
Namely, SplineSketch consists of thresholds $\tau_1, \cdots, \tau_k$ and then $(\tau_{i-1}, \tau_i]$ is the $i$-th bucket, with $\tau_0 = -\infty$.
For each bucket, we store a counter $b_i$ that represents an approximate number of items that lie in $(\tau_{i-1}, \tau_i]$;
we refer to $b_i$ as the \emph{size} of the bucket $i$.
Furthermore, we use cubic spline interpolation in order to answer queries accurately in practice;
see Figure~\ref{fig:sketch} for an illustration. %

\begin{figure}[t]
	\centering
	\includegraphics[width=0.5\linewidth]{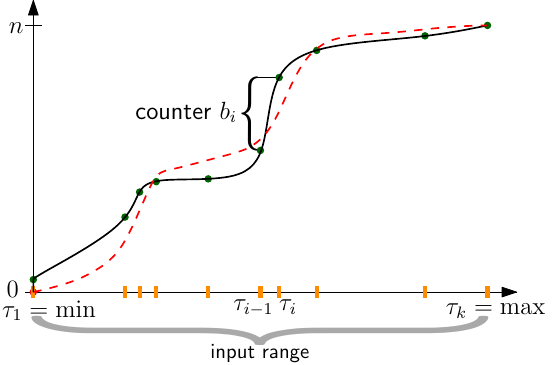}
	\caption{SplineSketch illustration, with the values and thresholds $\tau_i$ on the $x$ axis
	and the rank space $[0,n]$ on the $y$ axis.
	The true cumulative distribution function of the data is depicted as a red dashed curve.
	The green dots correspond to the prefix sums of bucket counters, i.e., the counter $b_i$ is the difference of the $y$ values of the green dots at $\tau_i$ and $\tau_{i-1}$.
	The solid black curve is an interpolation over the green dots.}
	\label{fig:sketch}
\end{figure}

\paragraph{Answering queries using cubic splines.}
The estimated rank of any threshold $\tau_i$ is $\sum_{j=1}^i b_i$. 
However, it is not clear what is the best way of answering rank queries that are not at one of the bucket boundaries $\tau_i$. Most of the previous work would just instead query the closest value $\tau_j$ and return that as the answer, or perform a linear interpolation. 
However, we seek to do better than that, using more advanced interpolation methods, namely the piecewise cubic Hermite interpolating polynomial (PCHIP)~\cite{fritsch.butland:1984:siamj.sci.andstat.comput., fritsch1980monotone}.
Specifically, if we are querying $x$, such that $\tau_i < x < \tau_{i+1}$, we estimate the ranks $\hat{r}_{i-1}, \cdots, \hat{r}_{i+2}$ of $\tau_{i-1}, \cdots, \tau_{i+2}$, then compute the PCHIP in order to interpolate between $\hat{r}_{i}$ and $\hat{r}_{i+1}$, and evaluate this polynomial at $x$, which gives the estimate that SplineSketch returns. 

\paragraph{Maintaining the buckets.}
The most important choice we need to make is how to set the thresholds of the buckets. 
We always set $\tau_1$ and $\tau_k$ to be the smallest and largest items seen so far, respectively. 
As new data arrives, we adjust the buckets by splitting some of them and joining certain pairs of adjacent ones.
When we \emph{join} two adjacent buckets, we simply remove the threshold they share and set the new counter to the sum of the two bucket counters.
In the \emph{split} procedure, we split the bucket in the middle. 
We then need to decide how to divide the count between the two new sub-buckets. 
To this end, we perform a query (in a way described above) at the midpoint of the original bucket and set the counts of the new sub-buckets to be consistent with the query result.
It remains to specify how we choose which buckets to split and join.
For simplicity, in the description below, we ignore the issue of update efficiency.
Ultimately, we address this by using a buffer and performing updates in batches, thereby achieving low amortized update time.

There are two rules for choosing a bucket to split.
Firstly, throughout the execution, we ensure that all buckets have counter $O(n/k)$, and we split a bucket when it gets bigger than the bound. %
We later use these size bounds, together with other invariants, to show the worst-case error bounds. 
Intuitively speaking, we show that no individual item participates in too many splits, and that each split contributes to the total error by at most the size of the bucket containing that item.

The second rule for bucket splitting is intended to achieve significantly better error in practice, while preserving the worst-case error bounds. 
To this end, for each bucket we introduce a value called the \emph{heuristic error}, which is our estimate of the order of error of the spline interpolation within the bucket.
The motivation behind this notion is that changes in the data distribution function are captured by the derivatives of that distribution.
To illustrate the intuition behind our heuristic error, suppose the data items are i.i.d.\ from an unknown distribution. 
If the growth of its CDF $F$ does not change too much around a bucket, specifically the absolute value of its second derivative $F''$ is small, the error of the spline approximation should be low as well.
In practice, we do not know the CDF and the data may not be i.i.d., but we nevertheless use an estimate of the CDF's second derivative around a bucket as a heuristic approximation of the true error.
We do not store this value explicitly and instead compute it only when needed.
If the total heuristic error can be decreased substantially by splitting one bucket and joining two other adjacent buckets, while preserving bucket bounds, then we perform the split and join of these buckets. 
The exact function that we use as the heuristic error is given by Equation~\ref{eqn:heurError-der2} in Section~\ref{sec:consolidate}.

To ensure we always have exactly $k$ buckets, whenever we split a bucket, we join some pair of adjacent buckets.
When we perform a join, we would like to choose two adjacent buckets which together would form a bucket with the lowest possible value of the heuristic error.
However, we need to ensure that the new bucket would not grow too large.
Therefore, we only consider pairs for which the size of the resulting bucket would be not too close to the bucket bound, specifically 
at most 75\% of the bucket bound.

Finally, we want to avoid performing too many joins and splits on a single bucket, as this can cause the error to accumulate.
To this end, we divide the input into epochs, where one epoch is defined as time during which the input size increased by a fixed constant factor.
If the bucket was created by a split, we mark its thresholds as \emph{protected} and we do not allow it to be joined during the same epoch.
We prove that despite all these constraints after a split there will always be at least one pair of adjacent buckets that can be joined.

\paragraph{Merging two sketches.}

The merge operation for our sketch is conceptually simple:
Given two SplineSketches on input,
we first create a combined set of thresholds by taking the union of both sketches’ thresholds,
which removes duplicates. %
We then assign counters to the new buckets based on the sum of their estimated ranks in both original sketches,
for which we use precomputed interpolations over the input sketches.
Taking the union of thresholds typically results in a sketch with more buckets than desired,
so we reduce the number of buckets by joining adjacent ones with the lowest heuristic error, until we reach the target bucket count.
To maintain worst-case guarantees, if the two sketches being merged differ significantly in size, the threshold protection from the larger sketch is preserved during merging; however, when their size is similar, we basically reset the epoch, together with the threshold protection.

\paragraph{Resizing the sketch.}
Using joining or splitting the buckets allows us to easily change the number $k$ of buckets.
Namely, when we want to make the sketch smaller, we just join an appropriate number of adjacent buckets without any splits, 
whereas for increasing the sketch size, we perform splits without joins.
The selection of buckets to split or join is done similarly as in maintaining the buckets described above.
This demonstrates the flexibility of our approach, and can be practically useful, e.g., when additional memory is allocated during input processing and the resulting sketch is later compressed to a much smaller size for storage, yielding overall better accuracy than processing directly with the smaller size.

\paragraph{Paper outline.}
\Cref{sec:related} outlines the most relevant work on quantile sketches and provides a comparison with our approach.
The full algorithmic description of SplineSketch appears in \Cref{sec:SplineSketchDescription},
together with its theoretical properties (the proofs are in Appendices~\ref{app:monotoneCubicPolyMiddleBounds} and~\ref{sec:analysis}).
The implementation details appear in \Cref{sec:implementation} and the experimental evaluation in \Cref{sec:experiments}.
We conclude with discussing the pros and cons of our approach and the open problems in \Cref{sec:conclusions}.

\section{Related Work}\label{sec:related}

Quantile summaries are intensively investigated since the 90s, with the pioneering work of Munro and Paterson from 1978~\cite{MunroP80} who showed that exact quantile selection in the streaming setting, i.e., one pass with sublinear space, is impossible. 
In the comparison-based model, Greenwald and Khanna~\cite{greenwald2001space}
designed a deterministic algorithm, called the GK summary, that stores $O(\eps^{-1}\cdot \log \eps n)$ stream items and guarantees $\pm\, \eps\cdot n$ uniform error;
this bound is optimal among deterministic comparison-based quantile summaries~\cite{CormodeV20}.
However, the GK summary is not known to be mergeable while retaining its space bounds~\cite{AgarwalCHPWY13},
despite the fact that the intricate analysis from~\cite{greenwald2001space} was recently simplified and generalized to weighted updates~\cite{AssadiJPS23-GK_weighted,GribelyukSWY24_GK-KG_sketch}.
Randomization allows the removal of the logarithmic dependency on the stream length $n$. 
After a sequence of improvements~\cite{manku1999random,luo16_quantiles_experimental,felber2015randomized}, Karnin, Lang, and Liberty~\cite{KarninLL16} developed an optimal randomized algorithm, denoted KLL, that stores just $O(\eps^{-1})$ items (with a constant probability of a too large error).
Beyond the comparison-based model, 
$q$-digest~\cite{shrivastava2004medians} 
uses an implicit binary tree over the items' universe $\mathcal{U}$ and stores a subset of the tree nodes with associated counters.
It requires space of $O(\eps^{-1}\cdot \log |\mathcal{U}|)$ and a foreknowledge of $\mathcal{U}$.
Very recently, Gupta, Singhal, and Wu developed a compressed variant of $q$-digest that fits into $O(\eps^{-1})$ memory words, by packing more tree nodes into one memory word~\cite{GuptaSW24optimaQS-non-comp}.
However, unlike SplineSketch, GK summary, KLL, and $q$-digest do not offer practical accuracy on real-world data that goes beyond their worst-case behavior,
as demonstrated in \Cref{sec:experiments} (we only include a variant of GK and KLL as $q$-digest was already shown to be outperformed by GK in~\cite{luo16_quantiles_experimental}).

Another line of work focused on the stronger relative (multiplicative) error guarantee, which 
requires that the error for an item of rank $r$ is at most $\pm\, \eps\cdot r$.
For deterministic comparison-based algorithms,
there is a gap of $O(\log \eps n)$ between the merge-and-prune algorithm using space $O(\eps^{-1}\cdot \log^3 \eps n)$~\cite{zhangwang}
and the lower bound of $\Omega(\eps^{-1}\cdot \log^2 \eps n)$~\cite{CormodeV20}.
Efficient randomized sketches with relative error have only appeared
recently, namely, ReqSketch with space $O(\eps^{-1}\cdot \log^{1.5} \eps n)$~\cite{CormodeKLTV23} %
and its ``elastic`` version achieving space close the information-theoretic lower bound of
$\Omega(\eps^{-1}\cdot \log \eps n)$~\cite{GribelyukSWY25elasticCompactors}; however, mergeability of the ``elastic'' version is open,
while ReqSketch is analyzed in the most general mergeability setting.
There is also an extension of $q$-digest to the relative error
that uses space $O(\eps^{-1}\cdot \log \eps n\cdot \log |\mathcal{U}|)$~\cite{cormode2006space};
it is not known to be optimal.
Compared to SplineSketch, the main benefit of relative error is better accuracy for the tails which, however, comes at the cost
of higher space complexity, both in theory (additional dependency on $\log n$) and practice (see e.g.~\cite{CormodeMRV21}).
Thus, relative-error sketches are incomparable to SplineSketch and other uniform-error algorithms.
Finally, we note that stronger error guarantees, such as q-error (consider e.g. in~\cite{moerkotte2014exploiting}),
cannot be achieved in the streaming or mergeability settings with sketches of sublinear size, due to standard space lower bounds from the one-way communication complexity of indexing.

For dynamic streams (i.e., with deletions), one can only achieve the uniform error guarantee using sublinear space,
and the best quantile summary supporting deletions is Dyadic CountSketch~\cite{luo16_quantiles_experimental}.
The support for deletions comes at an additional cost of a polylogarithmic dependency on the universe size $|\mathcal{U}|$,
while insertion-only sketches, including KLL or SplineSketch, are much more space-efficient.

In contrast to rank error, DDSketch~\cite{MassonRL19} and UDDSketch~\cite{EpicocoMCPM20}
provide the relative \emph{value} error (i.e., an item close to the desired quantile in the item space)
based on maintaining a suitable exponential histogram.
While this usually captures the distribution tails well, the rank error can be arbitrarily large as these sketches in fact fail to capture most of the distribution in sufficient detail, leading to overall worse accuracy than KLL, $t$-digest, or SplineSketch.
	Furthermore, the value error is not invariant to natural data transformations such as translations.

There exist more heuristic approaches that,
similarly as SplineSketch, work only for numerical data,
but do not aim for any worst-case rank guarantees (unlike SplineSketch).
Specifically, $t$-digest~\cite{dunning19-t-digest,dunning21} performs an online one-dimensional $k$-means clustering
by averaging data items into a given number of centroids, and interpolates linearly between the centroids.
A downside of this approach is that many of the items summarized by a centroid may be smaller or larger than an adjacent centroid mean, implying under- or over-estimation.
That is, the centroids are only ordered by their means,
while the represented items may be far from ordered.
In fact, one can impose high overlaps of represented items, leading to almost arbitrarily large error on adversarial instances~\cite{CormodeMRV21}.

A different application of interpolation techniques in the context of quantile estimation was presented in~\cite{SchieferCINSW23-KLLinterpolation}.
In this work the linear interpolation were combined with the KLL sketch to improve its accuracy in some settings.

Finally, MomentSketch~\cite{GanDTSB18,MitchellFH21} is possibly the most compact summary as it consists of $k$ moments and log-moments of data items,
for a small $k$ such as $k=15$. 
Upon a query, the sketch constructs a distribution with the same moments and log-moments, using the maximum entropy principle
and Chebyshev polynomials, which is a costly operation.
The consequence is that MomentSketch only works well for certain smooth distributions but, as our experiments also demonstrate,
suffers a large error in many real-world cases,
including the uniform distribution.

The aforementioned quantile sketches form a basis for per-key quantile estimation (a simultaneous estimation of quantiles for multiple substreams in the main stream), as investigated in a recent line of work~\cite{dong.etal:2024:2024ieee40thint.conf.dataeng.icdea,guo.etal:2023:proc.29thacmsigkddconf.knowl.discov.datamin.,he.etal:2023:2023ieee39thint.conf.dataeng.icdea, shahout.etal:2023:proc.acmmanag.data}.

	Additionally, we remark that the idea of approximating the data distribution using splines has been widely used in computer science before, e.g., for query optimization~\cite{neumann.michel:2008:shar.datainf.knowl.} and indexing~\cite{kipf.etal:2020:} in database systems.

\section{Description of SplineSketch}\label{sec:SplineSketchDescription}

We provide a description of our algorithm, omitting implementation details that are explained in Section~\ref{sec:implementation}.
In fact, there are many possible implementations of the sketch that still satisfy the worst-case guarantee.
Our sketch works for any numerical input, i.e., items are integers or floating-point numbers.
We first present the algorithm assuming there are no items of high frequency, namely appearing more than $n/k$ times in the input
after inserting any number $n$ of items into SplineSketch of size $k$.
We show how to remove this assumption in Section~\ref{sec:theory}. %

\paragraph{Buckets.}
Fix an integer parameter $k \ge 6$ (a technical assumption required in the analysis).
The sketch primarily consists of $k$ distinct thresholds $\tau_1 < \tau_2 < \cdots < \tau_{k}$ and $k$ counters $b_1, \ldots, b_k$, where $b_i$ is our estimate of the number of items that lie in the interval $(\tau_{i-1}, \tau_i]$; we define $\tau_{0} = -\infty$.
We call these intervals \emph{buckets} and refer to a bucket by the index of its right threshold, so $(\tau_{i-1}, \tau_i]$ is called "bucket~$i$" or the $i$-th bucket.
We also frequently refer to $b_i$ as the \emph{size of bucket} $i$ and we define the \emph{length of bucket} $i$ as $\ell_i := \tau_i - \tau_{i-1}$.

Throughout the execution of the algorithm, we ensure that no bucket is empty
and that $\tau_1$ and $\tau_{k}$ are always set to be the minimum and the maximum item in the stream, respectively.
The latter makes the first bucket special as its counter only stores the frequency of $\tau_1$.

We also maintain a buffer of size $O(k)$ that we use for faster processing of incoming items, i.e.,
low average update time,
and one auxiliary bit-array of length $k$ for ``protected thresholds'' (Section~\ref{sec:consolidate}).
After processing the whole input, the buffer is merged into the buckets and the auxiliary bit-array can be discarded.
We ensure that all bucket counters plus the number of items in the buffer sum up to $n$.

\paragraph{Insertion and initial buckets.} When a new element arrives, we first put it in the buffer. 
When the buffer gets full, we execute the consolidate method (Section~\ref{sec:consolidate})
that may change bucket thresholds and merges the buffer into buckets. 
The only exception is the first time when the buffer gets full.
In that case, we simply sort it, pick the first and then every $\approx n / k$-th item, selecting $k$ items as the initial thresholds, and set counters $b_i$ to exactly the number of items in the corresponding buckets. 
Since no item has frequency greater than $n/k$, these selected items are distinct and every bucket is nonempty.
For the pseudocode of the bucket initialization, see Algorithm~\ref{alg:initBuckets}.

\begin{algorithm}[t] 
	\SetKwInOut{Input}{Input}
	\SetKwInOut{Output}{Output}
	\caption{Buckets Initialization}
	\label{alg:initBuckets}
	\Input{Buffer of size $n$, parameter $k$ (number of buckets)}
	\Output{Initialized $k$ buckets with thresholds $\tau_1, \dots, \tau_k$ and counters $b_1, \dots, b_k$}
	Sort all $n$ items in the buffer in ascending order\;
	For $i = 0, \dots, k -1$, pick the item at index $\min\{\lceil i\cdot (n-1) / (k-1) \rceil, n-1\}$ from the sorted buffer to form $k$ thresholds $\tau_1, \dots, \tau_k$\;
	For each bucket $(\tau_{i-1}, \tau_i]$, set the counter $b_i$ to the exact number of items in the buffer that fall into this interval\;
	Return the initialized buckets with thresholds $\tau_1, \dots, \tau_k$ and counters $b_1, \dots, b_k$\;
\end{algorithm}

\paragraph{Rank query.}
For a rank query at a threshold $\tau_i$, we return the sum of the counters up to the $i$-th, i.e., $\sum_{j=1}^i b_j$.
For $x$ strictly inside the $i$-th bucket, i.e. $\tau_{i-1} < x < \tau_i$, we use an interpolation method to estimate the rank of $x$; note that this works no matter whether $x$ appeared on input or not.
The purpose of using a suitable interpolation is to get much better error in practice,
while not affecting the worst-case bounds (see Section~\ref{sec:theory}).

Specifically, we use a monotone variant of the Piecewise Cubic Hermite Interpolating Polynomial (PCHIP)~\cite{fritsch.butland:1984:siamj.sci.andstat.comput., fritsch1980monotone}, which is an interpolation method that constructs a continuous function by fitting cubic polynomials between data points. 
Unlike standard cubic spline interpolation, PCHIP adjusts the first derivatives at each data point to preserve the shape and monotonicity of the original data.
For an introduction in greater detail, including the formulas for the interpolation, see \cite{Wong2020splines}.
Since the ranks of the input items are non-decreasing, preservation of monotonicity makes PCHIP suitable for our application.

We remark that computing this interpolation requires only the knowledge of $\tau_{i-2}, \cdots, \tau_{i+1}$ together with corresponding prefix sums of bucket counters.
Therefore, provided that we maintain the prefix sums, the computation of the interpolation takes $O(1)$ time.
The interpolation does not extrapolate, i.e., for $x < \tau_1$, it returns 0, and for $x > \tau_k$, it returns $\sum_{j=1}^k b_j$.
Since one has to find the right bucket by binary search, one query takes time $O(\log k)$  provided that the prefix sums
are precomputed.

In principle, our approach can be used with any interpolation method (ideally one that preserves monotonicity).
One popular alternative would be to use kernel smoothers.
However, this particular method would introduce additional error on the thresholds, which would be harder to control.
We leave the exploration of alternative interpolation methods for future work.
For an extensive treatment of different interpolation methods, see \cite{fan.gijbels:1996:}.

\subsection{Consolidating Buckets}\label{sec:consolidate}

The main procedure of the sketch is \consolidate. %
This method merges the buffer into the current buckets, by counting the new number of items in each bucket, and updates the counters accordingly.
It also changes the buckets based on the new values of the counters, possibly removing some of the thresholds and creating new ones.
Finally, the buffer gets emptied.
The crux of the consolidate method is how to modify the thresholds, which we describe in the remainder of this subsection.
For the pseudocode of \consolidate, see \Cref{alg:consolidate}.

\begin{figure}[t]
	\centering
	\includegraphics[width=0.5\linewidth]{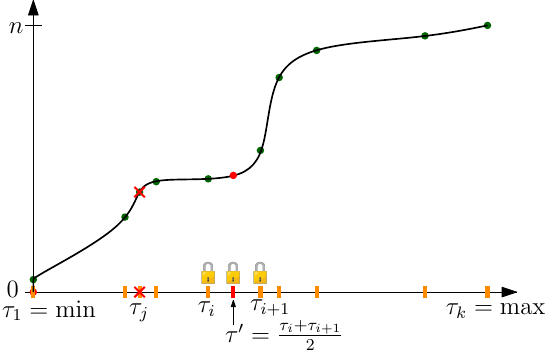}
	\caption{Illustration of joining a pair of buckets and splitting one bucket (depicted similarly as in Figure~\ref{fig:sketch}).
		The join removes threshold $\tau_j$ and merges (sums up) the two counters of the joined buckets.
		The split takes the $i$-th bucket $(\tau_i, \tau_{i+1}]$ and inserts a new threshold $\tau'_{i+1} = (\tau_i + \tau_{i+1})/2$ in the middle,
		setting the counters of the two resulting buckets $(\tau_i, \tau'_{i+1}]$ and $(\tau'_{i+1}, \tau_{i+1}]$ using the interpolation (solid black curve).
		The thresholds $\tau_i, \tau'_{i+1}$, and $\tau_{i+1}$ get protected from joining, as depicted by the locks,
		and this protection is removed at the end of the epoch.\vspace{-0.3cm}}
	\label{fig:splitJoin}
\end{figure}

\paragraph{Joining and splitting buckets.}
We have two basic operations that change bucket thresholds, join and split (Figure~\ref{fig:splitJoin}). 
A join takes two neighboring buckets $(\tau_{i-1}, \tau_i]$ and $(\tau_{i}, \tau_{i+1}]$ for $i > 1$ and turns them into
a single bucket $(\tau_{i-1}, \tau_{i+1}]$ with its counter set to $b_i + b_{i+1}$.
Note that after joining, buckets are (implicitly) renumbered.
That is, we actually set new thresholds $\tau'$ and counters $b'$ as $\tau'_i = \tau_{i+1}$, $b'_i = b_i + b_{i+1}$, and for any $j > i$, $\tau'_j = \tau_{j+1}$ and $b'_j = b_{j+1}$.
The buckets up to $\tau_{i-1}$ do not change.

When splitting bucket $(\tau_{i-1}, \tau_i]$ for $i > 1$, we replace it by two buckets $(\tau_{i-1}, \frac{\tau_{i-1}+\tau_i}{2}]$ and $(\frac{\tau_{i-1}+\tau_i}{2}, \tau_{i}]$. The bucket counters are set as follows: We use the query method
to get estimated ranks $\hat{r}$ at $\tau_{i-1}, \frac{\tau_{i-1}+\tau_i}{2},\tau_i$;
for more precise estimates, we use the query method on the buckets and buffer in their state just before performing \consolidate. We then set the counters for the two new buckets to $\hat{r}(\frac{\tau_{i-1}+\tau_i}{2}) - \hat{r}(\tau_{i-1})$ and $\hat{r}(\tau_i) - \hat{r}(\frac{\tau_{i-1}+\tau_i}{2})$, respectively.
Other buckets are again implicitly renumbered.

Observe that every split increases and every join decreases the number of buckets by one.
Thus, after every split we join some other pair of buckets, which ensures there are always $k$ buckets
(unless a change of $k$ is desired).
The crux of the algorithm lies in selecting which buckets to split and join, which we describe below.

\paragraph{Protected thresholds and epochs.}
After splitting a bucket, the estimated rank, and therefore the estimation error, at its right threshold stays the same.
However, the error at the newly created threshold may be bigger, in the worst case by the size of the original bucket counter.
If we repeatedly join and split the newly created buckets with their neighbors, we might let the error increase too much.
With the small overhead of $k$ bits when processing the input, we show how to avoid too frequent splits and joins at a single location and prevent the associated error accumulation.
To this end, we introduce the notion of \emph{protected thresholds}.

Specifically, we divide the input into \emph{epochs}. The first epoch has length $\Theta(k)$ and each successive epoch starting at time $t$ has length $0.25 \cdot t$ (we use in the analysis that the constant factor is at most $0.25$). When we split a bucket $(\tau_{i-1}, \tau_i]$, we mark all three thresholds $\tau_{i-1}, \tau'_i, \tau_i$ as protected, where $\tau'_i = (\tau_{i-1} + \tau_i)/2$ is the new threshold.
We store a bit vector of length $k$, with the $i$-th bit set to 1 when the $i$-th threshold is protected.
When a new epoch starts, we reset all thresholds to the default unprotected state
by zeroing the bit vector.
As specified later in Definition~\ref{def:joinableBuckets}, we never join buckets that have a protected threshold between them.

Note that we may still split a single bucket multiple times during one epoch,
or even during one call of \consolidate, namely, if the bucket receives many new items. Similarly, a single bucket may be joined multiple times during one consolidation.
However, a combination of splitting and then joining is not allowed -- a bucket which was created by a split cannot be joined with another bucket in the same epoch.

\paragraph{Handling new minima and maxima.}

If the buffer contains a new minimum $\tau'_1$ (an item smaller than the previous minimum $\tau_1$), we add a new bucket $(-\infty, \tau'_1]$ before all of the existing
buckets, set its counter to the frequency of $\tau'_1$ in the buffer. We also increase the counter
of the bucket $(\tau'_1, \tau_1]$ by the number of buffer items in this interval.
A new maximum $\tau'_{k}$ is handled similarly, by adding a new bucket $(\tau_{k}, \tau'_{k}]$ and setting
its counter to the number of buffer items in this interval.
For each of these at most two new buckets, we perform a join of another pair, selected according to the rules described below.
Note that these new buckets may be later split during \consolidate in which they were created, apart from the new first bucket $(-\infty, \tau'_1]$.

\paragraph{Maintaining bounds on bucket counters and small heuristic error.}

We maintain two invariants, the first of which is used to prove worst-case bounds while the second ensures high accuracy in practice.

In order to obtain a low worst-case uniform error, we need to ensure that no bucket has a too large counter.
To this end, we require that all bucket counters are bounded by $O(n/k)$, specifically
\begin{equation}\label{eqn:bucketBound}
	\forall i = 1, \dots, k:\, b_i \leq C_b \cdot n/k	
\end{equation}
for some constant $C_b > 1$
that is determined in the analysis in Appendix~\ref{sec:analysis}.

Secondly, we define a \emph{heuristic error} for each bucket, which is intended to capture ``hard'' parts of the input,
where the input distribution changes a lot, namely, where the second derivative of the empirical CDF is large.
For such buckets, the PCHIP interpolation may have larger error than for buckets with a low second derivative.
Since the empirical PDF (i.e., first derivative of the CDF) in the bucket $i$ is $b_i / \ell_i$, we use the following heuristic error for bucket $i > 1$, which approximates the second derivative of the empirical CDF, normalized by length squared:
\begin{equation}\label{eqn:heurError-der2}
    \max\left(
        \frac{\left|\frac{b_i}{\ell_i} - \frac{b_{i-1}}{\ell_{i-1}}\right|}{\ell_i + \ell_{i-1}}
        \,,\,
        \frac{\left|\frac{b_{i+1}}{\ell_{i+1}} - \frac{b_i}{\ell_i}\right|}{\ell_{i+1} + \ell_i}
    \right)
    \cdot \ell_i^2.
\end{equation}
For the edge cases $i=2$ and $i=k$, we make the following assumptions: $b_{k+1} = 0$, 
$\ell_{1} = \ell_2$, and $\ell_{k+1} = \ell_{k}$; that is, we add a ``virtual'' bucket $k+1$
with zero counter and a length equal to the length of the adjacent (real) bucket, and we set the length of bucket 1 to the length of bucket 2.

We now describe how we use the bucket bound~\eqref{eqn:bucketBound} and heuristic error~\eqref{eqn:heurError-der2} when performing the splits and joins. 
Each time we split a bucket $i$, we choose a pair of adjacent buckets that do not  contain buckets created from splitting $i$, and we join the chosen pair; thus, the total number of buckets stays the same.
The pair is selected so that the bucket resulting from joining has the lowest heuristic error of all \emph{joinable pairs}, defined as follows.
In a nutshell, the bucket resulting from the join should not violate the bucket size bound~\eqref{eqn:bucketBound} or be relatively close to it.
Moreover, none of the two buckets should be subject to a split recently,
which we implement using the protected thresholds.
\begin{definition}\label{def:joinableBuckets}
    We say that a pair $i, i+1$ of adjacent buckets is \emph{joinable} if both of the following conditions hold:
    \begin{enumerate}[label=(\roman*)]
        \item the counter of the bucket resulting from the join satisfies $b_i + b_{i+1} \le 0.75\cdot C_b \cdot n/k$, and
        \item threshold $\tau_i$ is not \emph{protected}.
    \end{enumerate}
\end{definition}

We perform splits as follows. First, we take all buckets that exceed the size bound~\eqref{eqn:bucketBound}. We split them one by one, joining a joinable pair according to \Cref{def:joinableBuckets} for each split; this joinable pair is selected so that it has the lowest possible heuristic error~\eqref{eqn:heurError-der2} after temporarily performing the join (i.e., the heuristic error of a pair is computed by joining this pair without changing other buckets, then using~\eqref{eqn:heurError-der2} for the resulting bucket, and reverting the join).
In Lemma~\ref{lem:existsBucketToMerge}, we show that at least one pair of adjacent buckets is joinable.
We also join at most two further pairs of joinable buckets if the buffer contains a new minimum or maximum, each creating a new bucket.
Note that a bucket may need to be split repeatedly during one call of \consolidate, namely, if it exceeds the bucket bound~\eqref{eqn:bucketBound} substantially.

After all these necessary splits and joins, we consider a bucket $i$ with a high heuristic error~\eqref{eqn:heurError-der2}, and we split it if its error is more than $\gamma$ times greater than the lowest heuristic error of a bucket resulting from joining a joinable pair, for a parameter $\gamma > 1$ (say, $\gamma = 1.5$). Furthermore, we do not perform a split of a bucket $i$ due to the heuristic error if there are less than $k/3 + 2$ pairs of adjacent buckets that can be joined, which
is needed to prove that there is a joinable pair of buckets; see Lemma~\ref{lem:existsBucketToMerge}.
Finally, if the counter of bucket $i$ is substantially below the bound, say, $b_i \le 2n/k$,
we also do not split bucket $i$, avoiding splitting almost empty buckets. %
We summarize the conditions for splitting:

\begin{definition}\label{def:splittableBuckets}
	We say that a bucket $i$ is \emph{splittable} if one of the following conditions holds:
	\begin{enumerate}[label=(\roman*)]
		\item bucket $i$ violates the bucket bound~\eqref{eqn:bucketBound}, i.e., $b_i > C_b\cdot n/k$, or %
		\item there are at least $k/3 + 2$ joinable pairs of buckets,
		$b_i > 2n/k$, and the heuristic error~\eqref{eqn:heurError-der2} of bucket $i$ is $\gamma > 1$ times larger than the heuristic error of a joinable pair of buckets (after joining) not overlapping bucket $i$.
	\end{enumerate}
\end{definition}

We repeat finding a splittable bucket $i$ and a joinable pair of buckets not overlapping bucket $i$,
performing the split and join as long as there are splittable buckets.
Since every split increases the number of protected thresholds, we know that this process terminates after at most $O(k)$ iterations.
The next lemma shows the algorithm is well-defined\footnote{The proof requires that if the buffer size $\Theta(k)$ is larger than $(C_b/2)\cdot n/k$, then the buffer is processed in batches of size $(C_b/2)\cdot n/k$.
See Appendix~\ref{sec:analysis} for more details.}.
	\begin{lemma}[See \Cref{lem:existsBucketToMergeStreaming,lem:existsBucketToMerge} in Appendix~\ref{sec:analysis} for proofs in the streaming and mergeability settings, respectively.]\label{lem:existsBucketToMergeMainBody}
	After initialization of buckets by Algorithm~\ref{alg:initBuckets},
	SplineSketch always contains a joinable pair of adjacent buckets, according to Definition~\ref{def:joinableBuckets}.
	\end{lemma}

This completes the description of \consolidate; see Algorithm~\ref{alg:consolidate} for a pseudocode.

\begin{algorithm}[h]
	\SetKwInOut{Input}{Input}
	\SetKwInOut{Output}{Output}
	\caption{\consolidate}
	\label{alg:consolidate}
	\Input{$k$ buckets and buffer of new items
	} %
	\Output{Updated buckets $\tau_1, \dots, \tau_k$ with counters $b_1, \dots, b_k$, protection bit vector, and epoch end $T$}

	\If{$n > T$}{ 
		Reset protection bit vector, and set $T := 1.25 \cdot T$ \tcp*{set the next epoch end $T$}
	}
	Compute the PCHIP interpolation over the buckets\;
	\If{new minimum or maximum found in the buffer}{
		Add a new boundary bucket for minimum/maximum\;
		For each new bucket, join a pair of joinable buckets $j,j+1$ that has the lowest heuristic error after temporarily performing the join\;
	}
	Incorporate buffer items into each bucket's counter (keep the buffer for executing splits)\;
	\While{exists a splittable bucket $(\tau_{i-1}, \tau_i]$, first processing those exceeding the bound \eqref{eqn:bucketBound}}{
	   Split bucket $i$ into $(\tau_{i-1}, \tau']$ and $(\tau', \tau_i]$ for $\tau' = (\tau_i + \tau_{i-1})/2$, using the interpolation and buffer to determine the values of new counters\;
	   Mark thresholds $\tau_{i-1}, \tau', \tau_i$ as protected\;
	   Join a pair of joinable buckets $j,j+1$ that does not overlap with $(\tau_{i-1}, \tau']$ and $(\tau', \tau_i]$
	   and has the lowest heuristic error after temporarily performing the join\;
	}
	Empty the buffer\;
	\end{algorithm}

\paragraph{Resizing the sketch during \consolidate.}
Bucket consolidation allows to change the number $k$ of buckets,
making the sketch larger or smaller as desired.
Changing from $k$ to $k'$ buckets is straightforward: 
If we increase the sketch size, i.e., $k' > k$, we perform $k' - k$ splits
without any join, first dealing with buckets exceeding bound~\eqref{eqn:bucketBound} and then in the order of the heuristic error.
On the other hand, if $k' < k$, after performing the necessary splits due to exceeding the bound (and the corresponding joins),
we execute $k - k'$ joins
in the order of their heuristic errors after joining, on bucket pairs satisfying Definition~\ref{def:joinableBuckets} and without performing any splits.
These separate splits or joins may be followed by performing \consolidate in a normal way, where we always join after splitting. %
If $k'$ is substantially different from $k$, say $|k' - k| > 0.25\cdot k$, we also %
reset the protection bit vector (otherwise, it may happen that there are no joinable pairs of buckets after resizing).

\subsection{Merging Two SplineSketches}\label{sec:mergeOp}

We describe how to merge two SplineSketches, $S_1$ and $S_2$, with $k_1$ and $k_2$ buckets, respectively, into a single sketch $S$ for the combined datasets.
In the analysis, we assume that  $k_1 = k_2$ but the merge operation works even for $k_1 \neq k_2$.

First, we create a buffer that contains all the items from the buffers of the two sketches we are merging.
Second, we take the union of all the thresholds $\tau$ from both sketches. After removing duplicates, this gives up to $k_1+k_2$ thresholds. For each of these thresholds $\tau$, we query for its rank in both input sketches and add these two ranks to get the rank of $\tau$ in the merged sketch $S$. We set the counter of each bucket to be the difference between the ranks of the endpoints.
Third, the new sketch inherits the bit vector for threshold protection from the source sketch $S_i$ that summarizes more items; that is, any threshold taken from $S_i$ remains protected if it was protected in $S_i$. However, if the total input size of the new sketch is larger than the epoch end for $S_i$, we set all thresholds to the unprotected state, starting a new epoch (in particular, if $S_1$ and $S_2$ summarize a similar number of items, no threshold is protected after merging).

Finally, we run \consolidate that reduces the number of buckets to $k_i$ and also merges the buffer to buckets; if $k_1\neq k_2$, then the index $i$ is chosen so that the source sketch $S_i$ summarizes more items, similarly as for threshold protection (alternatively, one can also choose the larger of $k_1$ and $k_2$). The reduction to $k_i$ buckets is simply done by repeatedly taking a joinable pair of adjacent buckets whose joining would result in a bucket with the lowest heuristic error and joining it without any split until we have $k_i$ buckets.
Then we continue with \consolidate\ similarly as in the streaming setting,
also merging the buffer into buckets if it is full or overflows.

\subsection{Theoretical Properties}\label{sec:theory}

	We now state the error bounds, focusing on the streaming setting first.
The analysis assumes that no element appears more than $O(t/k)$ times after processing any $t$ items.
In order to lift this assumption, we can run a suitable heavy-hitter streaming algorithm, such as the Misra-Gries (MG) sketch~\cite{misra1982finding} of size $O(k)$ in parallel.
We describe how to effectively combine MF with SplineSketch in Section~\ref{sec:highFrequencyItems-Impl}.

		\begin{restatable}[Error bound in the streaming setting.]{theorem}{mainthmStreaming}\label{thm:mainStreaming}
			Suppose that SplineSketch, with $k$ buckets and with heavy-hitter filtering by MG, processes an input of $n$ numerical items in a single pass.
			Then the resulting SplineSketch has worst-case rank error of $O(\log (\alpha) \cdot n / k)$.
		\end{restatable}

	Next, we provide an analysis of the merge operation when the two sketches merged summarize
		a similar number of items, up to a constant factor; we call this the \emph{balanced mergeability setting}.
		Here, we get an additional $O(\log (n/k)\cdot n/k)$ error due to taking the union of thresholds when merging; it holds that $\log (n/k) \in O(\log \alpha)$ provided that $\alpha > (n/k)^c$ for a constant $c > 0$, which is typically the case but may fail when the number of distinct items is very small.
		For simplicity, we assume that during \consolidate\ after a merge operation, there are $O(1)$ splits of a single bucket due to the heuristic error;
		more precisely, for a bucket $(\tau_i, \tau_{i+1}]$ resulting from the union of thresholds of the two source sketches and the joins to get back to $k$ thresholds,
		we bound the number of such splits of buckets inside the interval $(\tau_i, \tau_{i+1}]$.

		\begin{restatable}[Error bound in the balanced mergeability setting.]{theorem}{mainthmBalancedMergeability}\label{thm:balancedMergeability}
			Suppose that we build a SplineSketch, with heavy-hitter filtering by MG, using a sequence of pairwise merge operations executed on $n$ data items such that every SplineSketch resulting from merging consists of $k$ buckets
			and in every merge operation, the two sketches merged summarize the same number of items up to a factor of $c \in \Theta(1)$.
			Furthermore, suppose that any bucket is split $O(1)$ times due to the heuristic error during \consolidate\ performed after each merge operation.
			Then the resulting SplineSketch has a worst-case rank error of $O((\log (\alpha) + \log (n/k)) \cdot n / k)$.
		\end{restatable}

We note that state-of-the-art quantile sketches achieve better worst-case accuracy-space trade-offs.
Namely, KLL using space of $O(k)$ memory words has worst-case rank error $O(n/k)$ with constant probability~\cite{KarninLL16}.
The deterministic Greenwald-Khanna (GK) sketch guarantees rank error $O(n/k)$ but with worst-case space $O(k\cdot \log (n/k))$~\cite{greenwald2001space}; the $ \log (n/k)$ factor is in general incomparable to the $\log (\alpha)$ in the error for SplineSketch, albeit the latter is typically larger.
However, both KLL and GK achieve significantly worse error than SplineSketch in practice, as we show in Section~\ref{sec:experiments}.
As for mergeability, KLL is fully mergeable while retaining the guarantees~\cite{KarninLL16} but GK is known to be only one-way mergeable~\cite{AgarwalCHPWY13}.
Other quantile sketches have worse rank error or do not provide a bounded rank error; see Section~\ref{sec:related}.

Our analysis is very flexible with respect to the individual components and parameters of the sketch,
and allows for certain changes in constant factors (such as the value of $C_b$ in Eqn.~\ref{eqn:bucketBound}).
It is also completely independent of how the heuristic error is computed and, in particular, of the value of $\gamma$ in Definition~\ref{def:splittableBuckets}; an exception is the assumption in Theorem~\ref{thm:balancedMergeability}.
The epochs can also be adjusted; however, our analysis requires that their lengths increase geometrically (the increase factor affects other constant factors).
As for the interpolation, it can be changed to linear or another monotone interpolation that is exact at the thresholds.
The only other property that the interpolation method must satisfy is that, in the middle of a non-empty bucket, the interpolation is bounded away from the minimum and maximum estimated values inside the bucket;
that is, a bucket counter is divided somewhat evenly during a split (a property similar to Lemma~\ref{lem:cubicPolyMonoMiddleBound} in Appendix~\ref{app:monotoneCubicPolyMiddleBounds}).

Regarding the effect of resizing the sketch from $k'$ buckets to $k''$ buckets on the guarantees, 
the error bound of $O(\log (\alpha) \cdot n / k)$ holds with $k=\min\{k', k''\}$ after performing the resizing,
provided that no resizing was done before.
To the best of our knowledge, resizability is not addressed in the literature on quantile summaries yet.

As for the time complexities, the time for one rank or quantile query is $O(\log k)$, as derived at the beginning of Section~\ref{sec:SplineSketchDescription}.
The amortized update time can be made $O(\log k)$ by a heap-based implementation of \consolidate, as explained in Section~\ref{sec:consolidateImpl}.
Finally, the merge time is $O(k\cdot \log k)$, again for the heap-based \consolidate.
These theoretical time complexities are asymptotically the same as for KLL, GK, and $t$-digest,
albeit particular implementations of the sketches may not satisfy those theoretical bounds.

\section{Implementation Details}
\label{sec:implementation}

Here, we explain the details of our SplineSketch implementation (in Python and Java) and elaborate
on the update time. %
First, we naturally use the \texttt{double} data type (64-bit floating point numbers) for thresholds. This, however, 
implies some limitations, such as worse accuracy for some inputs with many significant digits (e.g., the SOSD benchmark~\cite{sosd,MarcusKRSMK0K20sosd}), which is due to the fact that many different items get rounded to the 
same \texttt{double} value. This can be addressed by changing the underlying data type for thresholds, e.g., to 128-bit floating-point numbers or to unsigned 64-bit integers.
Note that the internal PCHIP interpolation would still need to use floating-point numbers.

\paragraph{Quantile queries.} %
So far, we focused on answering rank queries.
In order to implement a quantile query, i.e., returning an estimated $r$-th smallest number for a given $r$,
we invert the interpolation, and query the inverse function at $r$.
This inverse exists since buckets are non-empty and the PCHIP interpolation is thus strictly increasing from $\tau_1$ to $\tau_k$.
The inverse function can be evaluated by first finding the bucket in which the inverse of $r$ lies (which can be done using binary search, assuming we maintain the prefix sums of the bucket counts). Then, inside the corresponding bucket, the value can be found numerically, for example, using Newton's method or even just bisection.
Note that our algorithm, similarly as $q$-digest~\cite{shrivastava2004medians}, may return a number that did not appear in the stream, unlike
comparison-based methods such as the KLL sketch~\cite{KarninLL16} or the Greenwald-Khanna sketch~\cite{greenwald2001space}.

\subsection{Implementations of Bucket Consolidation}
\label{sec:consolidateImpl}

We describe two particular implementations of \consolidate. The first achieves a low amortized update time,
while the second is easier to implement and works well in practice.

\vspace{-0.2cm}

\paragraph{Heap-based \consolidate.} 
To obtain an asymptotically fast \consolidate, one can use a combination of doubly-linked lists for maintaining buckets and two heaps, sorted by the bucket size and heuristic error, respectively.
This approach is formalized in the following lemma. %
Since the buffer size is $\Theta(k)$, we get $O(\log k)$ amortized complexity of an update with heap-based \consolidate\ as an immediate corollary.

\begin{restatable}{lemma}{heapBasedConsolidate}
	\label{lem:heapBasedConsolidate}
	\consolidate can be implemented in $O(k \log k)$ time.
\end{restatable}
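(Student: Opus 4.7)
The plan is to maintain the bucket sequence in a doubly-linked list and to augment it by three heap structures, so that every decision the method needs to make is accessible in $O(\log k)$ time. The three heaps are: a max-heap $H_{\text{size}}$ keyed by the counters $b_i$, used to detect violations of~\eqref{eqn:bucketBound}; a max-heap $H_{\text{err}}$ keyed by the heuristic error~\eqref{eqn:heurError-der2} of each bucket, used to find candidates for heuristic-driven splits; and a min-heap $H_{\text{pair}}$ keyed, for every pair of adjacent buckets, by the heuristic error of the bucket that would result from joining that pair. Every list node carries pointers to its entries in all three heaps, so when we split or join we can locate and update the $O(1)$ affected entries in $O(\log k)$ time each. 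The protected-threshold bitmap is consulted in $O(1)$ when testing joinability, and any $H_{\text{pair}}$ entry that has become non-joinable since insertion is discarded lazily upon reaching the top.

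First, I would absorb the buffer into the bucket counters: sort the $O(k)$ buffered items in $O(k \log k)$ time, and then sweep them in parallel with the threshold list to update each $b_i$ in $O(k)$ additional time; new minima or maxima encountered in this pass are handled exactly as described in Sec.~\ref{sec:consolidate}. Next, I would compute the heuristic error of every bucket and every adjacent pair in $O(k)$ and build the three heaps in $O(k)$ using heapify. The main loop then runs in two phases. In the first phase, while the top of $H_{\text{size}}$ exceeds $C_b \cdot n/k$, we extract that bucket and the lowest-error joinable pair from $H_{\text{pair}}$ and perform the split-join, updating the affected heap entries in $O(\log k)$. In the second phase, while the top of $H_{\text{err}}$ exceeds $\gamma$ times the top of $H_{\text{pair}}$ and the preconditions of Definition~\ref{def:splittableBuckets} hold, we perform the analogous split-join driven by heuristic error. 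Because each individual operation touches only a constant number of adjacent buckets and their pair heuristic errors, each costs $O(\log k)$.

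The main obstacle is bounding the total number of split-join operations performed in one call of \emph{consolidate} by $O(k)$; combined with the $O(\log k)$ cost per operation and the $O(k \log k)$ initialization, this yields the claimed bound. For size-driven splits the argument is clean: if bucket $i$ initially has $b_i > C_b \cdot n/k$, it spawns at most $\lceil b_i k /(C_b n)\rceil - 1$ splits before all of its descendants respect~\eqref{eqn:bucketBound}, and summing over $i$ together with $\sum_i b_i = n$ gives $O(k)$; condition~(i) of Definition~\ref{def:joinableBuckets} guarantees that no join recreates a violator, and the buffer only adds $O(k)$ items in total. Bounding the number of heuristic-driven splits is the subtle part: the plan is to use the strict gain factor $\gamma > 1$ together with the floor of $k/3 + 2$ joinable pairs required by condition~(ii) of Definition~\ref{def:splittableBuckets} to exhibit a potential function (e.g., the top value of $H_{\text{err}}$ or the gap between the tops of $H_{\text{err}}$ and $H_{\text{pair}}$) that decreases by a constant factor in every round, producing an $O(k)$ cap on this phase as well. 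Handling the interaction with the protected-threshold bitmap so that a split-join cannot locally oscillate within one epoch is where I expect the argument to require the most care.
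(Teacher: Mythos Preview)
Your data-structure scaffolding (doubly-linked list with cross-pointers into heaps, lazy deletion of stale entries, $O(k\log k)$ buffer sort) matches the paper's approach almost exactly; the paper uses one fewer heap, replacing your $H_{\text{size}}$ by a simple queue of violators, but that is cosmetic.

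The real divergence is in how you bound the total number of split-join rounds, and here you have a genuine gap. You split the count into size-driven and heuristic-driven rounds and handle them separately. Your size-driven bound is essentially right (each leaf of the split tree has counter at least $\beta\cdot C_b n/k$, so a bucket of size $b_i$ yields $O(b_i k/n)$ splits, summing to $O(k)$). But for heuristic-driven splits you only sketch a hoped-for potential argument based on the top of $H_{\text{err}}$ shrinking by a factor related to $\gamma$. That potential does not obviously behave: splitting bucket $i$ alters the heuristic errors of buckets $i-1,\dots,i+2$, the accompanying join alters another neighbourhood, and there is no monotone quantity on~\eqref{eqn:heurError-der2} that clearly drops by a fixed factor each round. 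You even flag the protected-threshold bitmap as the place ``where the argument requires the most care'', treating it as an obstacle.

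In fact the protected thresholds are the entire argument, and they make both phases trivial at once. Every split creates a brand-new midpoint threshold and marks it protected; protected thresholds cannot be removed by a join within the same consolidate call (Definition~\ref{def:joinableBuckets}(ii)); and the sketch never holds more than $k$ thresholds because each split is paired with a join. Hence the number of splits in one consolidation is at most $k$, full stop, with no need to distinguish the two triggers or to reason about heuristic errors at all. Replace your two-case analysis with this one-line counting argument and the proof is complete.
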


\begin{proof}
	We maintain buckets in a doubly linked list $L$. 
	We use a queue $Q_S$ to keep track of buckets that need to be split due to violating bucket bound.
	We also use two heaps: a max-heap $H_S$ to keep track of buckets that might get split due to the heuristic error~\eqref{eqn:heurError-der2}, ordered by their heuristic errors, and a min-heap $H_J$ for joinable pairs of buckets, ordered by their heuristic errors after temporarily joining the pair.
	In both heaps, every node contains a pointer to the corresponding bucket (or pair of buckets) in $L$.
	Conversely, every bucket in $L$ contains a (constant size) list of pointers to nodes of $H_S$ and $H_J$ containing that bucket. 
	Both heaps can be initialized in $O(k)$ time.
	
	After every split and join performed by \consolidate, we update $Q_S, H_S$, and $H_J$ according to the new bucket sizes.
	Since every split and join affects only the sizes and heuristic errors of a constant number of adjacent buckets, the update can be done in $O(1)$ time using the pointers that we maintain.
	The modifications of the keys and deletions of arbitrary nodes in the heaps can be done in $O(\log k)$ time using standard decrease/increase key operation.
	
	Every split introduces a constant number of new protected thresholds and every join is accompanied with a corresponding split.
	Since the number of possible protected thresholds is bounded by $k$, the total number of splits and joins performed by \consolidate is also $O(k)$.

	Finally, note that the buffer items can be merged into buckets in $O(k\log k)$ time.
	Indeed, we sort the buffer and start adding items from the smallest to the largest,
	maintaining the index of the bucket containing the current buffer item;
	if the processing of the buffer is done in batches (as described in Appendix~\ref{sec:analysis}), we just perform the necessary splits after each batch.

	Therefore, the total time complexity of the heap-based implementation of \consolidate is $O(k \log k)$.
\end{proof}

\paragraph{Iteration-based \consolidate.}
In our implementation, we use a different \consolidate that does not require linked lists or heaps.
We process the buckets in iterations such that, in each iteration, any bucket is split or joined at most once (but not both).
Since a particular bucket may need to be split or joined multiple times during \consolidate, we continue the process for multiple iterations,
until the last iteration does not change any threshold.
Throughout this process, we keep the buffer and the old buckets
for more accurate setting of counters for the new buckets.

In each iteration, we first collect the buckets with size exceeding the bound~\eqref{eqn:bucketBound} into a set $E$.
We also keep a list of splittable buckets (Definition~\ref{def:splittableBuckets}), sorted in non-increasing order by their heuristic error,
and a list of joinable pairs of buckets (Definition~\ref{def:joinableBuckets}), sorted in non-decreasing order by the resulting heuristic error after temporarily joining them.
We select $|E|$ joinable pairs according to their order that do not overlap buckets in $E$ or a previously selected joinable pairs.
Next, from the remaining buckets (not yet split or joined), we consider the splittable bucket with the largest heuristic error $h_{\mathrm{split}}$ (if any)
and the joinable pair with the lowest heuristic error $h_{\mathrm{join}}$ and if $h_{\mathrm{split}} > \gamma\cdot h_{\mathrm{join}}$ for $\gamma > 1$,
then we perform the split and join of these buckets;
in our implementation, we use $\gamma = 1.5$.
After collecting non-overlapping sets of buckets to split and join, we perform these splits and joins
in one pass over the buckets, using the buffer and interpolation over the buckets in their state before executing \consolidate
to estimate ranks in the middle of split buckets;
the usage of former buckets is important to decrease the error accumulation.
We refer to the prototypes referenced in Section~\ref{sec:experiments} for details.

Finally,
we remark that the multiplicative constant $C_b$ in the bucket bound~\eqref{eqn:bucketBound} resulting from the analysis in~Appendix~\ref{sec:analysis} is too large for practical usage,
and we just use $C_b = 3$, while increasing $C_b$ during an epoch if needed and resetting it to $C_b = 3$ at each epoch end. That is, if there are no joinable pairs when a bucket must be split due to exceeding the bound, we double $C_b$ and reset the iteration.
We note that such increases of $C_b$ happen rarely in practice, and provably stop after several iterations (when exceeding the constant $C_b$ from the analysis).
Therefore, this trick preserves the theoretical properties.

\subsection{Handling Frequent Items}
\label{sec:highFrequencyItems-Impl}

The theoretical error guarantees of SplineSketch, outlined in Section~\ref{sec:theory}, require 
a method for filtering items with frequency $\Omega(n/k)$.
In Section~\ref{sec:expSkewed} we show that this filtering is also important in practice, namely we provide inputs that include frequent elements and that are hard for vanilla SplineSketch.
This issue can be resolved by using a heavy hitter sketch such as the Misra-Gries (MG) summary~\cite{misra1982finding}.
However, using the MG comes at a cost of increased update, merge, and query times,
and therefore, we provide two versions of SplineSketch, with and without MG.

\vspace{-0.2cm}

\paragraph{SplineSketch with MG}

The MG sketch stores up to $k-1$ items, each with a counter $c_x$. 
When a new element $x$ arrives,
if $x$ is already in the sketch, its counter is incremented.
Otherwise, if $x$ is not in the sketch, we add it with a counter of $1$.
If adding a new item causes the sketch to reach its capacity of $k$ items, all counters are decremented.
Any item with a counter that hits zero is then removed.
It can be shown that the counters in the sketch underestimate the true frequencies by at most $n/k$.
In particular, any element with a frequency greater than $n/k$ is present in the sketch~\cite{misra1982finding}.
The MG sketch is also fully mergeable~\cite{AgarwalCHPWY13}.

In our implementation, each item $x$ in the MG sketch has an additional counter, $\hat{c}_x$, which is never decreased.
It represents the number of times $x$ has been inserted since the beginning of the stream or its last removal from the MG sketch.
During \consolidate, the MG sketch processes a buffer of new items in two stages.
First, we filter out buffer items that are present in MG and increase their sketch counters by the buffer frequencies.
For the remaining distinct items in the buffer, we process them one by one, and add them to the MG sketch with both their counters ($c_x$ and $\hat{c}_x$) set to their buffer frequency.
If at any point the MG sketch becomes full, we decrease the $c_x$ counters of all items in the MG sketch by the smallest counter.
We then remove all items with $c_x = 0$ and add them back into the buffer with frequency $\hat{c}_x$.
Finally, the items that remain in the buffer are processed by standard \consolidate.

Last, after processing the whole input, the extra space used by the MG sketch is compressed by moving all the items $x$ with frequency $\hat{c}_x < n/(2\cdot k)$
from the MG sketch to the buckets (with frequency $\hat{c}_x$). 
Additionally, if $\ell$ is the number of remaining items with $\hat{c}_x \ge n/(2\cdot k)$,
we resize SplineSketch so that there are $\max\{k - \ell, k / 2, 6\}$ buckets.
Thus, if $\ell \le k/2$, there will be $k$ buckets and MG items in total.
Based on the standard analysis of the Misra-Gries algorithm~\cite{misra1982finding}, each input element is inserted into the SplineSketch at most $n/k$ times, and more strongly,
at any time $t$ (i.e., after processing $t$ items), the frequency of any items inserted into SplineSketch is at most $t/k$.
To process a rank query $y$ in the MG sketch, we sum the frequencies $\hat{c}_x$ for all items $x \leq y$.
Since the values $\hat{c}_x$ are counted exactly, this sum does not increase the error of our rank estimate.
We then add the result of the SplineSketch query for $y$, i.e., the $y$'s rank estimate based on values stored in the buckets and the buffer.

\vspace{-0.2cm}

\paragraph{SplineSketch without MG}
For applications in which item frequencies are relatively small, i.e., below $n/k$,
we also implement a faster and simpler version without heavy-hitter filtering.
Still, this version tries to apply a heuristic detection of buckets with high-frequency items.
Namely, we set a lower limit on the relative length of a bucket, depending on numerical precision $\delta$ of data.
That is, a bucket $(\tau_i, \tau_{i+1}]$ must have length $\tau_{i+1} - \tau_i \ge \delta'\cdot \max\{|\tau_i|, |\tau_{i+1}|, \zeta\}$,
where $\delta' > \delta$ is a parameter and
$\zeta >0$ is the smallest non-zero absolute value of an input item ($\zeta$ is intended to avoid too small buckets around 0).
In our prototype implementations with the \texttt{double} data type, we set $\delta' = 10^{-8}$.
We do not split a bucket that would violate this relative length bound, even if it does not satisfy the size bound~\eqref{eqn:bucketBound}; this way, we avoid too many splits of a bucket with a frequent item.

We also deal with repeated thresholds when creating initial buckets in Algorithm~\ref{alg:initBuckets},
by removing thresholds that would violate the bucket length bound.
This, however, results in a sketch with less than $k$ thresholds.
We thus replace the removed thresholds by new thresholds relatively close to the previous threshold so that the resulting buckets would have length close to the relative length bound and that we end up with $k$ thresholds.
This in fact heuristically guarantees that a frequent item gets its own bucket
if it is frequent from the beginning, i.e., when creating the initial thresholds.

\section{Experimental Evaluation}\label{sec:experiments}

We have implemented SplineSketch as a prototype in Python and Java and 
compared the Java version with state-of-the-art quantile sketches for the uniform rank error
with an available implementation,
namely, $t$-digest~\cite{dunning19-t-digest}, MomentSketch~\cite{GanDTSB18}, and KLL~\cite{KarninLL16}.
Our implementation of SplineSketch and the code to run the experiments or generate the synthetic datasets,
is available at supplementary repository (\url{https://github.com/PavelVesely/SplineSketch-experiments}).
We measure the average and maximum absolute rank errors for $10^5$ evenly spaced queries,
together with average update time (or merge time if applicable) and query times.
Experiments were performed on an AMD EPYC 7302 (3 GHz) server with 251 GB RAM and SSD disk,
and run in memory, with the datasets, queries, and sketches' output stored on disk.

\vspace{-0.2cm}

\paragraph{Quantile sketches.}
We have used the Java prototype of SplineSketch, in two versions: with and without the Misra-Gries sketch.
We have also evaluated $t$-digest~\cite{dunning19-t-digest} (v3.3, \url{https://github.com/tdunning/t-digest}), MomentSketch~\cite{GanDTSB18} (\url{https://github.com/stanford-futuredata/msketch}), and KLL~\cite{KarninLL16} (Java implementation by Apache DataSketches, v6.0.0, \url{https://github.com/apache/datasketches-java}), and GKAdaptive, a practically well-performing variant of the Greenwald-Khanna sketch~\cite{greenwald2001space},
	implemented in Java according to~\cite{luo16_quantiles_experimental}.
For $t$-digest, we use the default merging variant with $k_0$ scale function that aims at the uniform error~\cite{dunning19-t-digest}.
KLL was used in the variant with the \texttt{double} data type, i.e., an instance of \texttt{KllDoublesSketch}.
We note that the MomentSketch implementation only answers quantile queries, while
our experiments required rank estimates; we have simulated rank queries using binary search
that was done for all $m$ queries in parallel in order to decrease the number of
calls to the quantile query method from $\approx m \log_2 n$ (querying just one rank) to $\approx \log_2 n$ calls (querying $m$ ranks at once), to avoid repeated computation of the distribution fitting the moments and log-moments, which is a costly operation.
We have not included other quantile sketches because they have worse error than KLL or \textit{t}\mbox{-}digest, lack available implementations (e.g., algorithms from~\cite{SchieferCINSW23-KLLinterpolation,GribelyukSWY25elasticCompactors}), or aim for different error guarantees;
namely, ReqSketch~\cite{CormodeKLTV23} provides relative rank error and its space usage is much worse than KLL's, while
the algorithmic technique is similar.
DDSketch~\cite{MassonRL19} and UDDSketch~\cite{EpicocoMCPM20} aim for the relative \emph{value} error, which captures the distribution tails
but unlike the rank error, fails to provide a reasonable approximation for most of the distribution;
indeed, we have verified that the uniform rank error of DDSketch is 10-100 times worse than KLL and even more compared to SplineSketch;
see the supplementary repository for details.

The sketch size is measured in bytes when serialized on disk for storage without supporting
data structures such as buffers. Namely, for SplineSketch and $t$-digest we count 16 bytes per bucket/centroid,
for MomentSketch we measured $16k + 16$ bytes ($k$ moments and log-moments, minimum, and maximum), for KLL we count 8 bytes per stored item,
and for GKAdaptive, we count 24 bytes per a stored tuple of one item and its rank lower and upper bounds.
For SplineSketch with MG, we also account for the MG sketch size, after removing non-frequent items from the MG and reducing the number of buckets based on the number of heavy hitters as described in Section~\ref{sec:highFrequencyItems-Impl}.

\vspace{-0.2cm}

\paragraph{Datasets.}
We use three real-world datasets, the first two from UCI Machine Learning Repository~\cite{UCI_MLrepo} : the HEPMASS dataset~\cite{hepmass} ($n = 10\,500\,000$ items from the 2nd column),
the Power dataset~\cite{individual_household_electric_power_consumption_235} ($n = 2\,075\,259$ items from the 3rd column), and the Books dataset from SOSD~\cite{sosd,MarcusKRSMK0K20sosd}.
We have synthetically generated 7 datasets by drawing i.i.d.\ (independent and identically distributed) samples from
a range of distributions: uniform, normal, Pareto, Gumbel, lognormal, loguniform, and randomly signed loguniform.
We have also generated three datasets with a distribution change, all starting with $n/2$ items from a normal distribution
and then either changing the parameters of the normal distribution (two options for the parameter change) or adding samples from a set of 42 distinct items,
which all have high frequency.
Finally, we generated a sorted input with frequent items, each with random frequency between 1 and $n/50$.
All synthetic datasets for accuracy experiments consist of $n = 10^8$ items.

Rank queries are generated by equally spaced selection from sorted data and are executed in one batch whenever possible.
While all of the sketches allow any number to be queried, using only the data points for queries leads to the same results.

\begin{figure*}[t]
	\vspace{-0.2cm}
	\centering
	\includegraphics[width=0.9\textwidth]{legend.png}
	\subcaptionbox{Normal distribution\label{fig:iid-avgErr-subfig1normal}}{%
		\includegraphics[width=0.32\textwidth]{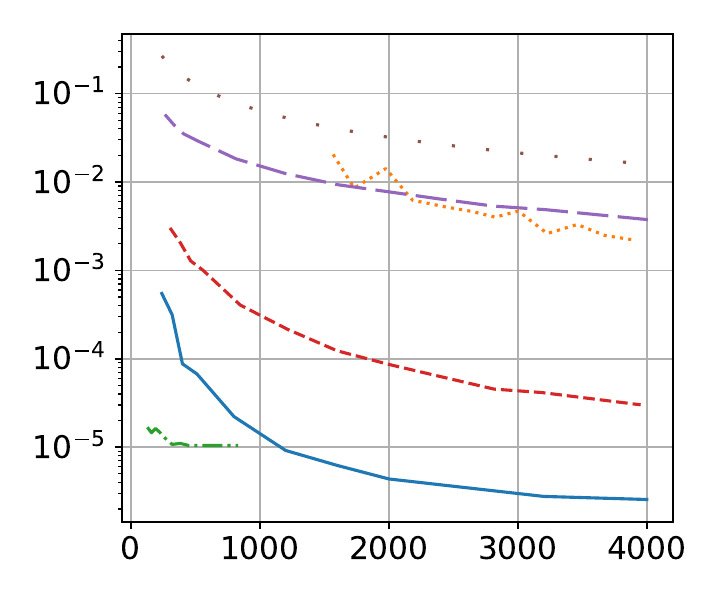}%
	}
	\subcaptionbox{Uniform distribution\label{fig:iid-avgErr-subfig2uniform}}{%
		\includegraphics[width=0.32\textwidth]{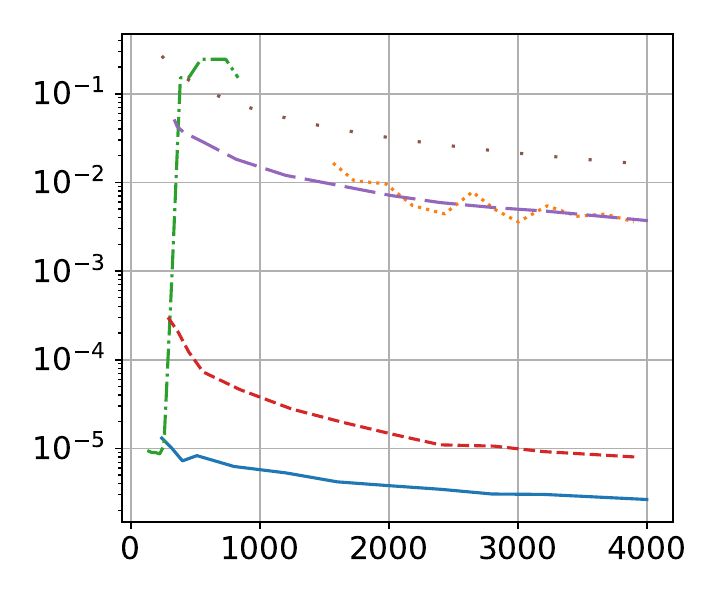}%
	}
	\subcaptionbox{Pareto distribution\label{fig:iid-avgErr-subfigPareto}}{%
		\includegraphics[width=0.32\textwidth]{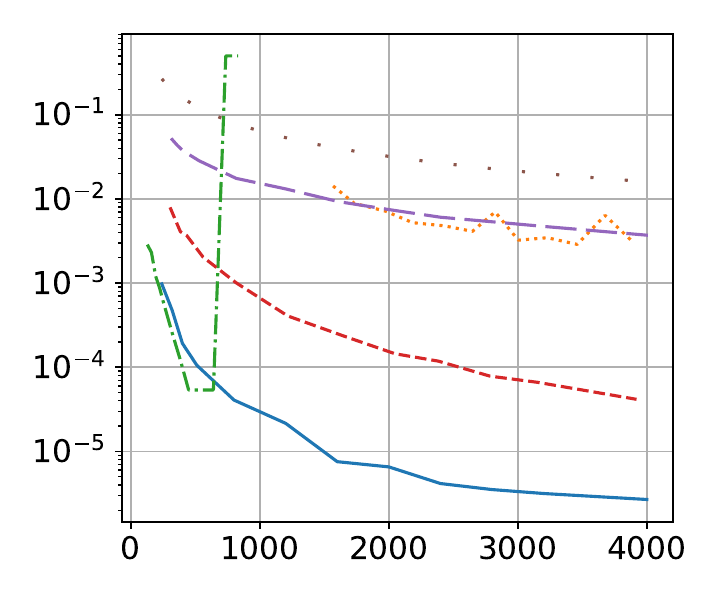}%
	}
	\subcaptionbox{Lognormal distribution\label{fig:iid-avgErr-subfigLognormal}}{%
		\includegraphics[width=0.32\textwidth]{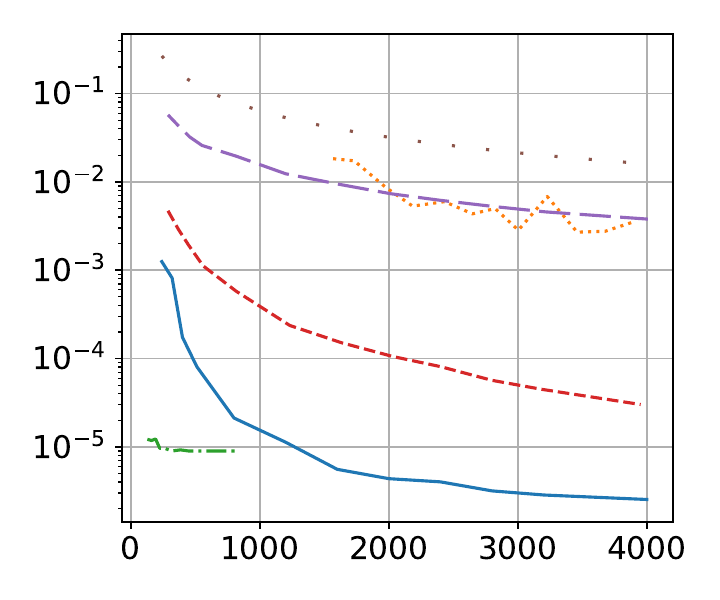}%
	}
	\subcaptionbox{Loguniform distribution\label{fig:iid-avgErr-subfigLoguniform}}{%
		\includegraphics[width=0.32\textwidth]{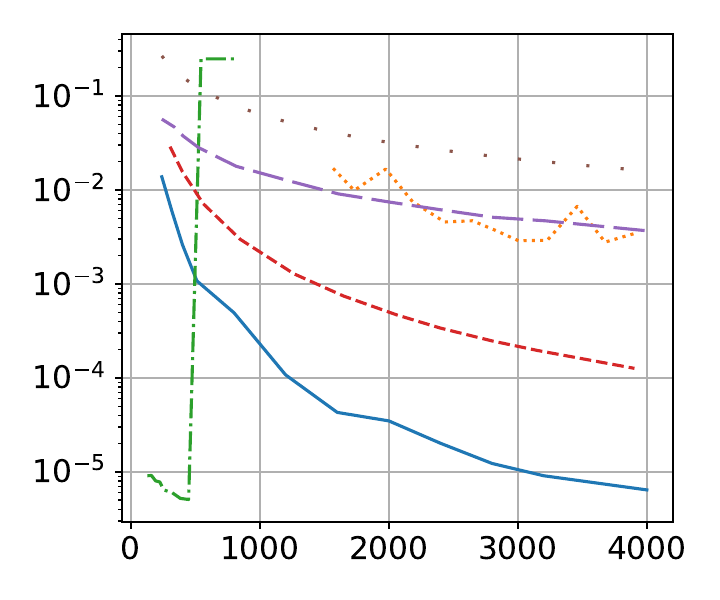}%
	}
		\subcaptionbox{Signed loguniform distribution\label{fig:iid-avgErr-subfigSignedLogunif}}{%
				\includegraphics[width=0.32\textwidth]{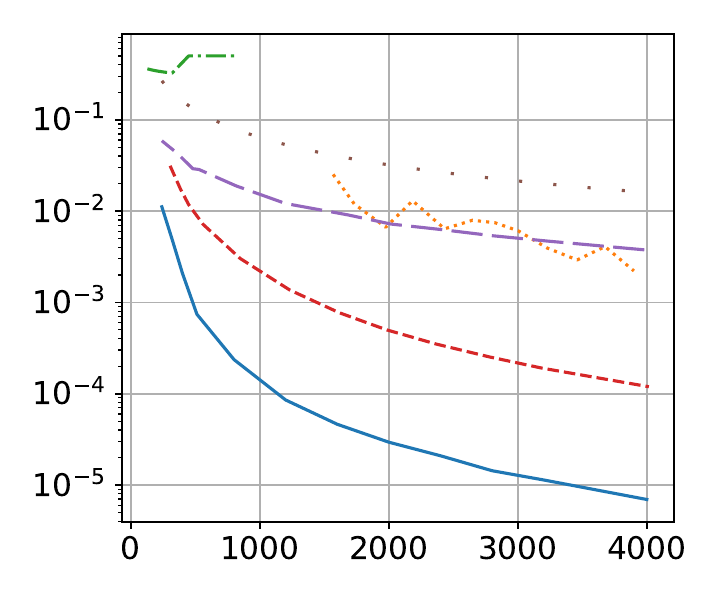}%
			}
	\subcaptionbox{Signed loguniform distribution with large exponents\label{fig:iid-avgErr-subfigSignedLogunifExtreme}}{%
		\includegraphics[width=0.32\textwidth]{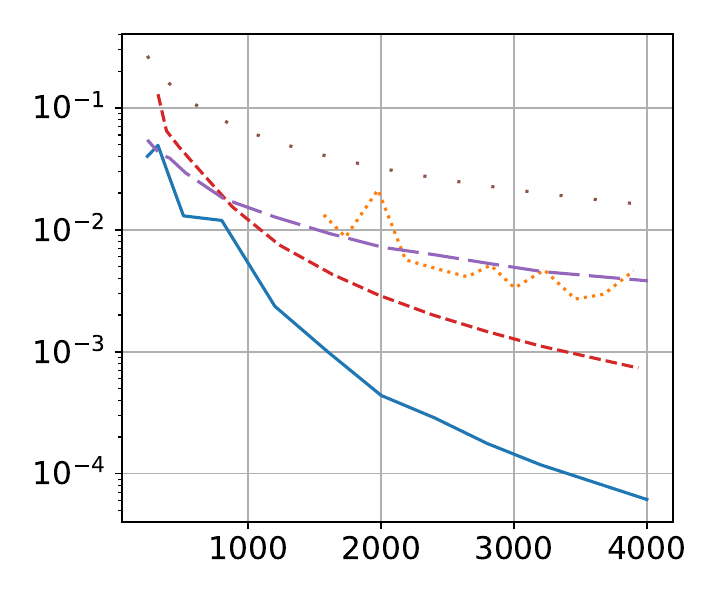}%
	}
	\subcaptionbox{Normal distrib.\ with a small parameter change after $n/2$ items\label{fig:iid-avgErr-subfigNormalWSmallChange}}{%
		\includegraphics[width=0.32\textwidth]{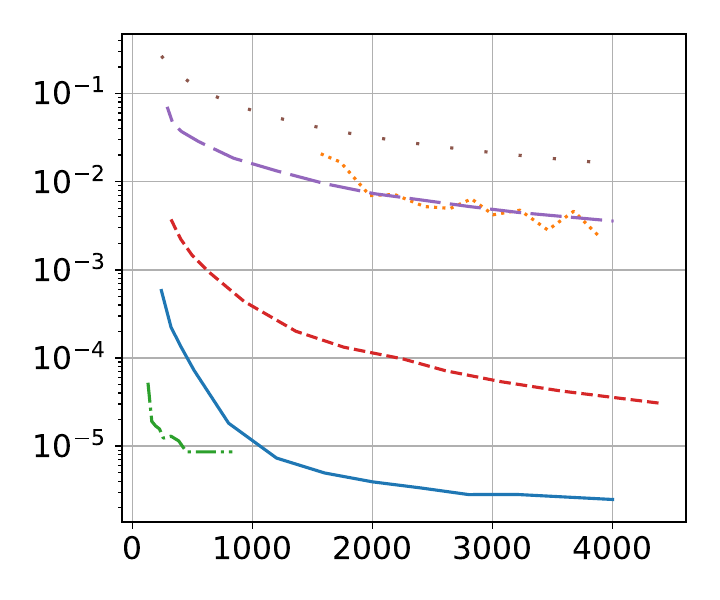}%
	}
	\subcaptionbox{Normal distribution with a large parameter change after $n/2$ items\label{fig:iid-avgErr-subfigNormalWLargeChange}}{%
		\includegraphics[width=0.32\textwidth]{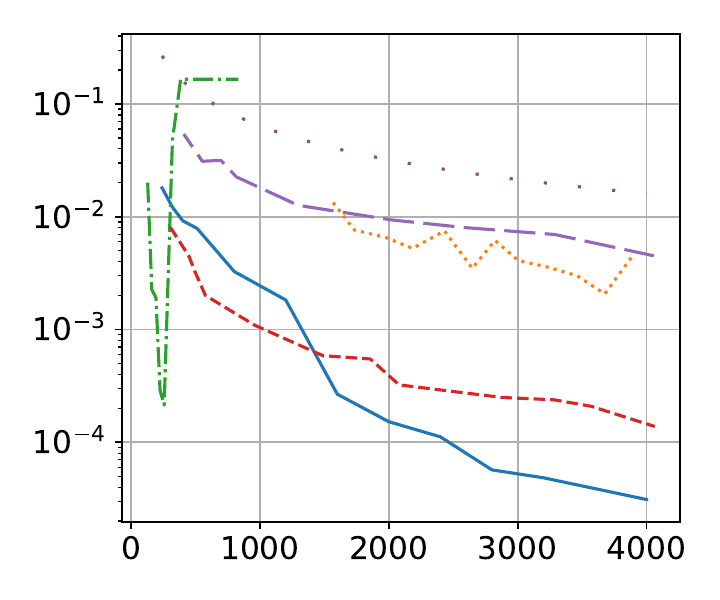}%
	}
	\vspace{-0.2cm} %
	\caption{Average rank error (log-scale) depending on the sketch size in bytes.
	}
	\label{fig:iid-avgErr}
	\vspace{-0.3cm}
\end{figure*}

\begin{figure*}[t]
	\vspace{-0.2cm}
	\centering
	\includegraphics[width=0.9\textwidth]{legend.png}
	\subcaptionbox{HEPMASS dataset~\cite{hepmass}\label{fig:real-maxErr-subfig1hepmass}}{%
		\includegraphics[width=0.32\textwidth]{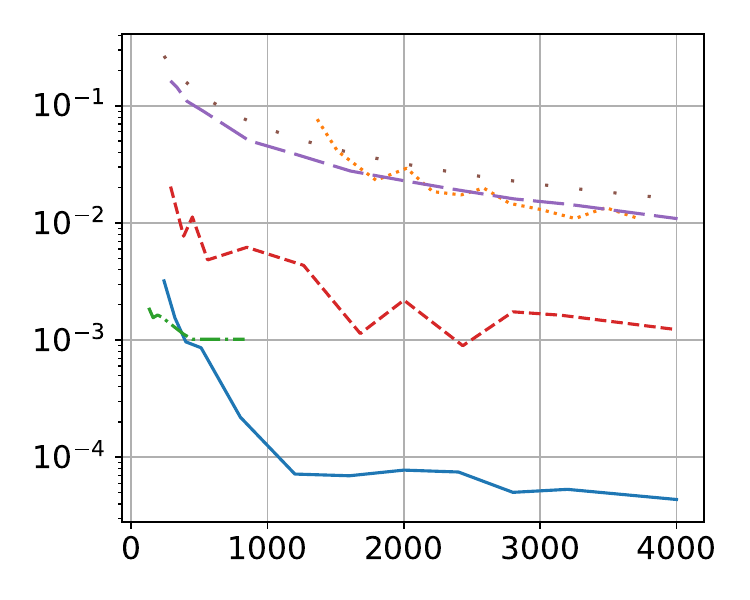}%
	}
	\subcaptionbox{Power dataset~\cite{individual_household_electric_power_consumption_235}\label{fig:real-maxErr-subfig2power}}{%
		\includegraphics[width=0.32\textwidth]{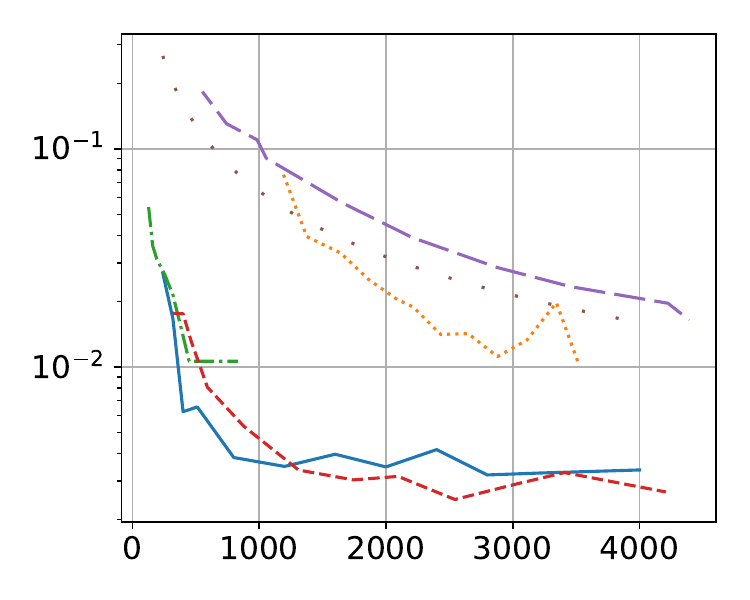}%
	}
		\subcaptionbox{Books dataset~\cite{sosd,MarcusKRSMK0K20sosd}\label{fig:real-maxErr-subfig2books}}{%
				\includegraphics[width=0.32\textwidth]{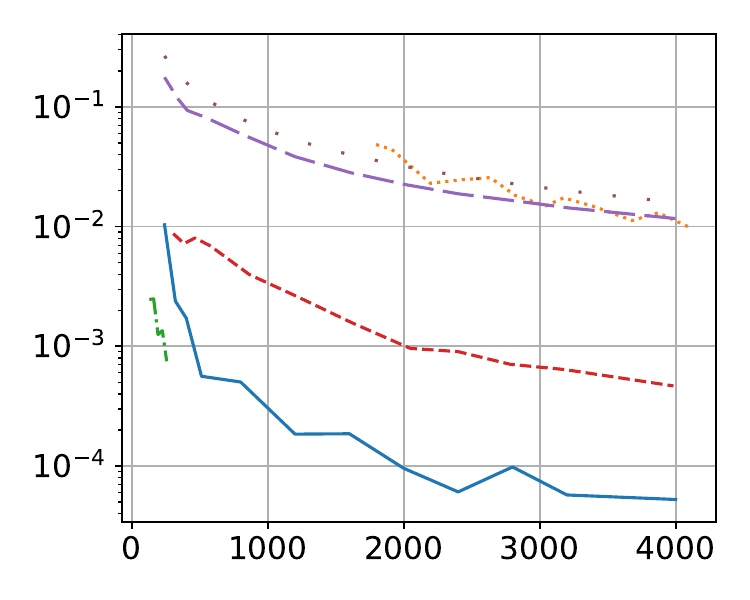}%
			}
	\subcaptionbox{Normal distribution\label{fig:iid-maxErr-subfig1normal}}{%
		\includegraphics[width=0.32\textwidth]{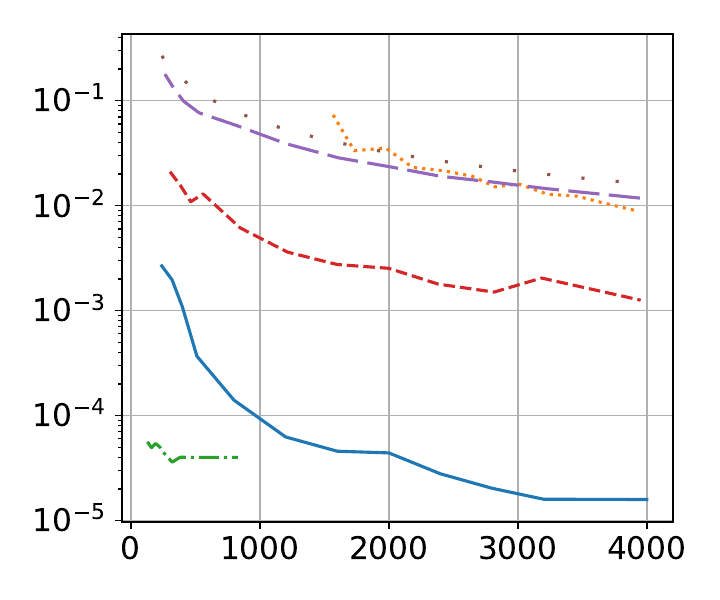}%
	}
		\subcaptionbox{Uniform distribution\label{fig:iid-maxErr-subfig2uniform}}{%
				\includegraphics[width=0.32\textwidth]{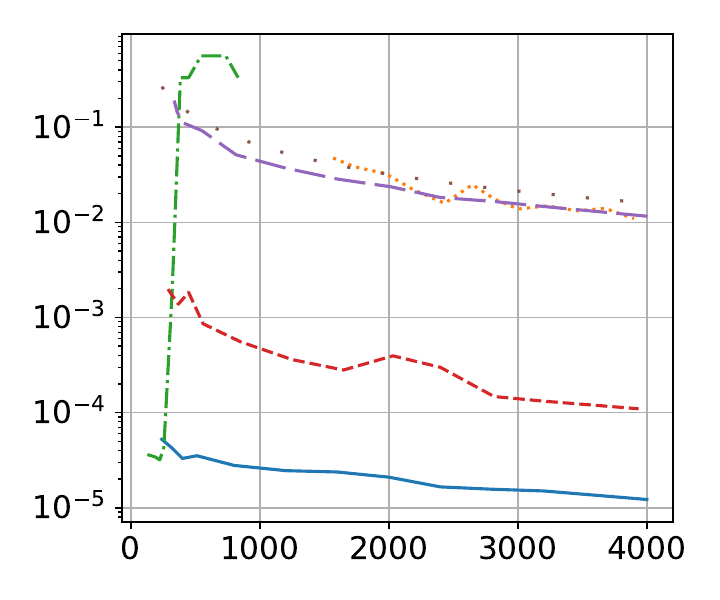}%
			}
	\subcaptionbox{Pareto distribution\label{fig:iid-maxErr-subfigPareto}}{%
		\includegraphics[width=0.32\textwidth]{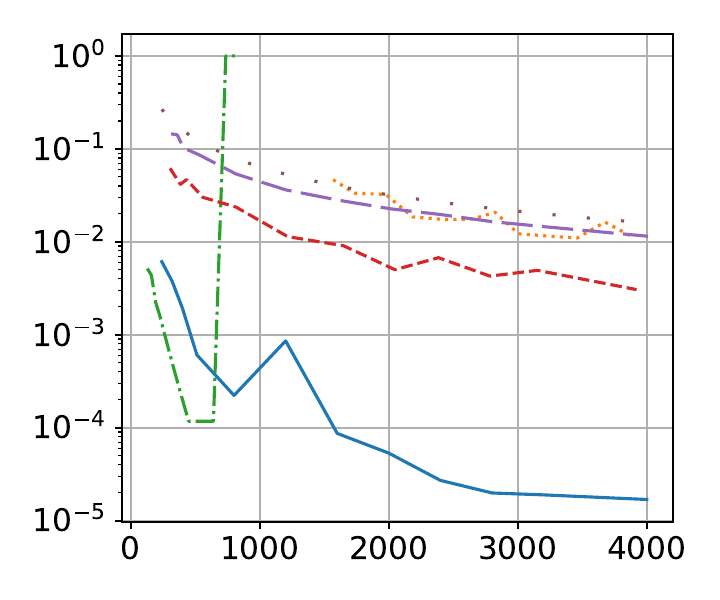}%
	}
	\subcaptionbox{Lognormal distribution\label{fig:iid-maxErr-subfigLognormal}}{%
		\includegraphics[width=0.32\textwidth]{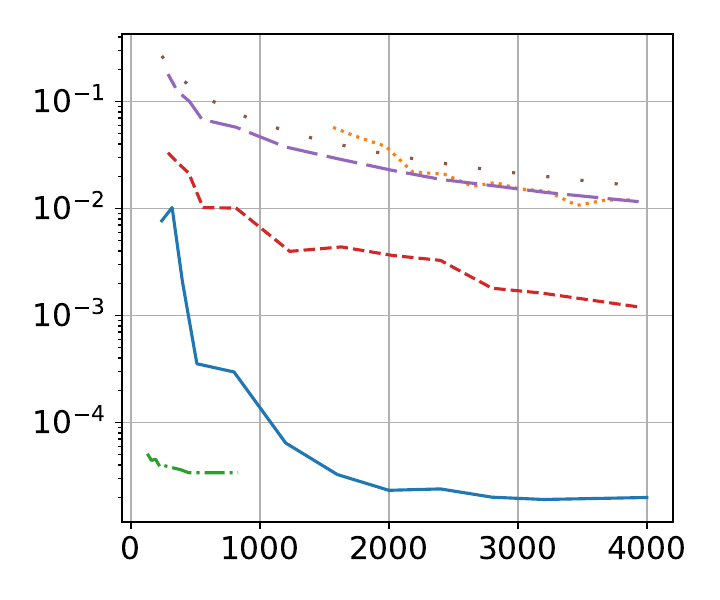}%
	}
		\subcaptionbox{Loguniform distribution\label{fig:iid-maxErr-subfigLoguniform}}{%
				\includegraphics[width=0.32\textwidth]{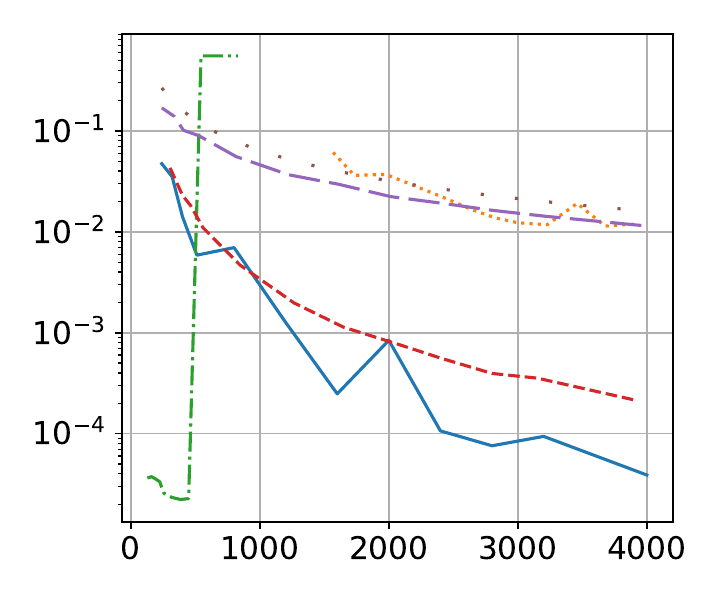}%
			}
		\subcaptionbox{Signed loguniform distribution\label{fig:iid-maxErr-subfigSignedLogunif}}{%
				\includegraphics[width=0.32\textwidth]{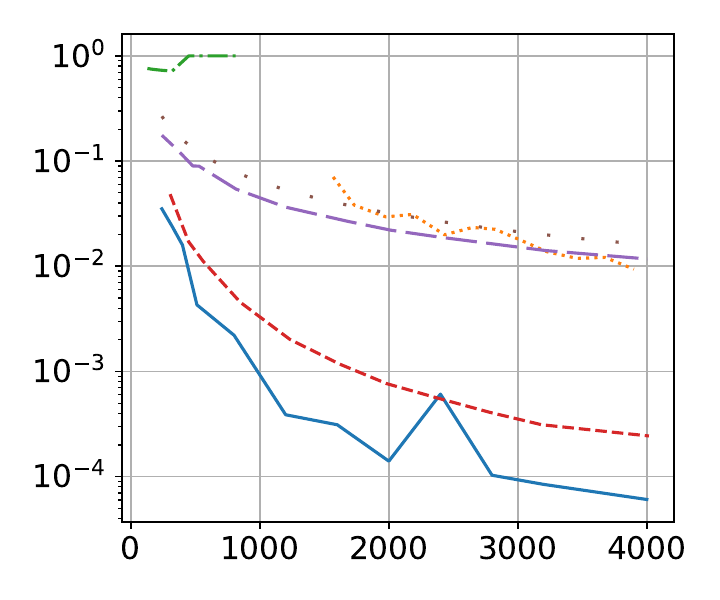}%
			}
	\subcaptionbox{Signed loguniform distribution with large exponents\label{fig:iid-maxErr-subfigSignedLogunifExtreme}}{%
		\includegraphics[width=0.32\textwidth]{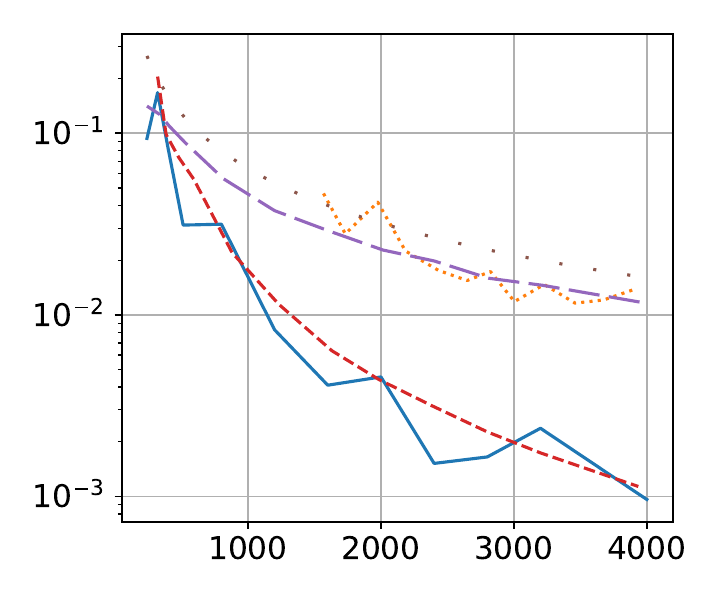}%
	}
	\subcaptionbox{Normal distribution with a small parameter change after $n/2$ items\label{fig:iid-maxErr-subfigNormalWSmallChange}}{%
		\includegraphics[width=0.32\textwidth]{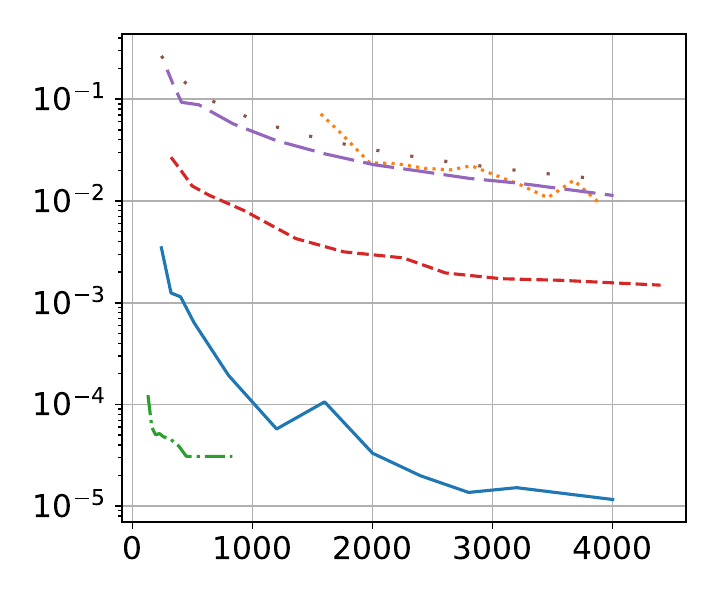}%
	}
	\subcaptionbox{Normal distribution with a large parameter change after $n/2$ items\label{fig:iid-maxErr-subfigNormalWLargeChange}}{%
		\includegraphics[width=0.32\textwidth]{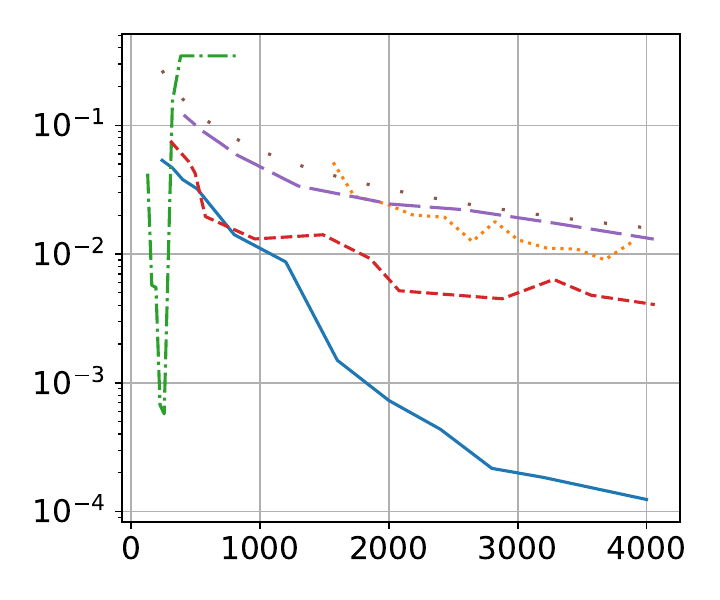}%
	}

	\caption{Maximum rank error (log-scale) depending on the sketch size in bytes, on a selection of datasets.
		The maximum is taken over $10^5$ evenly spaced queries.
		\vspace{-0.1cm}
	}
	\label{fig:iid-maxErr}
\end{figure*}

\clearpage

\subsection{Evaluation in the Streaming Setting}

\paragraph{Accuracy-space trade-off.}
We first present the results in the streaming setting, focusing on synthetic and real-world inputs without frequent items.
Figures~\ref{figTop:error} and~\ref{fig:iid-avgErr} show the average errors of the sketches depending on their size in bytes.
Figure~\ref{fig:iid-maxErr} shows the maximum error over all queries, again depending on the sketch size.
In all these plots, the error is normalized by the input size, i.e., it is the difference 
between the true rank and the estimated rank of a query item divided by the input size $n$, averaged or maximized over $10^5$ queries.
We also depict an upper bound on the theoretically optimal error bound of $O(1/k)$ in purple (loosely dotted), where $k$ is the sketch size; the hidden constant factor is based on the maximum error of KLL and GKAdaptive.

Overall, SplineSketch consistently provides the best accuracy if given sufficient size, namely $k\ge 100$,
even though in rare cases or for small sketch sizes, $t$-digest or MomentSketch are better.
In more detail,
compared to $t$-digest, SplineSketch typically achieves 2--20 times smaller error,
and in some cases, such as for normally distributed data, up to 100 times.
In a few cases, their error was similar, e.g., for the Power dataset or the maximum error for signed loguniform distribution with large exponents.
We note that the final $t$-digest size is somewhat unpredictable as
the final number of centroids does not match the compression parameter given to it, sometimes
by 66\%.
This demonstrates that $t$-digest does not use the space budget efficiently, unlike SplineSketch
which always has $k$ buckets.

MomentSketch is mainly intended for use with very small size, typically up to $k=15$ moments and log-moments,
when it often gives the best accuracy by 1--2 orders of magnitude.
It works the best on many smooth distributions (uniform, normal, loguniform, etc.).
However, its performance is significantly affected by non-smoothness as witnessed on the real-world datasets. Moreover, one cannot increase accuracy by increasing $k$ because of numeric issues.
Furthermore, MomentSketch query procedure failed with an exception on many datasets,
specifically on signed loguniform distribution with large exponents for any $k$ and 
on other datasets for large $k$.

Finally, the KLL sketch and GKAdaptive are worse than SplineSketch typically by 1--3 orders of magnitude 
as they use no interpolation (cf.~\cite{SchieferCINSW23-KLLinterpolation} for KLL with interpolation; however, no implementation is available).
Nevertheless, as KLL and GKAdaptive are comparison-based, their errors depend just on the data size and the arrival order,
not on their particular distribution, and their error decreases linearly with increasing sketch size.

We note that both versions of SplineSketch (without and with the MG sketch) provide the same accuracy on the inputs in Figures~\ref{fig:iid-avgErr} and~\ref{fig:iid-maxErr}, i.e., those without frequent items.

\paragraph{Update and query times.} %
In Figure~\ref{fig:updateTime}, we show a log-log plot with average times per update in $\mu$s
in dependence on $n$ samples from the normal distribution. The sketches' size is fixed to about 1.6 kB,
i.e., with $100$ buckets for SplineSketch and about 100 centroids for $t$-digest.
SplineSketch is evaluated without the MG sketch; see Section~\ref{sec:expSkewed} for the other version.
For MomentSketch, we use $k=15$ (i.e., size about 256 bytes) as such $k$ generally provides the best accuracy.
For KLL, we use $k=8$, i.e., the smallest meaningful size parameter.
For all sketches, the update time is decreasing fast from small data sizes
as initializing and processing the first part of the input takes relatively the most time;
namely, for SplineSketch and $t$-digest, it takes some time before the buckets/centroids stabilize.
For sufficiently long input streams, the average update time becomes constant.
As expected, MomentSketch and KLL achieve the best update times of about 0.03$\mu$s and 0.06$\mu$s, respectively.
SplineSketch achieved less than 0.1$\mu$s update time for $n\ge 10^8$,
while $t$-digest required about 0.12$\mu$.
GKAdaptive is by far the slowest for large stream lengths, with update time increasing with its size from 0.1 to 1$\mu$; see Figure~\ref{fig:updateTime2};
	however, this GKAdaptive implementation lacks running time optimization.
Figure~\ref{fig:updateTime2} shows the dependency of the update time on the sketch size.
Both SplineSketch and $t$-digest have mildly (sublinearly) increasing update time with the sketch size.
For MomentSketch, the update time is increasing linearly with its size,
while for KLL it remains stable.

\begin{figure*}[t]
	\centering
	\vspace{-0.2cm}
	\includegraphics[width=0.8\textwidth]{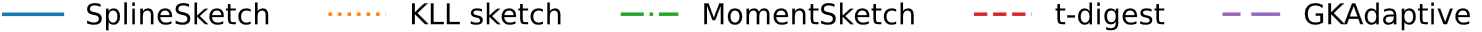}
	\newline
	\subcaptionbox{Log-log plot with update time depending on data size with fixed sketch size of $\approx 1.6$ kB.\label{fig:updateTime1}}{%
		\includegraphics[width=0.49\textwidth]{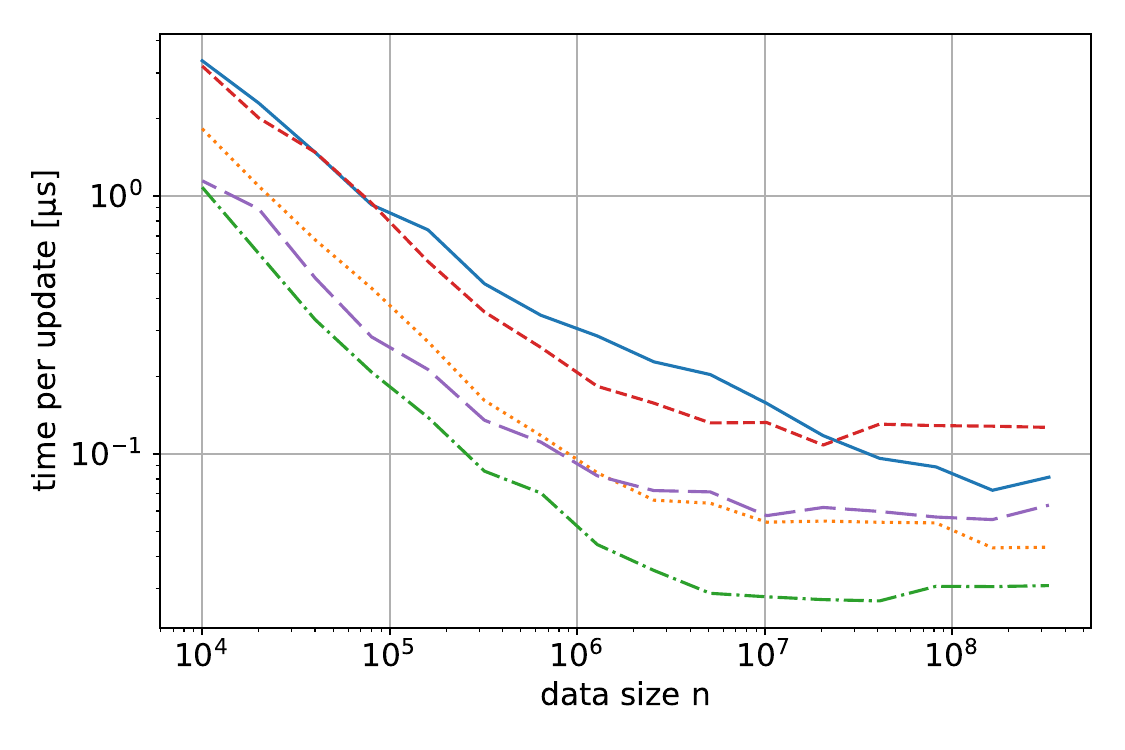}%
	}
	\hfill
	\subcaptionbox{Update time depending on the sketch size, for $n = 10^8$.\label{fig:updateTime2}}{
		\includegraphics[width=0.49\textwidth]{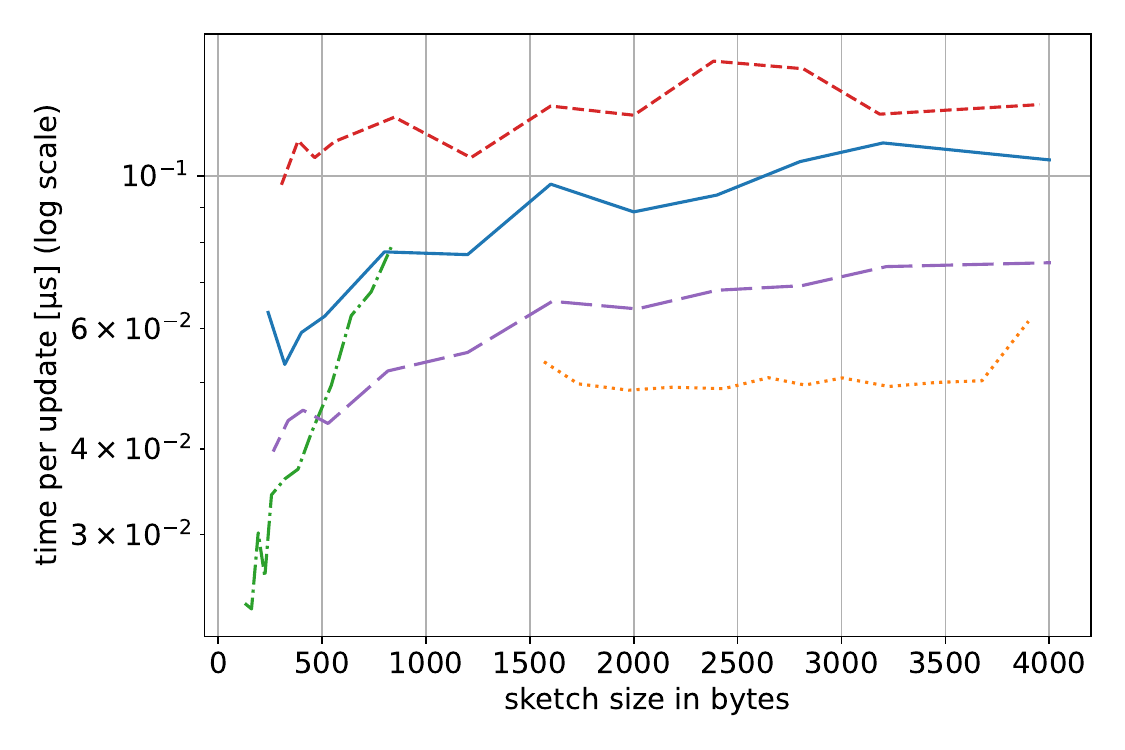}
	}
	\vspace{-0.3cm}
	\caption{Time per update in $\mu$s on normally distributed data;.
	}\label{fig:updateTime}
\end{figure*}
\begin{figure*}[t]
	\centering
	\subcaptionbox{Log-log plot with query time depending on the number of queries, with fixed sketch size of $\approx 1.6$ kB.\label{fig:queryTime}}{%
		\includegraphics[width=0.48\textwidth]{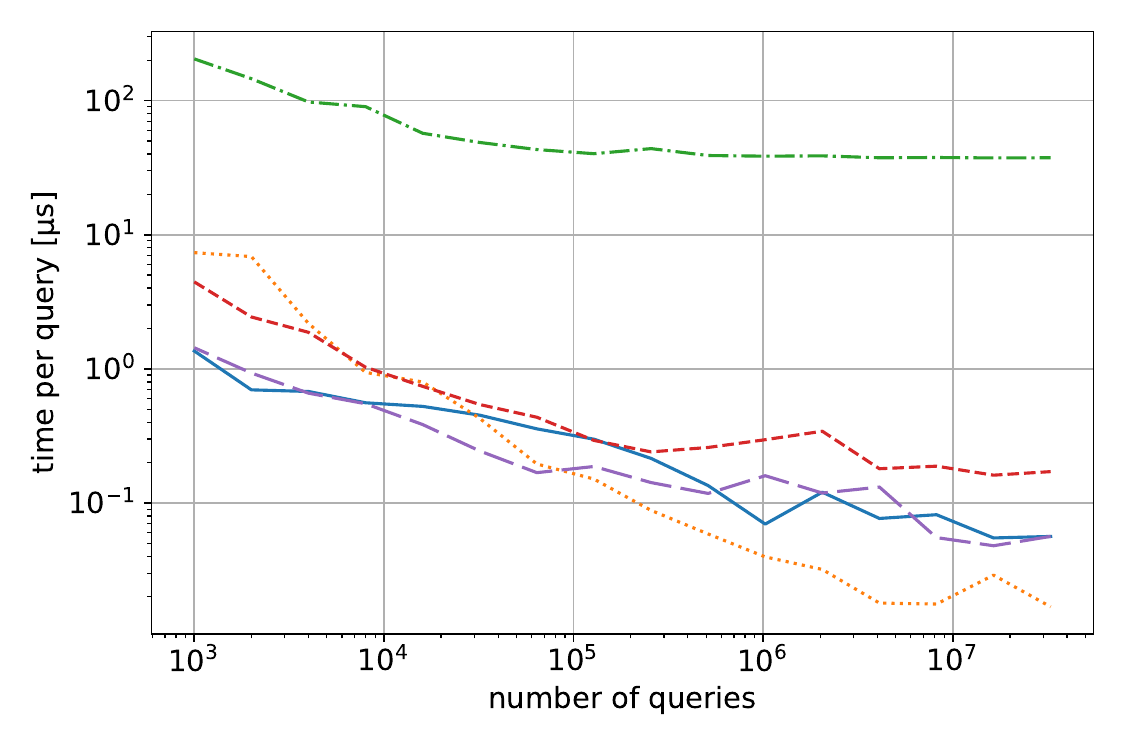}%
	}
	\hfill
	\subcaptionbox{Query time depending on the sketch size for $10^5$ queries.\label{fig:queryTime2}}{%
		\includegraphics[width=0.48\textwidth]{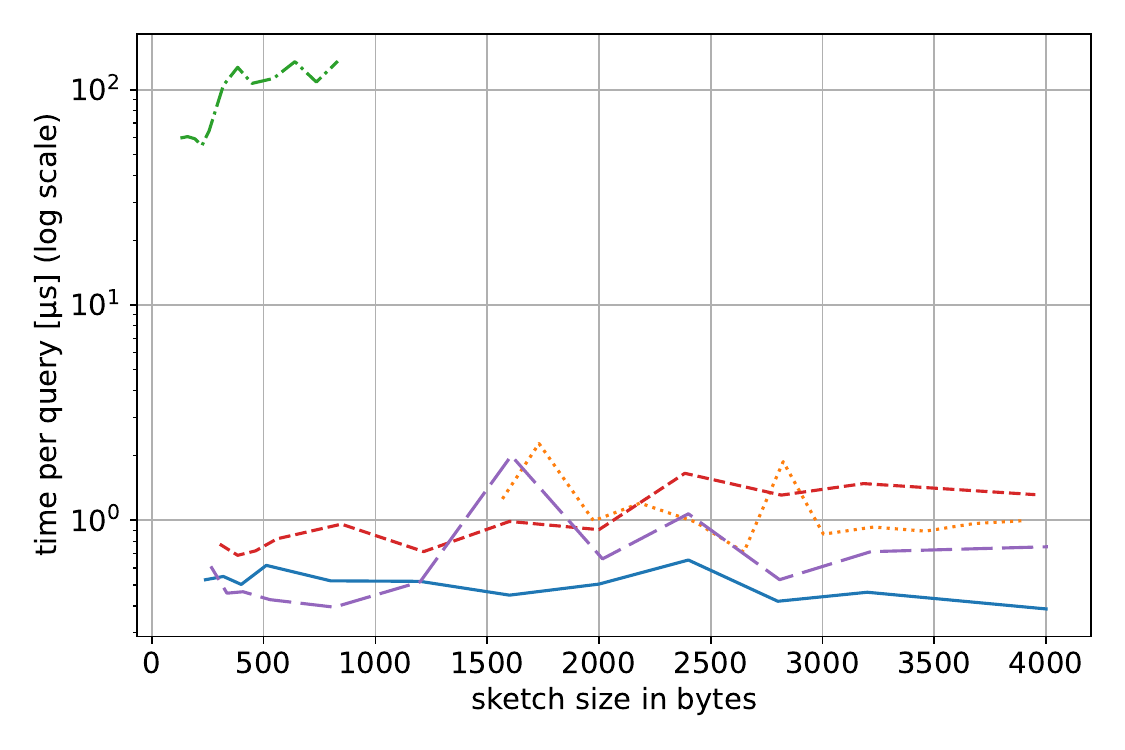}%
	}
	\vspace{-0.4cm}
	\caption{Time per query in $\mu$s on $n=10^8$ samples from the normal distribution.
		\vspace{-0.2cm}}
\end{figure*}

In Figure~\ref{fig:queryTime}, we show average 
query times in $\mu$s depending on the number of queries in the same setup as for update time, again with a fixed sketch sizes of about 1.6 kB,
for $10^8$ normally distributed items.
KLL is the fastest to query for a large number of queries due to its simplicity, requiring only about 0.03$\mu$s per query,
while SplineSketch and GKAdaptive are nearly as fast as KLL
and outperform KLL for a small number of queries (less than 40\,000);
namely, SplineSketch and GKAdaptive use at most 1$\mu$ per query for more than 1\,000 queries.
$t$-digest is typically 2--3 times slower than SplineSketch.
MomentSketch is especially slow, by over two orders of magnitude due to the need to compute a distribution fitting the moments and also, since the particular
implementation used only provides quantile queries, we obtain rank estimates using binary search.
Figure~\ref{fig:queryTime2} shows that, with the exception of MomentSketch, the query time does not increase with the sketch size.

\subsection{Evaluation of Mergeability}\label{sec:expMerges}

	We evaluated the performance sketches when they are created by pairwise merge operations. 
	To this end, we split the input into 10\,000 chucks of the same size, feed each chunk into one sketch, and then merge the resulting
	sketches into one in a complete binary tree fashion, i.e., merging sketches summarizing a similar number of items.	

	As illustrated in Figures~\ref{fig:mergingAccuracy} and~\ref{fig:mergingAccuracyHepmass}, the overall accuracy picture is similar as in the streaming setting,
	with SplineSketch having 2--20 times smaller error than $t$-digest and even more compared to other sketches.
	However, the error of SplineSketch and $t$-digest got worse compared to the streaming setting, about 5--10 times,
	even though both of these sketches still largely outperform KLL and the comparison to MomentSketch is similar as in streaming.
	As for the time efficiency, we measured the time to perform a merge operation (averaged over the 9\,999 merges performed to obtain the final sketch); see Figure~\ref{fig:mergingSpeed}.
	Moment Sketch was the fastest to merge, due to taking simple sums of the moments and log-moments,
	followed by KLL, which still fit below 10$\mu$s per merge operation.
	$t$-digest and GKAdaptive required up to 100--200$\mu$s per merge, due to the need to merge the centroids,
	and SplineSketch up to 500$\mu$s as merging of buckets, while conceptually simple, requires a lot of bucket joins. 
	The query time is the same as in streaming, i.e., does not depend on how the sketch is created.
	The results for other datasets are similar; see the supplementary repository.

\begin{figure*}[t]
	\centering
		\vspace{-0.2cm}
		\includegraphics[width=0.8\textwidth]{legendNoErrorBound.png}
		\newline
	\subcaptionbox{Maximum rank error (log-scale) on the normal distribution\label{fig:mergingAccuracy}}{%
		\includegraphics[width=0.32\textwidth]{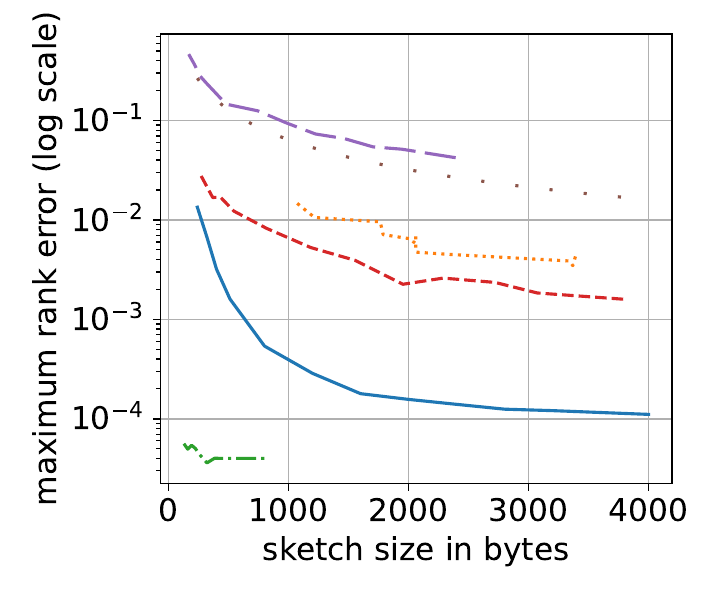}%
	}%
	\hfill
	\subcaptionbox{Maximum rank error (log-scale) on the HEPMASS dataset~\cite{hepmass}\label{fig:mergingAccuracyHepmass}}{%
		\includegraphics[width=0.32\textwidth]{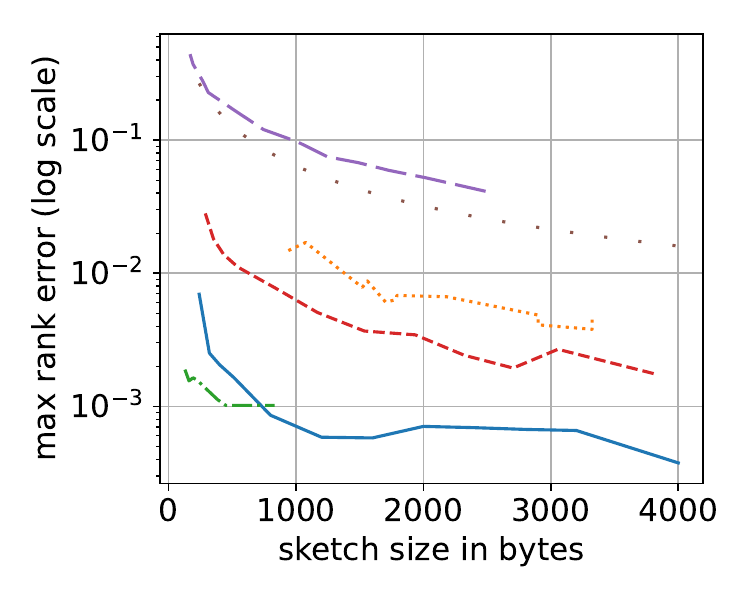}%
	}%
	\hfill
	\subcaptionbox{Time per merge operation in $\mu$s (log-scale) on the normal distribution\label{fig:mergingSpeed}}{%
		\includegraphics[width=0.32\textwidth]{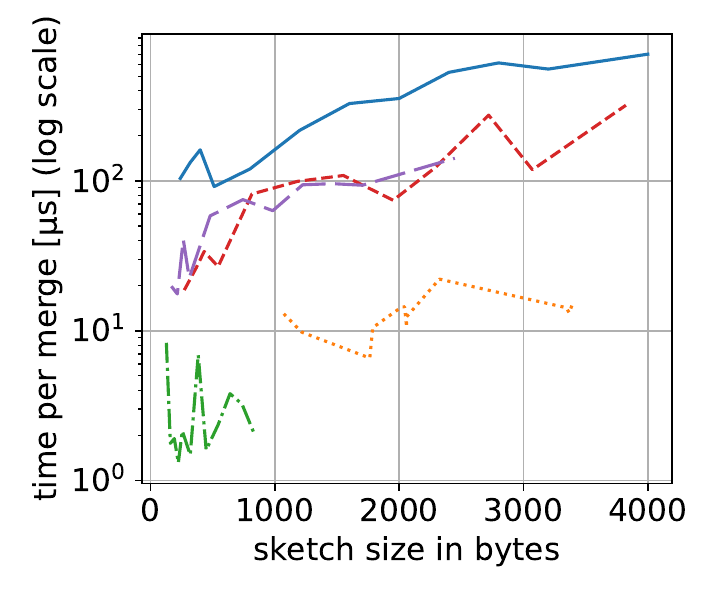}%
	}
	\caption{Error and merge time in the mergeability setting  depending on the sketch size in bytes,
		namely, for sketches built by merge operations from sketches for 10\,000 input chunks.
		\vspace{-0.2cm}}	
\end{figure*}

\subsection{Inputs with Heavy Hitters}\label{sec:expSkewed}

Here, we focus on inputs with high-frequency items, where SplineSketch without the MG sketch
does not perform so well.
Our results in Figures~\ref{fig:skewedAccuracy1} and~\ref{fig:skewedAccuracy2} confirm that 
using the Misra-Gries (MG) sketch alongside SplineSketch does indeed restore the very high accuracy
when the MG sketch is large enough to detect the heavy hitters (note that we account for the size of the MG sketch after compressing it as described in Section~\ref{sec:highFrequencyItems-Impl}).
In more detail, while the error of $t$-digest and SplineSketch without MG gets worse and does not improve with the sketch size,
SplineSketch with MG achieves by up to three orders of magnitude better error, and zero error
on inputs consisting of a small number of distinct items only as in Figure~\ref{fig:skewedAccuracy2}.
We note that the error of KLL and GKAdaptive remains the same as on inputs with distinct items as they are comparison-based.
As shown in Figure~\ref{fig:skewedUpdateTime}, the update time of SplineSketch with MG is about three times worse than without MG 
(though it can be optimized), while merge and query times are only slightly larger.

\begin{figure*}[htbp]
	\includegraphics[width=0.99\textwidth]{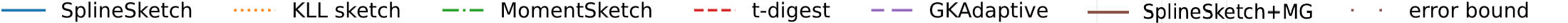}
	\newline
	\centering
	\subcaptionbox{Average rank error (log-scale),
		on normal distribution with $n/2$ items, followed by $n/2$ high-frequency items.\label{fig:skewedAccuracy1}}{%
			\includegraphics[width=0.32\textwidth]{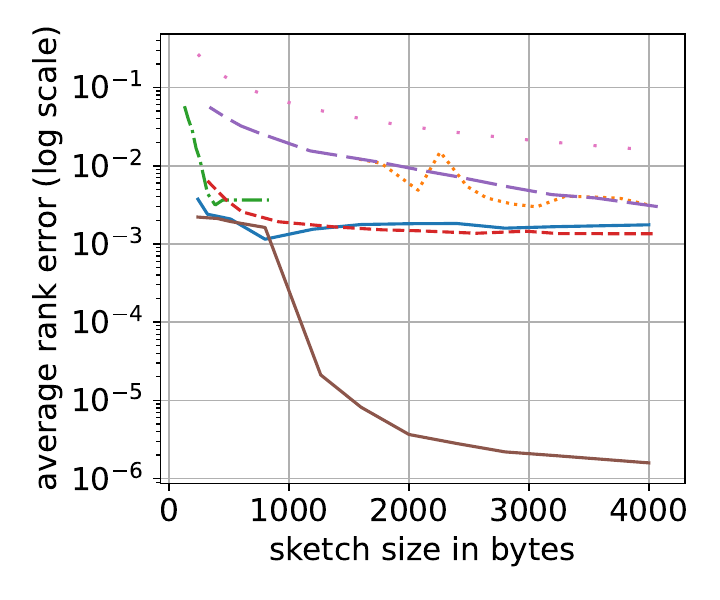}%
		}
	\hfill
	\subcaptionbox{Average rank error (log-scale),
		on a sorted input with frequent items.\label{fig:skewedAccuracy2}}{%
			\includegraphics[width=0.32\textwidth]{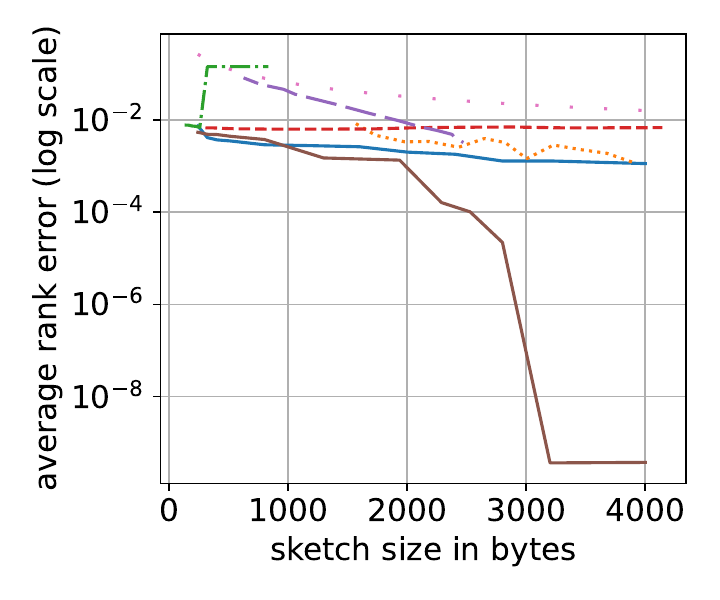}%
		}
	\hfill
	\subcaptionbox{Time per update in $\mu$s for the same input as in Figure~\ref{fig:skewedAccuracy1}.\label{fig:skewedUpdateTime}}{%
			\includegraphics[width=0.32\textwidth]{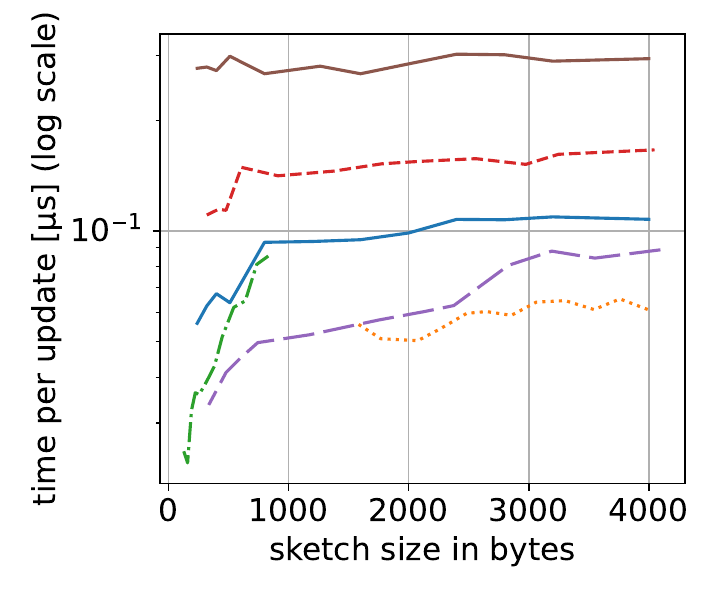}%
		}
	\vspace{-0.4cm}
\caption{Accuracy and update time for inputs with frequent items, depending on the sketch size in bytes.
\vspace{-0.2cm}}	
\end{figure*}

\subsection{Ablation Studies}\label{sec:ablation}

	Here, we provide insights into the specific design choices in SplineSketch.
	First, we evaluated the benefit of using PCHIP interpolation and
	found that PCHIP interpolation achieves 3--10 times better
	error than just linear interpolation over the buckets; see Figure~\ref{fig:ablationInterpolations} for an example.

	Second, we focused on the heuristic error and compared several alternatives to the second derivative of the empirical CDF, normalized by length squared, as defined in~\eqref{eqn:heurError-der2}. The evaluated alternatives are: using no heuristic error, bucket length, bucket counter (which is the first derivative of the empirical CDF), and the third derivative of the empirical CDF, normalized by length cubed. 
	Figure~\ref{fig:ablationHeurError} shows that~\eqref{eqn:heurError-der2} is indeed the best, while using the bucket length, i.e., making the buckets similar in length on the input range, is by only about 20\% worse.
	The third derivative is about 1.5--2 times worse, and using the bucket counter or no heuristic error is over 10 times worse.

	Finally, we briefly comment on parameter tuning.
	We did not observe significant differences in the accuracy of SplineSketch with respect to parameter $\gamma > 1$ in \Cref{def:splittableBuckets} (for values between 1.1 and 2). 
	For the initial multiplicative constant $C_b$ in the bucket bound~\eqref{eqn:bucketBound},
	we tested values in $[1, 7]$ and for $C_b \ge 3$, the accuracy was similar ($C_b < 3$ was only better on the uniform distribution).
	Other parameters, such as the epoch increase factor (default=1.25) or the minimum fraction of the bucket bound in \Cref{def:splittableBuckets} of splittable buckets (default=0.01)
	also do not influence the accuracy much if adjusted. %
	See the supplementary repository for details.

\begin{figure}[htbp]
	\vspace{-0.2cm}
	\centering
	\subcaptionbox{Interpolation\label{fig:ablationInterpolations}}{
			\includegraphics[width=0.48\columnwidth]{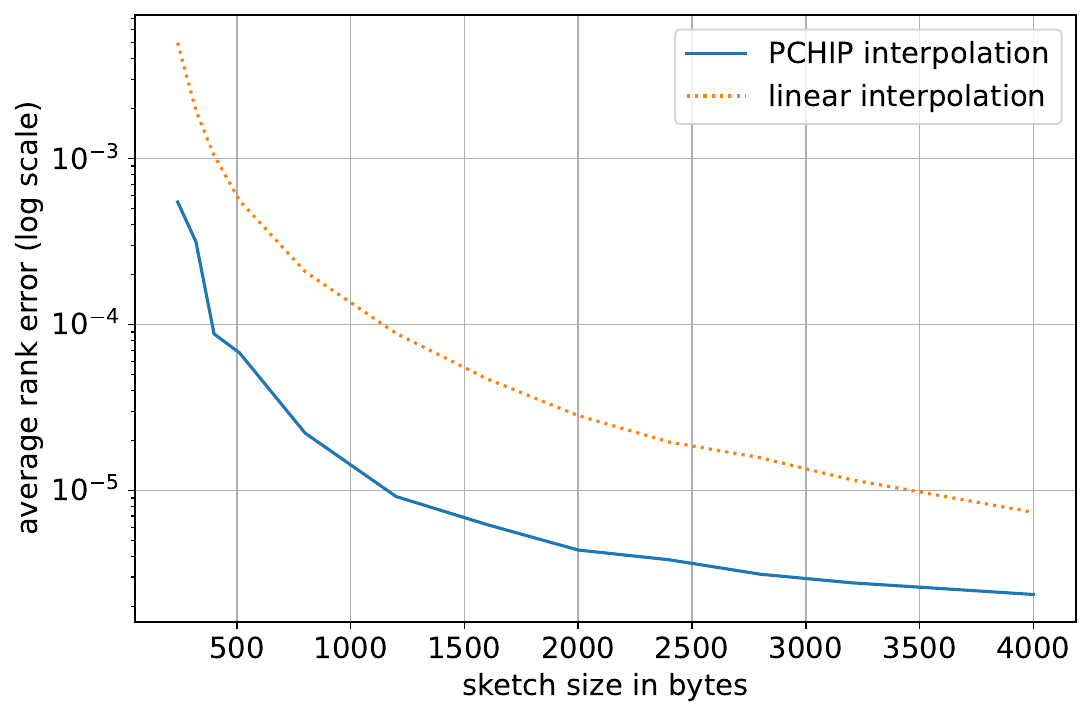}
		}
	\subcaptionbox{Heuristic error\label{fig:ablationHeurError}}{
			\includegraphics[width=0.48\columnwidth]{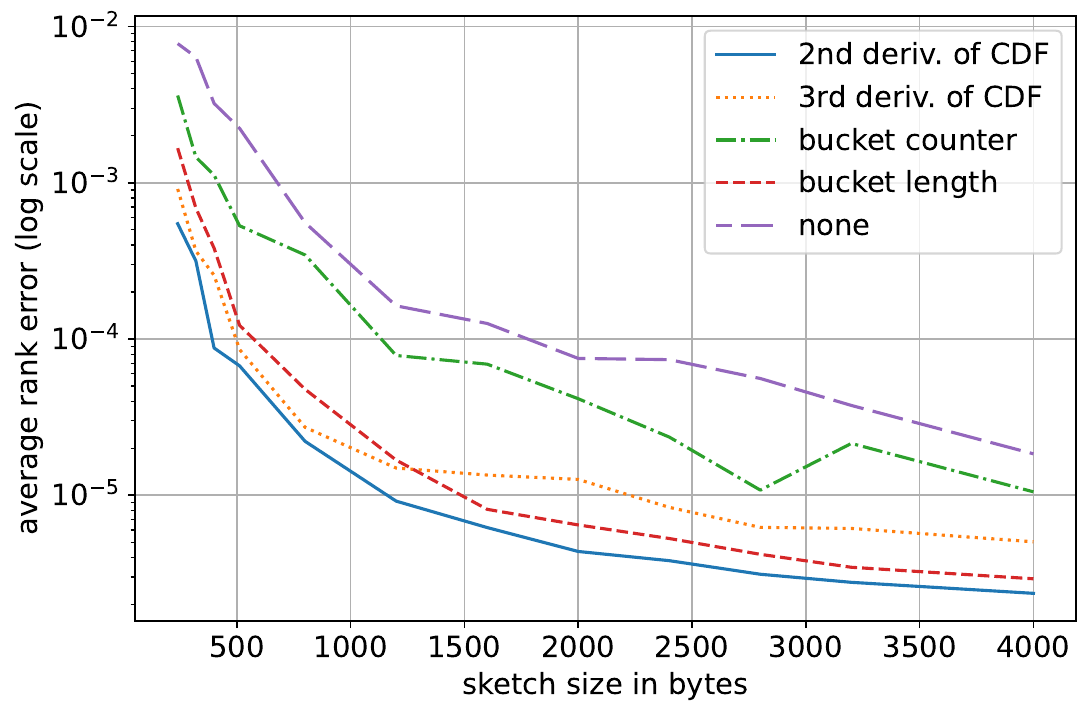}
		}	
	\vspace{-0.2cm}
	\caption{Ablation studies on the normal distribution.
		\vspace{-0.5cm}}
\end{figure}

\section{Conclusions and Discussion}\label{sec:conclusions}

In this work, we designed a new deterministic quantile sketch which has fixed memory consumption of $O(k)$ memory words.
The main technical contribution is in careful maintenance of bucket thresholds and counters.
This leads to extremely accurate rank and quantile estimates using the PCHIP interpolation as well as uniformly bounded worst-case error,
achieving a near-optimal accuracy-space trade-off.
We have proven that our sketch is mergeable and retains its guarantees in a balanced mergeability setting, up to an additive $O(n \log(n/k) / k)$ error.
Despite that we claimed full mergeability (i.e., the same error guarantees under arbitrary merging),
it does not hold for our sketch. One standard way to deal with it is to maintain $\approx \log_2(n/k)$ instances of the sketch with $k$ buckets,
one for each power of two between $\Theta(k)$ and $n$, with the $i$-th instance summarizing $(2^{i-1} k, 2^i k]$ items. 
This reduces arbitrary merging into balanced merging of the individual instances but introduces an extra $O(\log (n/k))$ factor in the space complexity.
In fact, compared to $q$-digest, our sketch may be less suitable for arbitrary merging (at least from a theoretical point of view) due to bucket thresholds not being
aligned between the sketches merged.
We leave open how to modify SplineSketch for full mergeability without losing a logarithmic factor.

Besides very high accuracy on various datasets, the key advantage of our approach is flexibility.
For instance, it is possible to resize the sketch, making it smaller or larger depending on how the memory resources change
(e.g., one can use a larger sketch for input processing and then make it smaller for storage).
More importantly, if the user has a prior knowledge about the data distribution, 
it is possible to preset the initial thresholds of SplineSketch based on this prior distribution
and aim for even lower error.
Furthermore,  if the prior knowledge turns out to be wrong, our sketch adapts to the new distribution,
similarly as we have shown in Figure~\ref{fig:iid-avgErr-subfigNormalWLargeChange}.

One of the challenges when implementing SplineSketch is dealing with high-frequency items.
One possible approach is using the Misra-Gries sketch~\cite{misra1982finding}, which
	restores theoretical guarantees (Section~\ref{sec:theory})
	and very high accuracy for inputs with heavy hitters, as demonstrated in \Cref{sec:expSkewed}.
Nevertheless, using a heavy-hitter sketch comes at a cost of increased update, merge, and query times,
so we also provide a faster version without a heavy hitter sketch.
Another limitation of our implementation arises from using the \texttt{double} data type in Java,
which means that for some inputs with many significant digits (e.g., the SOSD benchmark~\cite{sosd,MarcusKRSMK0K20sosd}), accuracy degrades as the sketch rounds many items to the same \texttt{double} value;
this can be addressed by changing the underlying data type, e.g., to 64-bit integers.
While we optimized the Java version to some extent,
our implementation is so far a prototype %
intended to demonstrate the main advantages of our approach and outline its implementation challenges.
Due to the intricacies of maintaining the buckets, our sketch is slower in processing the input stream than KLL or MomentSketch,
especially when many merge operations are performed.
However, SplineSketch is already faster in streaming than $t$-digest.
Moreover, further optimizations are possible, such as the ones presented in \cite{pfeil.etal:2025:proc.21stint.workshopdatamanag.newhardw.} (though not applicable to Python or Java).
Thus, a well-optimized implementation may be significantly faster than $t$-digest.

From the theoretical point of view, we ask whether one can modify our algorithm so that the error bounds
have better dependency on $\alpha$. We note that there is some evidence from the
comparison-based lower bounds in~\cite{CormodeV20} that a factor of $\log \log \alpha$ may be necessary
without using bit-packing tricks developed for $q$-digest in~\cite{GuptaSW24optimaQS-non-comp}.
The accuracy in practice, despite being already very high, may also be improved,
and it is not clear where is the accuracy limit of quantile summaries on real-world or synthetic datasets.

Our work opens up a new direction for future work in streaming quantile estimation.
In particular, we believe that our techniques
have potential to yield much better quantile sketches with relative-error guarantees,
capturing the distribution tails more accurately than the median.
However, achieving the relative error is substantially harder as witnessed by several 
major open problems in the theory of relative-error quantile sketches (cf.~\cite{CormodeV20,GribelyukSWY25elasticCompactors}).

\paragraph{Acknowledgments.}
We thank the SIGMOD'26 reviewers for their constructive feedback and suggestions for improvement.
We also thank Tomáš Domes and Jakub Komárek for their feedback; in particular, Tomáš discovered issued in the analysis in the previous version.
ChatGPT was used to generate parts of this work, including polishing language in the text, pseudocodes prototyping, translation of SplineSketch from Python to Java, and the proof of Lemma~\ref{lem:cubicPolyMonoMiddleBound}.
We have checked all the generated content, corrected it where necessary.
In particular, the Java implementation was largely rewritten and optimized manually.

A. Łukasiewicz and P.~Veselý were partially supported by the ERC CZ project LL2406 of the Ministry of Education of Czech Republic.
J. Tětek was supported by the VILLUM Foundation grant 16582. This research was partially funded from the Ministry of Education and Science of Bulgaria (support for INSAIT, part of the Bulgarian National Roadmap for Research Infrastructure).
P. Veselý was partially supported by Czech Science Foundation project 24-10306S
and by Center for Foundations of Modern Computer Science (Charles Univ.\ project UNCE 24/SCI/008).
Part of this work was done while A. Łukasiewicz was a PhD student at the University of Wrocław.
Part of this work was done when J. Tětek was visiting at the University of Wrocław and at Charles University. Part of this work was done while J. Tětek was employed at BARC, University of Copenhagen.

\bibliographystyle{plainnat}
\bibliography{literature}

\begin{thebibliography}{55}
\providecommand{\natexlab}[1]{#1}
\providecommand{\url}[1]{\texttt{#1}}
\expandafter\ifx\csname urlstyle\endcsname\relax
  \providecommand{\doi}[1]{doi: #1}\else
  \providecommand{\doi}{doi: \begingroup \urlstyle{rm}\Url}\fi

\bibitem[::a()]{::apachedubbodocumentation}
{Apache Dubbo {Documentation}} -- metrics.
\newblock URL \url{https://dubbo.apache.org/en/overview/mannual/java-sdk/reference-manual/merics/meter/}.
\newblock Accessed: 2025-10-13.

\bibitem[::e()]{::elasticdocs}
{Elastic Docs} -- {Percentiles}.
\newblock URL \url{https://www.elastic.co/docs/reference/aggregations/search-aggregations-metrics-percentile-aggregation}.
\newblock Accessed: 2025-10-13.

\bibitem[::t()]{::tigerdatadocumentation}
{TigerData Documentation} -- {Tdigest()}.
\newblock URL \url{https://docs.tigerdata.com/api/latest/hyperfunctions/percentile-approximation/tdigest/}.
\newblock Accessed: 2025-10-13.

\bibitem[Agarwal et~al.(2013)Agarwal, Cormode, Huang, Phillips, Wei, and Yi]{AgarwalCHPWY13}
Pankaj~K. Agarwal, Graham Cormode, Zengfeng Huang, Jeff~M. Phillips, Zhewei Wei, and Ke~Yi.
\newblock Mergeable summaries.
\newblock \emph{{ACM} Trans. Database Syst.}, 38\penalty0 (4):\penalty0 26, 2013.
\newblock \doi{10.1145/2500128}.

\bibitem[Assadi et~al.(2023)Assadi, Joshi, Prabhu, and Shah]{AssadiJPS23-GK_weighted}
Sepehr Assadi, Nirmit Joshi, Milind Prabhu, and Vihan Shah.
\newblock Generalizing greenwald-khanna streaming quantile summaries for weighted inputs.
\newblock In \emph{26th International Conference on Database Theory, {ICDT} 2023, March 28-31, 2023, Ioannina, Greece}, volume 255 of \emph{LIPIcs}, pages 19:1--19:19. Schloss Dagstuhl - Leibniz-Zentrum f{\"{u}}r Informatik, 2023.
\newblock \doi{10.4230/LIPICS.ICDT.2023.19}.

\bibitem[Baker and Langmead(2023)]{baker2023genomic}
Daniel~N Baker and Ben Langmead.
\newblock Genomic sketching with multiplicities and locality-sensitive hashing using dashing 2.
\newblock \emph{Genome Research}, 33\penalty0 (7), 2023.
\newblock \doi{10.1101/gr.277655.123}.

\bibitem[Bonnie et~al.(2024)Bonnie, Ahmed, and Langmead]{bonnie2024dandd}
Jessica~K Bonnie, Omar~Y Ahmed, and Ben Langmead.
\newblock {DandD}: efficient measurement of sequence growth and similarity.
\newblock \emph{iScience}, 27\penalty0 (3), 2024.
\newblock \doi{10.1016/j.isci.2024.109054}.

\bibitem[Clemens et~al.(2023)Clemens, Schulz, Gartner, and Hausheer]{ClemensSGH23}
Vera Clemens, Lars{-}Christian Schulz, Marten Gartner, and David Hausheer.
\newblock {DDoS} detection in {P4} using {HYPERLOGLOG} and {COUNTMIN} sketches.
\newblock In \emph{{NOMS} 2023, {IEEE/IFIP} Network Operations and Management Symposium, Miami, FL, USA, May 8-12, 2023}, 2023.
\newblock \doi{10.1109/NOMS56928.2023.10154315}.

\bibitem[Cormode(2021)]{Cormode21a}
Graham Cormode.
\newblock Current trends in data summaries.
\newblock \emph{{SIGMOD} Rec.}, 50\penalty0 (4), 2021.
\newblock \doi{10.1145/3516431.3516433}.

\bibitem[Cormode and Vesel{\'{y}}(2020)]{CormodeV20}
Graham Cormode and Pavel Vesel{\'{y}}.
\newblock A tight lower bound for comparison-based quantile summaries.
\newblock In \emph{Proceedings of the 39th {ACM} {SIGMOD-SIGACT-SIGAI} Symposium on Principles of Database Systems, {PODS} 2020, Portland, OR, USA, June 14-19, 2020}, pages 81--93. {ACM}, 2020.
\newblock \doi{10.1145/3375395.3387650}.

\bibitem[Cormode and Yi(2020)]{cormodeY20}
Graham Cormode and Ke~Yi.
\newblock \emph{Small summaries for big data}.
\newblock Cambridge University Press, 2020.
\newblock \doi{10.1017/9781108769938}.

\bibitem[Cormode et~al.(2006)Cormode, Korn, Muthukrishnan, and Srivastava]{cormode2006space}
Graham Cormode, Flip Korn, S~Muthukrishnan, and Divesh Srivastava.
\newblock Space- and time-efficient deterministic algorithms for biased quantiles over data streams.
\newblock In \emph{Proceedings of the 25th ACM SIGMOD-SIGACT-SIGART symposium on Principles of database systems}, PODS '06, pages 263--272. ACM, 2006.
\newblock \doi{10.1145/1142351.1142389}.

\bibitem[Cormode et~al.(2021)Cormode, Mishra, Ross, and Vesel{\'{y}}]{CormodeMRV21}
Graham Cormode, Abhinav Mishra, Joseph Ross, and Pavel Vesel{\'{y}}.
\newblock Theory meets practice at the median: {A} worst case comparison of relative error quantile algorithms.
\newblock In \emph{{KDD} '21: The 27th {ACM} {SIGKDD} Conference on Knowledge Discovery and Data Mining, Virtual Event, Singapore, August 14-18, 2021}, pages 2722--2731. {ACM}, 2021.
\newblock \doi{10.1145/3447548.3467152}.

\bibitem[Cormode et~al.(2023)Cormode, Karnin, Liberty, Thaler, and Vesel{\'{y}}]{CormodeKLTV23}
Graham Cormode, Zohar~S. Karnin, Edo Liberty, Justin Thaler, and Pavel Vesel{\'{y}}.
\newblock Relative error streaming quantiles.
\newblock \emph{J. {ACM}}, 70\penalty0 (5):\penalty0 30:1--30:48, 2023.
\newblock \doi{10.1145/3617891}.

\bibitem[Dong et~al.(2024)Dong, Fan, Bai, Yang, Xue, Chen, and Wu]{dong.etal:2024:2024ieee40thint.conf.dataeng.icdea}
Siyuan Dong, Zhuochen Fan, Tianyu Bai, Tong Yang, Hanyu Xue, Peiqing Chen, and Yuhan Wu.
\newblock M4: {{A Framework}} for {{Per-Flow Quantile Estimation}}.
\newblock In \emph{2024 {{IEEE}} 40th {{International Conference}} on {{Data Engineering}} ({{ICDE}})}, pages 4787--4800. IEEE, 2024.
\newblock ISBN 9798350317152.
\newblock \doi{10.1109/ICDE60146.2024.00364}.

\bibitem[Dunning(2021)]{dunning21}
Ted Dunning.
\newblock The t-digest: Efficient estimates of distributions.
\newblock \emph{Software Impacts}, 7:\penalty0 100049, 2021.
\newblock ISSN 2665-9638.
\newblock \doi{https://doi.org/10.1016/j.simpa.2020.100049}.

\bibitem[Dunning and Ertl(2019)]{dunning19-t-digest}
Ted Dunning and Otmar Ertl.
\newblock Computing extremely accurate quantiles using t-digests.
\newblock \emph{CoRR}, abs/1902.04023, 2019.
\newblock URL \url{http://arxiv.org/abs/1902.04023}.

\bibitem[Epicoco et~al.(2020)Epicoco, Melle, Cafaro, Pulimeno, and Morleo]{EpicocoMCPM20}
Italo Epicoco, Catiuscia Melle, Massimo Cafaro, Marco Pulimeno, and Giuseppe Morleo.
\newblock Uddsketch: Accurate tracking of quantiles in data streams.
\newblock \emph{{IEEE} Access}, 8:\penalty0 147604--147617, 2020.
\newblock \doi{10.1109/ACCESS.2020.3015599}.

\bibitem[Fan and Gijbels(1996)]{fan.gijbels:1996:}
Jianqing Fan and Irene Gijbels.
\newblock \emph{Local {{Polynomial Modelling}} and {{Its Applications}}: {{Monographs}} on {{Statistics}} and {{Applied Probability}} 66}.
\newblock Chapman Hall, 1996.
\newblock ISBN 978-0-412-98321-4.
\newblock \doi{10.1201/9780203748725}.

\bibitem[Felber and Ostrovsky(2015)]{felber2015randomized}
David Felber and Rafail Ostrovsky.
\newblock A randomized online quantile summary in {O}(1/epsilon * log(1/epsilon)) words.
\newblock In \emph{Approximation, Randomization, and Combinatorial Optimization. Algorithms and Techniques (APPROX/RANDOM 2015)}, volume~40 of \emph{Leibniz International Proceedings in Informatics (LIPIcs)}, pages 775--785, Dagstuhl, Germany, 2015. Schloss Dagstuhl--Leibniz-Zentrum fuer Informatik.
\newblock ISBN 978-3-939897-89-7.
\newblock \doi{10.4230/LIPIcs.APPROX-RANDOM.2015.775}.
\newblock URL \url{http://drops.dagstuhl.de/opus/volltexte/2015/5335}.

\bibitem[Fritsch and Butland(1984)]{fritsch.butland:1984:siamj.sci.andstat.comput.}
F.~N. Fritsch and J.~Butland.
\newblock A {{Method}} for {{Constructing Local Monotone Piecewise Cubic Interpolants}}.
\newblock \emph{SIAM J. Sci. and Stat. Comput.}, 5\penalty0 (2):\penalty0 300--304, 1984.
\newblock ISSN 0196-5204.
\newblock \doi{10.1137/0905021}.

\bibitem[Fritsch and Carlson(1980)]{fritsch1980monotone}
Frederick~N Fritsch and Ralph~E Carlson.
\newblock Monotone piecewise cubic interpolation.
\newblock \emph{SIAM Journal on Numerical Analysis}, 17\penalty0 (2):\penalty0 238--246, 1980.

\bibitem[Gan et~al.(2018)Gan, Ding, Tai, Sharan, and Bailis]{GanDTSB18}
Edward Gan, Jialin Ding, Kai~Sheng Tai, Vatsal Sharan, and Peter Bailis.
\newblock Moment-based quantile sketches for efficient high cardinality aggregation queries.
\newblock \emph{Proc. {VLDB} Endow.}, 11\penalty0 (11):\penalty0 1647--1660, 2018.
\newblock \doi{10.14778/3236187.3236212}.

\bibitem[Greenwald and Khanna(2001)]{greenwald2001space}
Michael Greenwald and Sanjeev Khanna.
\newblock Space-efficient online computation of quantile summaries.
\newblock In \emph{ACM SIGMOD Record}, volume~30, pages 58--66. ACM, 2001.
\newblock \doi{10.1145/375663.375670}.

\bibitem[Gribelyuk et~al.(2024)Gribelyuk, Sawettamalya, Wu, and Yu]{GribelyukSWY24_GK-KG_sketch}
Elena Gribelyuk, Pachara Sawettamalya, Hongxun Wu, and Huacheng Yu.
\newblock Simple {\&} optimal quantile sketch: Combining {Greenwald-Khanna with Khanna-Greenwald}.
\newblock \emph{Proc. {ACM} Manag. Data}, 2\penalty0 (2):\penalty0 109, 2024.
\newblock \doi{10.1145/3651610}.

\bibitem[Gribelyuk et~al.(2025)Gribelyuk, Sawettamalya, Wu, and Yu]{GribelyukSWY25elasticCompactors}
Elena Gribelyuk, Pachara Sawettamalya, Hongxun Wu, and Huacheng Yu.
\newblock Near-optimal relative error streaming quantile estimation via elastic compactors.
\newblock In \emph{Proceedings of the 2025 Annual {ACM-SIAM} Symposium on Discrete Algorithms, {SODA} 2025, New Orleans, LA, USA, January 12-15, 2025}, pages 3486--3529. {SIAM}, 2025.
\newblock \doi{10.1137/1.9781611978322.115}.

\bibitem[Guo et~al.(2023)Guo, Hong, Wu, Liu, Yang, and Cui]{guo.etal:2023:proc.29thacmsigkddconf.knowl.discov.datamin.}
Jiarui Guo, Yisen Hong, Yuhan Wu, Yunfei Liu, Tong Yang, and Bin Cui.
\newblock {{SketchPolymer}}: {{Estimate Per-item Tail Quantile Using One Sketch}}.
\newblock In \emph{Proceedings of the 29th {{ACM SIGKDD Conference}} on {{Knowledge Discovery}} and {{Data Mining}}}, {{KDD}} '23, pages 590--601. Association for Computing Machinery, 2023.
\newblock ISBN 9798400701030.
\newblock \doi{10.1145/3580305.3599505}.

\bibitem[Gupta et~al.(2024)Gupta, Singhal, and Wu]{GuptaSW24optimaQS-non-comp}
Meghal Gupta, Mihir Singhal, and Hongxun Wu.
\newblock Optimal quantile estimation: Beyond the comparison model.
\newblock \emph{2024 IEEE 65th Annual Symposium on Foundations of Computer Science (FOCS)}, pages 1137--1158, 2024.
\newblock \doi{10.1109/FOCS61266.2024.00075}.

\bibitem[He et~al.(2023)He, Zhu, and Huang]{he.etal:2023:2023ieee39thint.conf.dataeng.icdea}
Jintao He, Jiaqi Zhu, and Qun Huang.
\newblock {{HistSketch}}: {{A Compact Data Structure}} for {{Accurate Per-Key Distribution Monitoring}}.
\newblock In \emph{2023 {{IEEE}} 39th {{International Conference}} on {{Data Engineering}} ({{ICDE}})}, pages 2008--2021, 2023.
\newblock \doi{10.1109/ICDE55515.2023.00156}.

\bibitem[Hebrail and Berard(2006)]{individual_household_electric_power_consumption_235}
Georges Hebrail and Alice Berard.
\newblock {Individual Household Electric Power Consumption}.
\newblock UCI Machine Learning Repository, 2006.
\newblock {DOI}: https://doi.org/10.24432/C58K54.

\bibitem[Karnin et~al.(2016)Karnin, Lang, and Liberty]{KarninLL16}
Zohar~S. Karnin, Kevin~J. Lang, and Edo Liberty.
\newblock Optimal quantile approximation in streams.
\newblock In \emph{{IEEE} 57th Annual Symposium on Foundations of Computer Science, {FOCS} 2016, 9-11 October 2016, Hyatt Regency, New Brunswick, New Jersey, {USA}}, pages 71--78. {IEEE} Computer Society, 2016.
\newblock \doi{10.1109/FOCS.2016.17}.

\bibitem[Kelly et~al.(2023)Kelly, Longjohn, and Nottingham]{UCI_MLrepo}
Markelle Kelly, Rachel Longjohn, and Kolby Nottingham.
\newblock {The UCI Machine Learning Repository}.
\newblock https://archive.ics.uci.edu, 2023.

\bibitem[Kipf et~al.(2019)Kipf, Marcus, van Renen, Stoian, Kemper, Kraska, and Neumann]{sosd}
Andreas Kipf, Ryan Marcus, Alexander van Renen, Mihail Stoian, Alfons Kemper, Tim Kraska, and Thomas Neumann.
\newblock {SOSD:} {A} benchmark for learned indexes.
\newblock In \emph{NeurIPS Workshop on Machine Learning for Systems}, 2019.

\bibitem[Kipf et~al.(2020)Kipf, Marcus, van Renen, Stoian, Kemper, Kraska, and Neumann]{kipf.etal:2020:}
Andreas Kipf, Ryan Marcus, Alexander van Renen, Mihail Stoian, Alfons Kemper, Tim Kraska, and Thomas Neumann.
\newblock {{RadixSpline}}: A single-pass learned index.
\newblock In \emph{Proceedings of the {{Third International Workshop}} on {{Exploiting Artificial Intelligence Techniques}} for {{Data Management}}}, {{aiDM}} '20, pages 1--5. Association for Computing Machinery, 2020.
\newblock ISBN 978-1-4503-8029-4.
\newblock \doi{10.1145/3401071.3401659}.

\bibitem[Luo et~al.(2016)Luo, Wang, Yi, and Cormode]{luo16_quantiles_experimental}
Ge~Luo, Lu~Wang, Ke~Yi, and Graham Cormode.
\newblock Quantiles over data streams: Experimental comparisons, new analyses, and further improvements.
\newblock \emph{The VLDB Journal}, 25\penalty0 (4):\penalty0 449--472, August 2016.
\newblock ISSN 1066-8888.
\newblock \doi{10.1007/s00778-016-0424-7}.

\bibitem[Manku et~al.(1999)Manku, Rajagopalan, and Lindsay]{manku1999random}
Gurmeet~Singh Manku, Sridhar Rajagopalan, and Bruce~G Lindsay.
\newblock Random sampling techniques for space efficient online computation of order statistics of large datasets.
\newblock In \emph{ACM SIGMOD Record}, volume~28, pages 251--262. ACM, 1999.

\bibitem[Marcus et~al.(2020)Marcus, Kipf, van Renen, Stoian, Misra, Kemper, Neumann, and Kraska]{MarcusKRSMK0K20sosd}
Ryan Marcus, Andreas Kipf, Alexander van Renen, Mihail Stoian, Sanchit Misra, Alfons Kemper, Thomas Neumann, and Tim Kraska.
\newblock Benchmarking learned indexes.
\newblock \emph{Proc. {VLDB} Endow.}, 14\penalty0 (1):\penalty0 1--13, 2020.
\newblock \doi{10.14778/3421424.3421425}.

\bibitem[Masson et~al.(2019)Masson, Rim, and Lee]{MassonRL19}
Charles Masson, Jee~E. Rim, and Homin~K. Lee.
\newblock Ddsketch: {A} fast and fully-mergeable quantile sketch with relative-error guarantees.
\newblock \emph{Proc. {VLDB} Endow.}, 12\penalty0 (12):\penalty0 2195--2205, 2019.
\newblock \doi{10.14778/3352063.3352135}.

\bibitem[Misra and Gries(1982)]{misra1982finding}
Jayadev Misra and David Gries.
\newblock Finding repeated elements.
\newblock \emph{Science of computer programming}, 2\penalty0 (2):\penalty0 143--152, 1982.
\newblock \doi{10.1016/0167-6423(82)90012-0}.

\bibitem[Mitchell et~al.(2021)Mitchell, Frank, and Holmes]{MitchellFH21}
Rory Mitchell, Eibe Frank, and Geoffrey Holmes.
\newblock An empirical study of moment estimators for quantile approximation.
\newblock \emph{{ACM} Trans. Database Syst.}, 46\penalty0 (1):\penalty0 3:1--3:21, 2021.
\newblock \doi{10.1145/3442337}.

\bibitem[Moerkotte et~al.(2014)Moerkotte, DeHaan, May, Nica, and B{\"o}hm]{moerkotte2014exploiting}
Guido Moerkotte, David DeHaan, Norman May, Anisoara Nica, and Alexander B{\"o}hm.
\newblock Exploiting ordered dictionaries to efficiently construct histograms with q-error guarantees in {SAP HANA}.
\newblock In \emph{Proceedings of the 2014 ACM SIGMOD international conference on Management of data}, pages 361--372, 2014.

\bibitem[Munro and Paterson(1980)]{MunroP80}
J.~Ian Munro and Mike Paterson.
\newblock Selection and sorting with limited storage.
\newblock \emph{Theor. Comput. Sci.}, 12:\penalty0 315--323, 1980.
\newblock \doi{10.1016/0304-3975(80)90061-4}.

\bibitem[Neumann and Michel(2008)]{neumann.michel:2008:shar.datainf.knowl.}
Thomas Neumann and Sebastian Michel.
\newblock Smooth {{Interpolating Histograms}} with {{Error Guarantees}}.
\newblock In \emph{Sharing {{Data}}, {{Information}} and {{Knowledge}}}, pages 126--138. Springer, 2008.
\newblock ISBN 978-3-540-70504-8.
\newblock \doi{10.1007/978-3-540-70504-8_12}.

\bibitem[Pagh and Stausholm(2021)]{PaghS21}
Rasmus Pagh and Nina~Mesing Stausholm.
\newblock Efficient differentially private $f_0$ linear sketching.
\newblock In \emph{ICDT'21}, 2021.
\newblock \doi{10.4230/LIPICS.ICDT.2021.18}.

\bibitem[Pfeil et~al.(2025)Pfeil, Horn, Polychroniou, Erickson, Eng, Cai, and Kraska]{pfeil.etal:2025:proc.21stint.workshopdatamanag.newhardw.}
Pascal Pfeil, Dominik Horn, Orestis Polychroniou, George Erickson, Zhe~Heng Eng, Mengchu Cai, and Tim Kraska.
\newblock Insert-{{Optimized Implementation}} of {{Streaming Data Sketches}}.
\newblock In \emph{Proceedings of the 21st {{International Workshop}} on {{Data Management}} on {{New Hardware}}}, {{DaMoN}} '25, pages 1--8. Association for Computing Machinery, 2025.
\newblock ISBN 979-8-4007-1940-0.
\newblock \doi{10.1145/3736227.3736238}.

\bibitem[Rothchild et~al.(2020)Rothchild, Panda, Ullah, Ivkin, Stoica, Braverman, Gonzalez, and Arora]{RothchildPUISB020}
Daniel Rothchild, Ashwinee Panda, Enayat Ullah, Nikita Ivkin, Ion Stoica, Vladimir Braverman, Joseph Gonzalez, and Raman Arora.
\newblock Fetchsgd: Communication-efficient federated learning with sketching.
\newblock In \emph{ICML'20}, 2020.
\newblock \href{http://proceedings.mlr.press/v119/rothchild20a.html}{Paper link.}

\bibitem[Schiefer et~al.(2023)Schiefer, Chen, Indyk, Narayanan, Silwal, and Wagner]{SchieferCINSW23-KLLinterpolation}
Nicholas Schiefer, Justin~Y. Chen, Piotr Indyk, Shyam Narayanan, Sandeep Silwal, and Tal Wagner.
\newblock Learned interpolation for better streaming quantile approximation with worst-case guarantees.
\newblock In \emph{{SIAM} Conference on Applied and Computational Discrete Algorithms, {ACDA} 2023, Seattle, WA, USA, May 31 - June 2, 2023}, pages 87--97. {SIAM}, 2023.
\newblock \doi{10.1137/1.9781611977714.8}.

\bibitem[Shahout et~al.(2023)Shahout, Friedman, and Ben~Basat]{shahout.etal:2023:proc.acmmanag.data}
Rana Shahout, Roy Friedman, and Ran Ben~Basat.
\newblock Together is {{Better}}: {{Heavy Hitters Quantile Estimation}}.
\newblock \emph{Proceedings of the ACM on Management of Data}, 1\penalty0 (1):\penalty0 83:1--83:25, 2023.
\newblock \doi{10.1145/3588937}.

\bibitem[Shrivastava et~al.(2004)Shrivastava, Buragohain, Agrawal, and Suri]{shrivastava2004medians}
Nisheeth Shrivastava, Chiranjeeb Buragohain, Divyakant Agrawal, and Subhash Suri.
\newblock Medians and beyond: new aggregation techniques for sensor networks.
\newblock In \emph{Proceedings of the 2nd international conference on Embedded networked sensor systems}, pages 239--249. ACM, 2004.

\bibitem[Siffer et~al.(2017)Siffer, Fouque, Termier, and Largou{\"{e}}t]{SifferFTL17}
Alban Siffer, Pierre{-}Alain Fouque, Alexandre Termier, and Christine Largou{\"{e}}t.
\newblock Anomaly detection in streams with extreme value theory.
\newblock In \emph{Proceedings of the 23rd {ACM} {SIGKDD} International Conference on Knowledge Discovery and Data Mining, Halifax, NS, Canada, August 13 - 17, 2017}, pages 1067--1075. {ACM}, 2017.
\newblock \doi{10.1145/3097983.3098144}.

\bibitem[Smith et~al.(2020)Smith, Song, and Thakurta]{Smith0T20}
Adam~D. Smith, Shuang Song, and Abhradeep Thakurta.
\newblock The {Flajolet-Martin} sketch itself preserves differential privacy: {Private} counting with minimal space.
\newblock In \emph{Advances in Neural Information Processing Systems 33: Annual Conference on Neural Information Processing Systems 2020, NeurIPS 2020, December 6-12, 2020, virtual}, 2020.
\newblock URL \url{https://proceedings.neurips.cc/paper/2020/hash/e3019767b1b23f82883c9850356b71d6-Abstract.html}.

\bibitem[Tene(2015)]{tene2015_latency_talk}
Gil Tene.
\newblock How {NOT} to measure latency.
\newblock Talk available at \url{https://www.youtube.com/watch?v=lJ8ydIuPFeU}, 2015.

\bibitem[Whiteson(2016)]{hepmass}
Daniel Whiteson.
\newblock {HEPMASS}.
\newblock UCI Machine Learning Repository, 2016.

\bibitem[Wong(2020)]{Wong2020splines}
Jeffrey Wong.
\newblock Lecture notes on splines.
\newblock \url{https://services.math.duke.edu/~jtwong/math563-2020/lectures/Lec1b-splines.pdf}, 2020.
\newblock Accessed: 2024-10-09.

\bibitem[Zhang and Wang(2007)]{zhangwang}
Qi~Zhang and Wei Wang.
\newblock An efficient algorithm for approximate biased quantile computation in data streams.
\newblock In \emph{Proceedings of the 16th ACM conference on Conference on information and knowledge management}, pages 1023--1026, 2007.

\end{thebibliography}

\appendix

\section{Bounds on Middle Point of Monotonic Cubic Polynomials}\label{app:monotoneCubicPolyMiddleBounds}

The following lemma provides a key technical property of PCHIP interpolation
that we use subsequently in the analysis of our algorithm in Appendix~\ref{sec:analysis}.

\begin{lemma}
	\label{lem:cubicPolyMonoMiddleBound}
	Let \( P(x) \) be a cubic polynomial such that \( P(0) = 0 \), \( P(1) = 1 \), and \( P(x) \) is non-decreasing on the interval \( [0,1] \). Then,
	$P\left( \dfrac{1}{2} \right) \in [\beta, 1 - \beta] \approx [0.067, 0.933],$ %
	where $\beta = \frac{1}{2} - \frac{\sqrt{3}}{4} \approx 0.067$.
\end{lemma}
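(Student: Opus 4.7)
\emph{Plan.} Write $P(x) = ax^3+bx^2+cx$; the conditions $P(0)=0$ and $P(1)=1$ force the constant term to vanish and $a+b+c=1$, from which a short calculation yields $P(1/2) = (1+b+3c)/8$. Thus it suffices to find the extrema of $b+3c$ over $\mathcal{F} := \{(a,b,c) : a+b+c=1,\ P'(x)\ge 0 \text{ on } [0,1]\}$. The map $P(x)\mapsto 1-P(1-x)$ preserves $\mathcal{F}$ and sends $P(1/2)$ to $1-P(1/2)$; so the upper and lower bounds are equivalent by symmetry, and I will focus on showing $P(1/2)\le \beta$.

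\emph{Localizing the maximizer.} The set $\mathcal{F}$ is compact: it is closed, and the normalization $\int_0^1 P'=1$ combined with $P'\ge 0$ forces $\|P'\|_{L^1([0,1])}=1$, which in turn bounds $(a,b,c)$ by equivalence of norms on the finite-dimensional cone of quadratics non-negative on $[0,1]$. Since the objective is linear, the maximum is attained. A KKT argument then forces at least one inequality constraint to be active at the optimum: the objective gradient $(0,1,3)$ is not parallel to the equality gradient $(1,1,1)$, so no stationary point exists while $P'>0$ throughout $[0,1]$. Hence $P'(x^\ast)=0$ for some $x^\ast\in[0,1]$. If $x^\ast\in(0,1)$, then since $P'$ is a non-negative quadratic vanishing at $x^\ast$, the point must be a double root, so $P'(x)=3a(x-x^\ast)^2$ with $a>0$.

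\emph{The interior case and conclusion.} Matching coefficients gives $b=-3ax^\ast$ and $c=3a(x^\ast)^2$, and the normalization $a+b+c=1$ yields $a = 1/\bigl(1-3x^\ast+3(x^\ast)^2\bigr)$, positive since the denominator has negative discriminant. Substituting,
\[
P(1/2) \;=\; \frac{1}{8}\left(1 + \frac{3\,x^\ast(3x^\ast-1)}{1-3x^\ast+3(x^\ast)^2}\right).
\]
Differentiating with respect to $x^\ast$ and clearing the positive denominator, the numerator of the derivative simplifies to $-6(x^\ast)^2+6x^\ast-1$, whose roots are $x^\ast=\tfrac12\pm\tfrac{\sqrt3}{6}$. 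Evaluating at $x^\ast=\tfrac12+\tfrac{\sqrt3}{6}$ gives $P(1/2)=\tfrac12+\tfrac{\sqrt3}{4}=\beta$ (the other root gives $1-\beta$, the minimum along this family). The remaining boundary cases $x^\ast\in\{0,1\}$ are handled directly: $x^\ast=0$ forces $c=0$, and then $P'(x)=x(3ax+2b)\ge 0$ on $[0,1]$ combined with $a+b=1$ gives $0\le b\le 3$, hence $P(1/2)\le 1/2$; the case $x^\ast=1$ is the limit $P(x)=1-(1-x)^3$, giving $P(1/2)=7/8$. Both are strictly below $\beta$, so the maximum is exactly $\beta$, and the lower bound $P(1/2)\ge 1-\beta$ follows by the symmetry noted above. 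The only mildly subtle step is the compactness-plus-KKT reduction; once $P'(x)=3a(x-x^\ast)^2$ is established, the remainder is a clean univariate calculus problem.
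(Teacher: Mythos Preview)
Your argument is essentially correct and genuinely different from the paper's. One small imprecision: the boundary case $x^\ast=1$ does \emph{not} force $P(x)=1-(1-x)^3$; the constraint $P'(1)=0$ cuts out a one-parameter family (parametrize by $a$, with $b=-1-2a$, $c=2+a$), just as $x^\ast=0$ did. You should treat it the same way you treated $x^\ast=0$: the non-negativity of $P'$ on $[0,1]$ confines $a\in[-2,1]$, and $P(1/2)=(6+a)/8\le 7/8<\beta$. Alternatively, your symmetry $P\mapsto 1-P(1-\cdot)$ sends the $x^\ast=1$ family to the $x^\ast=0$ family, so the maximum over the former is $1$ minus the minimum over the latter, namely $1-1/8=7/8$. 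Either fix is immediate, so this is a presentational gap rather than a real one.

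The paper takes a much shorter dual route. Instead of locating the extremizer, it simply evaluates the constraint $P'(\lambda)\ge 0$ at the single point $\lambda=(3+\sqrt{3})/6$ and shows by elementary linear algebra that this inequality, together with $a+b+c=1$, already implies $P(1/2)\le\beta$. No compactness, no KKT, no case analysis---just a well-chosen certificate. Your approach is the primal counterpart: you characterize the extremal polynomial via KKT, find that $P'$ must be a perfect square $3a(x-x^\ast)^2$, and then optimize over $x^\ast$. The two are linked by complementary slackness: your optimal double-root location $x^\ast=\tfrac12+\tfrac{\sqrt3}{6}$ is exactly the paper's magic $\lambda$, i.e., the support of the dual measure. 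The paper's proof is shorter once one knows $\lambda$; yours is longer but explains where that value comes from and exhibits the extremizer explicitly.
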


\begin{proof}
	We prove that $P( \frac{1}{2} ) \leq \frac{1}{2} + \frac{\sqrt{3}}{4}$. The other inequality then holds by symmetry, that is, considering the polynomial $Q(x) = 1-P(1-x)$ which is also non-decreasing and $Q( \frac{1}{2} ) = 1-P( \frac{1}{2} )$. From now on, we focus on proving that $P( \frac{1}{2} ) \leq \frac{1}{2} + \frac{\sqrt{3}}{4}$.
	
	Let us denote the coefficients of $P$ by $a, b, c$ and $d$, that is,
	\[
	P(\lambda) = a \lambda^3 + b \lambda^2 + c \lambda + d \,.
	\]
	Because $P(0) = 0$, we have that $d=0$. At the same time, because $P(1) = 1$, we have that $a+b+c = 1$.
	From the assumption that the polynomial is non-decreasing, we also have that its derivative is non-negative, meaning that
	for any $\lambda\in [0,1]$,
	\[
	3 a \lambda^2 + 2b \lambda + c \geq 0\,.
	\]
	
	At $1/2$, the polynomial evaluates to
	\[
	f(a,b,c) = a/8 + b/4 + c/2 \,.
	\]
	
	We need to prove that under the constraints we mentioned, it holds that 
	\[
	f(a,b,c) \leq \frac{1}{2} + \frac{\sqrt{3}}{4}
	\]
	
	First, using the constraint \( a + b + c = 1 \), we express \( c \) as \( c = 1 - a - b \). Substituting this into the inequality \( 3a\lambda^2 + 2b\lambda + c \geq 0 \), we obtain
	$3a\lambda^2 + 2b\lambda + (1 - a - b) \geq 0$.
	Simplifying this expression gives:
	\begin{equation}\label{eqn:cubicPolyMonoMiddleBound-1}
		(3\lambda^2 - 1)a + (2\lambda - 1)b + 1 \geq 0\,.
	\end{equation}
	
	Next, we choose \( \lambda = \dfrac{3 + \sqrt{3}}{6} \), which lies within \( [0, 1] \). We compute the coefficients:
	\[
	3\lambda^2 - 1 = \dfrac{\sqrt{3}}{2}, \quad 2\lambda - 1 = \dfrac{\sqrt{3}}{3}\,.
	\]
	Substituting these values into inequality~\eqref{eqn:cubicPolyMonoMiddleBound-1} gives:
	\[
	\dfrac{\sqrt{3}}{2} a + \dfrac{\sqrt{3}}{3} b + 1 \geq 0\,,
	\]
	which simplifies to:
	\begin{equation}\label{eqn:cubicPolyMonoMiddleBound-2}
		\sqrt{3} \left( \dfrac{a}{2} + \dfrac{b}{3} \right) + 1 \geq 0\,.
	\end{equation}
	We now express \( f(a, b, c) \) in terms of \( a \) and \( b \). Since \( c = 1 - a - b \), we have:
	\begin{equation}\label{eqn:cubicPolyMonoMiddleBound-3}
		f(a, b, c) = \dfrac{a}{8} + \dfrac{b}{4} + \dfrac{1 - a - b}{2} = \dfrac{1}{2} - \dfrac{3a}{8} - \dfrac{b}{4}\,.
	\end{equation}
	To relate inequality~\eqref{eqn:cubicPolyMonoMiddleBound-2} to \( f(a, b, c) \), observe that inequality~\eqref{eqn:cubicPolyMonoMiddleBound-2} implies:
	\[
	\dfrac{a}{2} + \dfrac{b}{3} \geq -\dfrac{1}{\sqrt{3}}.
	\]
	Multiplying both sides by \( -\frac{3}{4} \) gives:
	\[
	-\left( \dfrac{3a}{8} + \dfrac{b}{4} \right) \leq \dfrac{3}{4\sqrt{3}}.
	\]
	From Equation~\eqref{eqn:cubicPolyMonoMiddleBound-3}, we have:
	\[
	f(a, b, c) = \dfrac{1}{2} - \left( \dfrac{3a}{8} + \dfrac{b}{4} \right)\,.
	\]
	Using the inequality derived above, we obtain:
	\[
	f(a, b, c) \leq \dfrac{1}{2} + \dfrac{3}{4\sqrt{3}} = \dfrac{1}{2} + \dfrac{\sqrt{3}}{4}\,.
	\]
\end{proof}

By scaling on both x- and y-axes and using the monotonicity of PCHIP~\cite{fritsch.butland:1984:siamj.sci.andstat.comput., fritsch1980monotone} and exactness on the thresholds, Lemma~\ref{lem:cubicPolyMonoMiddleBound} implies that using PCHIP interpolation
for splitting buckets in the middle of the gap between two thresholds gives a pair of two buckets with counters smaller than the original bucket counter by at least a factor $1 - \beta$ for constant $\beta > 0$ (here, we do not take the buffer items into account which cause certain technical complications in the subsequent proofs). This property is further used in our analysis in Appendix~\ref{sec:analysis}.
In fact, we have verified that $\beta = 1/8$ for the PCHIP interpolation from the SciPy library that we use in the implementation.
We note that better bounds on $\beta$ can be obtained for specific interpolation methods.
For the linear interpolation we have $\beta = 0.5$.

\section{Analysis of Uniform Error Guarantees} %
\label{sec:analysis}

We provide a formal analysis of SplineSketch with respect to the uniform error guarantees.
We first focus on the streaming setting and then analyze mergeability.
We also prove that the algorithm can always find a pair of buckets to join when we need to split a bucket (\Cref{lem:existsBucketToMerge}), showing that the algorithm is well-defined;
the latter is proven in the most general mergeability setting, i.e., when the sketch is created
by an arbitrary sequence of pairwise merge operations executed on single data items (\Cref{sec:mergeTree}), 
only assuming that all sketches in the process have the same number of buckets $k$.

In the subsequent analysis, we assume that $k\ge 6$.
The precise constant factors in the analysis depend on the interpolation used and in particular,
on the value of $\beta$ as defined in Appendix~\ref{app:monotoneCubicPolyMiddleBounds}.
Next, we will require that the constant factor $C_b$ in the bucket bound~\eqref{eqn:bucketBound}
is large enough, namely $C_b \ge 18\cdot 2.5 / \beta$, which implies $C_b \ge 16$ as $\beta < 1$.

\subsection{Analysis in the Streaming Setting}\label{sec:streamingAnalysis}

Here, we analyze SplineSketch with $k$ buckets created by $n$ update operations, each adding one data item.
By the sketch at time $t \in [0, n]$, we mean the sketch after adding $t$ items.
In the analysis, we consider intermediate states of the sketch when executing the consolidate method and
performing splits and joins, i.e., the state of buckets after each join or split (the number of buckets may thus be $k-1$ or $k+1$). 
Recall that time is split into epochs.
Epoch 0 ends at $n_0 := \Theta(k)$, which is the initial buffer size, and for $j > 0$, the $j$-th epoch
ends at $n_j := \lceil 1.25\cdot n_{j-1}\rceil $.
Note that during epoch 0, there are no buckets and thus the buffer contains all items seen so far.

\subsubsection{Existence of joinable pair of buckets}
In describing the consolidate subroutine (\Cref{sec:consolidate}), we assumed that there is always a pair of buckets that we can join if a bucket counter exceeds the bound~\eqref{eqn:bucketBound} or if there is a new minimum or maximum in the buffer. Now we show that it holds in the streaming setting.

An important technical detail is that when the number of items in sketch is $t \ll k^2$ and the buffer has size $\Theta(k) \gg t/k$,
	the following lemma may not hold for a carefully designed input.
	The reason is that, as we keep the buffer of size $\Theta(k)$ during \consolidate\ and use it for more precise splits, it may not hold that after 
	a split, at least a constant fraction of items ends up in both of the resulting buckets; this is because almost all of the items in a bucket may be from the buffer and close to one of the thresholds.

	To resolve this issue, in the following proof we assume that during \consolidate\ at time $t$ we process the buffer in batches of size at most $(C_b / 2)\cdot t/k$,
	performing splits due to exceeding the bucket bound for each batch.
	This modification is not done in the actual implementation as it would negatively affect the update time for small datasets and possibly also the accuracy;
	furthermore, it is only needed when many more than $t/k$ items are inserted from the buffer to a single bucket in a way
	that they end up in one of the two new buckets created by the split.\footnote{
		Tomáš Domes suggested a correct way to resolve such an issue by performing a ``double split'' when needed:
	That is, split the bucket in the middle by length, as in the current algorithm, and if this does not split the buffer items inside the bucket somewhat evenly, perform another split at the median of the buffer items inside the bucket. An implementation of double splits is left to future work.}

\begin{lemma}\label{lem:existsBucketToMergeStreaming}
For any time $t$ during processing the stream 
after initialization of buckets by Algorithm~\ref{alg:initBuckets},
SplineSketch contains a joinable pair of adjacent buckets, according to Definition~\ref{def:joinableBuckets}.
\end{lemma}

\begin{proof}
	Suppose for a contradiction that there is no joinable pair at time $t$.
	Note that a pair of adjacent buckets is not joinable for one of two reasons: (1) due to bucket bounds, namely that the resulting bucket would have counter exceeding $0.75\cdot C_b \cdot t/k$ (as the current value of $n$ is equal to $t$) or (2) due to protected thresholds.
	Observe that there are at most $2t / (0.75\cdot C_b \cdot t/k) = \frac83 k / C_b \le k/6$ pairs of buckets satisfying case (1), where we use $C_b \ge 16$.
	Thus, the main part of the proof is bounding the number of protected thresholds.
	
	Let $j$ be the epoch at time $t$, i.e., the smallest $j > 0$ such that $t\le n_j$. 
	(For $j=0$, there is nothing to prove as there are no buckets in the sketch.)
	Since there is no joinable pair at time $t$, we take the last time $t' < t$ during the $j$-th epoch
	such that the sketch at $t'$ has at least $k/3 + 2$ joinable pairs of buckets before performing \consolidate (the number of joinable pairs does gets below $k/3 + 2$ during the execution of \consolidate).
	Time $t'$ is well-defined, since at the beginning of the $j$-th epoch before performing any split,
	the sketch has at least $k - 1 - k/6$ joinable pairs
	as there are no protected thresholds and since $k - 1 - k/6 \ge k/3 + 2$ by the assumption that $k\ge 6$.
	After time $t'$, the sketch has always less than $k/3 + 2$ joinable pairs of buckets, and
	thus the algorithm does not perform splits due to the heuristic error, that is,
	it only splits due to exceeding the bucket bound.

	To bound the number of splits at times $(t', t]$, we consider a suitable potential of each sketch state $S$ at time $t''\in (t', t]$ with $k(S)$ buckets, including the intermediate states during the consolidate methods (after performing some splits or joins of buckets; thus, its number of buckets $k(S)$ may also be $k-1$ or $k+1$). The potential is defined as follows:
	\begin{equation}\label{eqn:existsBucketToMerge-potentialStreaming}
		\Psi(S) := \sum_{i = 1}^{k(S)} \max\left\{0, b_i - 0.75\cdot C_b\cdot \frac{t''}{k}\right\}\,,
	\end{equation}
	where $b_i$ is the counter of bucket $(\tau_{i-1}, \tau_i]$ of the sketch $S$.
	Clearly, $\Psi(S) \in [0, t'']$ and moreover, the total increase of the potential during $(t', t]$ is at most $t - t'$.
	Note that the potential does not increase by joining buckets by Definition~\ref{def:joinableBuckets}.
	
	We now analyze the decrease of the potential by a split at time $t''\in (t', t]$; such a split is thus due to exceeding the bucket bound.
	In particular, we now show that the potential decreases by at least $\beta\cdot C_b\cdot t / (2.5\cdot k)$.
	First, suppose that both buckets resulting from the split have counters at least $0.75\cdot C_b\cdot t''/k$; then, the potential decreases by at least $0.75\cdot C_b\cdot t''/k \ge \beta\cdot C_b\cdot t / (2.5\cdot k)$, where we use that $t' \le t''$ and $t \le 1.25\cdot t'$ as $t$ and $t'$ are in the same epoch.
	Otherwise, at least one of the resulting buckets has counter smaller than $0.75\cdot C_b\cdot t''/k$.
	Recall that when splitting, the counters of the new buckets are determined from the interpolation of the original buckets (before executing \consolidate\ that performs the split)
	and the buffer.
	Let $b'_i \le b_i$ be the number of buffer items in the split bucket (including items from the current batch only).
	Since the batch of items added from the buffer has size at most $(C_b / 2)\cdot t''/k$, we have that $b'_i \le (C_b / 2)\cdot t''/k$.
	For the remaining $b_i - b'_i$ items in the split bucket, the interpolation ensures that both buckets resulting from the split
	receive at most a $1 - \beta$ fraction of them.
	As all of the $b'_i$ items from the buffer may end up in one of the resulting buckets,
	it holds that the counters of the resulting buckets are both at most
	$(1 - \beta)\cdot (b_i - b'_i) + b'_i
	= (1 - \beta)\cdot b_i + \beta\cdot b'_i$, and therefore by at least $\beta\cdot  b_i - \beta\cdot b'_i$ smaller than the counter $b_i$ before the split.
	The potential of the bucket before the split is $b_i - 0.75\cdot C_b\cdot t'' / k$.
	Using the definition of the potential and $\beta < 0.25$, a split at time $t''$ decreases the potential by at least\footnote{For $\beta \ge 0.25$, the constant factors in this proof need to be adjusted. Nevertheless, one can use $\beta < 0.25$ even for linear interpolation.}
	$$
	\beta\cdot \left(b_i - b'_i\right)
	> \beta\cdot \left(C_b\cdot \frac{t''}{k} - \frac{C_b}{2}\cdot \frac{t''}{k}\right)
	= \beta\cdot C_b\cdot \frac{t''}{2k} 
	\ge \beta\cdot C_b\cdot \frac{t'}{2k}
	\ge \beta\cdot C_b\cdot \frac{t}{2.5\cdot k}
	$$
	using that $t \le 1.25\cdot t'$ as $t'$ and $t$ are in the same epoch.
	
	Combining the lower bound on the decrease due to a split with the upper bound on the total potential,
	it follows that the number of splits during $(t', t]$ is at most 
	$$\frac{t}{\beta\cdot C_b\cdot \frac{t}{2.5\cdot k}} = \frac{2.5\cdot k}{\beta\cdot C_b}\,.$$
	We choose $C_b$ based on $\beta$ so that $2.5\cdot k / (\beta\cdot C_b) \le k/18$.
	As every split results in three protected thresholds, the number of thresholds that get protected during $(t', t'']$ is at most $k/6$.
	Since there are at least $k/3 + 2$ joinable pairs of buckets at time $t'$,
	at most $k - k/3 - 2 = 2k/3 - 2$ thresholds are protected at time $t'$.
	By time $t$, at most $k/6$ more thresholds get protected.
	As at most $k/6$ of bucket pairs at time $t$ are non-joinable due to falling in case (1),
	there will be a joinable pair of buckets at time $t$, which is a contradiction.
\end{proof}

\subsubsection{Proof of the uniform error guarantee in the streaming setting}
We now formally prove the worst-case guarantee for SplineSketch in the streaming setting.

\mainthmStreaming*

\begin{proof}
First, we assume w.l.o.g.\ that there are no items of frequency above $n/k$ after processing $n$ items
by running the Misra-Gries sketch~\cite{misra1982finding} of size $O(k)$ in parallel, as described in Section~\ref{sec:highFrequencyItems-Impl}.
Using such MG filtering, we maintain the invariant that at any time $t$, every item inserted into the buckets of SplineSketch has frequency at most $t/k$.

Second, we show an upper bound on the error only for the thresholds of the final sketch.
The error bound then extends to any rank or quantile query by the fact that the counter of each bucket has $O(n/k)$ items and that the interpolation is monotone.
Thus, in the following we consider a threshold $\tau_i$ in the final sketch after processing $n$ items and analyze the rank error at $\tau_i$.

To this end, we define the error for any threshold $\tau_a$ in the sketch at time $t$ as follows:
Let $\mathcal{I}_B^t$ be the multiset of items that were added to the buckets by time $t$, 
let $\rank^t(\tau_a) = |\{y \in \mathcal{I}_B^t : y \le \tau_a\}|$ be the $\tau_a$'s rank with respect to items $\mathcal{I}_B^t$,
and let $\estRank^t(\tau_a) := \sum_{j = 1}^i b_j$ be the estimated rank of $\tau_a$ in $\mathcal{I}_B^t$ solely based on the buckets.
Then we define the rank error of $\tau_a$ in the sketch at $t$ as $\Err^t(\tau_a) := \rank^t(\tau_a) - \estRank^t(\tau_a)$.

We show that for any threshold $\tau_i$ of the final sketch, $\Err^n(\tau_i) \le O(\log \alpha\cdot n/k)$;
equivalently, we aim to show that only $O(\log \alpha\cdot n/k)$ input items $y \le \tau_i$
are accounted for in buckets with items greater than $\tau_i$.
The lower bound, i.e., $\Err^n(\tau_i) \ge -O(\log \alpha\cdot n/k)$, can be obtained in a symmetric way.

First note that the error at a threshold $\tau_a$ does not change when items are added to the buckets or when 
performing a join of buckets (as the removed threshold is no longer relevant).
Also, thresholds created by the equally spaced selection of initial buckets have zero error as they are exact at the bucket thresholds, and similarly, thresholds created by new minima or maxima in the buffer have zero error as they are exact at the buffer items and overlap no items in the buckets.
So the error may increase only due to splitting a bucket.

For the threshold $\tau_i$ of the final sketch,
we consider the following sequence of times $t_0, t_1, t_2, \dots$ and
thresholds $\sigma_0, \sigma_1, \sigma_2, \dots$:
Let $\sigma_0 := \tau_i$ and let $t_0$ be the last time of the split that creates $\tau_i$;
if $\tau_i$ is created by the equally spaced selection of initial buckets  (Algorithm~\ref{alg:initBuckets}) or by a new minimum or maximum in the buffer, then $\Err^n(\tau_i) = 0$ and we are done.
Note that the rank error at $\tau_i$ will not increase after the split at $t_0$, by the description of the algorithm.
Then for $a = 1, 2, \dots$, we iteratively define $\sigma_a$ and $t_a$ as follows:
If $\sigma_{a-1}$ is created by a split of a bucket $(\tau', \tau'']$ during the consolidate method at $t_{a-1}$, implying $\sigma_{a-1} = (\tau' + \tau'')/2$,
we set $\sigma_a := \tau'$ and let $t_a$ be the last time up to $t_{a-1}$ such that $\sigma_a$ is created by the split at $t_a$ or by the equally spaced selection of initial buckets at $t_a$ or by a new minimum or maximum in the buffer at $t_a$.
Note that we may have $t_a = t_{a-1}$ as one bucket may be split repeatedly during one call of consolidate method.
Otherwise, $\sigma_{a-1}$ is  created by the equally spaced selection of initial buckets or a new minimum or maximum in the buffer at time $t_{a-1}$, 
and we stop this process.
Let $m$ be the index when this process stops; that is, at $t_m$, threshold $\sigma_m$ was created
by the equally spaced selection.
We have that $\Err^{t_m}(\sigma_m) = 0$, as the equally spaced selection is exact at the bucket thresholds,
and that $\Err^{t_0}(\sigma_0) = \Err^{t_0}(\tau_i) = \Err^{n}(\tau_i)$.

Let $j_a$ be the epoch of the sketch represented by $t_a$.
We claim that for any epoch $j$, there are at most $O(\log \alpha)$ indexes $a$ with $j_a = j$.
To this end, recall that any threshold created by a split during epoch $j$ is not removed by a join during the same epoch due to the threshold protection.
Let $b$ be the smallest $a$ such that $j_a = j$; note that as the indexes are in the reverse order of time, $\sigma_b$ is the last threshold from the sequence $[\sigma_0, \sigma_1, \dots, \sigma_m]$ created during epoch $j$.
If for $a = b+1$, time $t_a \le t_b$ is still in epoch $j$ (i.e., $j_a = j$), then 
after the split at time $t_a$, the new bucket $(\sigma_a, \sigma'_a]$ contains $\sigma_b$,
and both thresholds $\sigma_a, \sigma'_a$ get protected.
Using the induction over $a > b$ such that $j_a = j$, 
it holds that after the split at time $t_a$, the new bucket $(\sigma_a, \sigma'_a]$ contains $\sigma_b$, i.e., $\sigma_b\in (\sigma_a, \sigma'_a]$.
Observe that by the definition of $\alpha$ and since we always split a bucket in the middle, after $O(\log \alpha)$ splits that involve buckets containing $\sigma_b$, the length of the bucket containing $\sigma_b$ will be below the smallest distance between distinct items, i.e., the bucket will cover at most one distinct item $x$ of the input.
The algorithm may still split the bucket containing $\sigma_b$ after these $O(\log \alpha)$ splits but the number of these additional splits is only $O(1)$, which follows from the following facts: First,
note that there are no items of frequency above $t_b/k$ by the assumption,
which implies that the number of buffer items within a bucket containing a single distinct input item is at most $t_b/k$.
Therefore, for a large-enough constant $C_b$, every split due to the bucket bound decreases the bucket counter by a factor smaller than 1, close to $1 - \beta \approx 0.934$.
Finally, we do not split a bucket with counter below $2t_b/k$ by Definition~\ref{def:splittableBuckets}, even due to the heuristic error.
This proves the claim that there are at most $O(\log \alpha)$ indexes $a$ with $j_a = j$, i.e., splits in the sequence during epoch $j$.

Moreover, each split in the sequence at time $t_a$ with $j_a = j$ increases the error by at most $O(n_j / k)$, due to bucket size bounds;
specifically for any $a$ with $j_a = j$, we have that $\Err^{t_a}(\sigma_a) \le \Err^{t_{a+1}}(\sigma_{a+1}) + O(n_j / k)$, because at most $O(n_j / k)$ items no larger than $\sigma_a$ accounted for in the bucket counter of the split bucket $(\sigma_{a+1}, \sigma'_{a+1}]$ may be accounted for in the bucket counter of the new bucket $(\sigma_a, \sigma'_{a+1}]$.
Therefore, the total error from splits during epoch $j$ is at most $O(\log (\alpha)\cdot n_j/k)$.
Since the values of $n_j$ increase geometrically, the total error at $\tau_i$ resulting from splits sums to $O(\log (\alpha)\cdot n/k)$.
This shows $\Err^n(\tau_i) \le O(\log (\alpha)\cdot n/k)$.

A symmetrical proof bounds the error from below, i.e., $\Err^n(\tau_i) \ge -O(\log (\alpha)\cdot n/k)$;
namely, we define a similar sequence of times $t'_0, t'_1, t'_2, \dots$ and
thresholds $\sigma'_0, \sigma'_1, \sigma'_2, \dots$,
with the difference that when $\sigma_{a-1}$ is created by a split of $(\tau', \tau'']$ at $t_{a-1}$, then the next threshold in the sequence is $\sigma_a := \tau''$, instead of $\tau'$.
Therefore, the absolute rank error satisfies $|\Err^n(\tau_i)| \le O(\log (\alpha)\cdot n/k)$, which concludes the proof.
\end{proof}

\subsection{Analysis of Mergeability}\label{sec:mergeabilityAnalysis}

We start by defining the mergeability setting using a binary tree where internal nodes correspond to merge operations or operations inside the sketch, namely splitting or joining buckets. 

\subsubsection{Mergeability setting}\label{sec:mergeTree}

For analyzing sketches created by a sequence of pairwise merge operations, 
it is convenient to define a binary tree $T_0$ with $n$ leaves, each representing one data item and inner nodes with two children corresponding to the merge operations.
Additionally, we will use inner nodes representing the operations with buckets, namely splitting a bucket or joining a pair of adjacent buckets.
That is, we define $T_0$ with root representing the final sketch, leaves corresponding to the single data items,
and inner nodes of three types:
(1) merge-nodes with two children, representing merge operations such that the sketch
represented by a merge-node is obtained by the union of buffers and buckets from the two source sketches represented by the children,
as described in Section~3.2, %
(2) split-nodes representing one split of a bucket, and (3) join-nodes representing one join of two adjacent buckets;
split-nodes and join-nodes have a single child.
Each node represents the sketch after the operation is performed.
(A technicality is that the number of buckets may possibly be $k-1$ or $k+1$ after a join or split
or even up to $2k$ after taking the union of thresholds during a merge operation.)
Other operations inside the sketch, namely creating initial buckets by the equally spaced selection and merging buffer into buckets, need not be represented in $T_0$ for our analysis as the equally spaced selection is exact on thresholds.

Let $n_t$ be the number of leaves in the subtree of $t$, i.e., the number of data items summarized by the sketch in $t$.

We further label the inner nodes by epochs. Namely, let $n_0 := \Theta(k)$ be the initial buffer size and $n_j := 1.25\cdot n_{j-1}$ for $j > 0$ be the epoch ends.
We call a node $t$ of $T_0$ a \emph{$j$-node} if $j$ is the smallest integer $j\ge 0$ such that
the number of items summarized by the sketch represented by $t$ is at most $n_j$.
Clearly, the labels are non-decreasing on any leaf-to-root path.
Moreover, for any $j>0$, merging two sketches represented by $j$-nodes results in a sketch represented by a merge-node
labeled by $j' > j$ as the two source sketches summarize a similar number of items.
We thus obtain:

\begin{observation}\label{obs:jNodesFormPath}
    For any epoch $j > 0$, the $j$-nodes of $T_0$ form a disjoint union of paths.
\end{observation}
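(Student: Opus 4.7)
The plan is to show that the subgraph of $T_0$ induced by the $j$-nodes has maximum degree at most $2$, and then use acyclicity of $T_0$ to conclude it is a disjoint union of paths. The overall argument essentially formalizes the parenthetical remark preceding the observation statement as a degree bound, and the crucial ingredient is the definition $n_j = 1.25 \cdot n_{j-1}$, which gives the slack $2 n_{j-1} > n_j$ needed to rule out certain degree-$3$ configurations.

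First, I would note that each node in $T_0$ has a unique parent (if any), so every $j$-node contributes at most one neighbor in the induced subgraph on the parent side. It then remains to bound the number of $j$-node children of a $j$-node. For split-nodes and join-nodes this is immediate, since they have a single child. The substantive step is handling merge-nodes: I would argue that a merge-node that is a $j$-node with $j > 0$ has at most one child that is also a $j$-node. Suppose for contradiction that both children are $j$-nodes. Then by definition of $j$-node, each child summarizes strictly more than $n_{j-1}$ items, so the merged sketch summarizes strictly more than $2 n_{j-1}$ items. But $n_j = 1.25 \cdot n_{j-1}$ yields $2 n_{j-1} = 1.6 \cdot n_j > n_j$, contradicting the merge-node itself being a $j$-node (which requires at most $n_j$ items).

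Combining both bounds, every $j$-node has at most two neighbors in the induced subgraph. Since this subgraph inherits acyclicity from the tree $T_0$, each connected component is a tree of maximum degree $2$, hence a path (possibly a single vertex), and the observation follows. I do not expect any real obstacle: the only point requiring care is distinguishing the three node types and checking that the strictness of the lower bound on items at a $j$-node for $j > 0$ is what makes the calculation go through, whereas the hypothesis $j > 0$ is essential since a $0$-node has no positive lower bound on its item count.
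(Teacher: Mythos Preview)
Your proposal is correct and takes essentially the same approach as the paper. The paper's justification is only the sentence immediately preceding the observation (that merging two $j$-nodes for $j>0$ yields a merge-node labeled $j'>j$), and your argument formalizes exactly this via a degree bound in the induced subgraph, with the same arithmetic $2n_{j-1}>n_j$ underlying both.
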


The 0-nodes form possibly more complicated subtrees. However, sketches represented by 0-nodes have no buckets and are exact as they store all their items in the buffer.

\subsection{Existence of Joinable Pair of Buckets}\label{sec:existsBucketToMerge}

In describing the consolidate subroutine (Sec.~3.1), we assumed that there is always a pair of buckets that we can join if a bucket counter exceeds the bound~(1) or if there is a new minimum or maximum in the buffer. Now we show that it holds in the general mergeability setting,
thus generalizing Lemma~\ref{lem:existsBucketToMergeStreaming} from the streaming setting.

Note that we need the same assumption on the buffer when the number of items summarized by the sketch is $n \ll k^2$;
namely that the buffer is incorporated into buckets in batches of size at most $(C_b / 2) \cdot n / k$.

\begin{lemma}\label{lem:existsBucketToMerge}
For any $j$-node $t$ of the merge tree $T_0$ with $j > 0$ (i.e., containing buckets),
the sketch represented by $t$ contains a joinable pair of adjacent buckets, according to Definition~\ref{def:joinableBuckets}.
\end{lemma}

\begin{proof}
	The proof is similar to the streaming setting (Lemma~\ref{lem:existsBucketToMergeStreaming}),
	only using the merge tree instead of time.
	Suppose for a contradiction that there is no joinable pair in the sketch represented by a node $t$ of the merge tree $T_0$.
	Note that a pair of adjacent buckets is not joinable for one of two reasons: (1) due to bucket bounds, namely that the resulting bucket would have counter exceeding $0.75\cdot C_b \cdot n_t/k$ or (2) due to protected thresholds.
	Observe that there are at most $2n_t / (0.75\cdot C_b \cdot n_t/k) = \frac83 k / C_b \le k/6$ pairs of buckets satisfying case (1), where we use $C_b \ge 16$.
	Thus, the main part of the proof is bounding the number of protected thresholds.
	
	By \Cref{obs:jNodesFormPath}, $j$-nodes in the subtree of $t$ form a path $P_j$.
	If the sketches represented by nodes on $P_j$ all have at least $k/3 + 2$ joinable pairs of buckets,
	then the lemma holds.
	Consider the highest node $t'$ on the path $P_j$ such that the sketch represented by $t'$ has at least $k/3 + 2$ joinable pairs of buckets; we have that $t' \neq t$, i.e., $t'$ is below $t$, otherwise the lemma holds.
	Note that the sketch of the lowest node $u$ on $P_j$ has at least $k - 1 - k/6$ joinable pairs
	as there are no protected thresholds (the merge operation still resets the protection bitvector when the epoch changes and taking the union of buckets does not introduce new protected thresholds). It follows that $t'$ is well-defined since $k - 1 - k/6 \ge k/3 + 2$ by $k\ge 6$.

	Let $P'_j$ be the subpath of $P_j$ consisting of nodes above $t'$.
	By the definition of $t'$, all of them have less than $k/3 + 2$ joinable pairs of buckets.
	Thus, the algorithm does not perform splits on sketches on $P'_j$ due to the heuristic error, that is,
	it only splits due to exceeding the bucket bound.
	
	To bound the number of split-nodes on $P'_j$, we consider the following potential of each sketch represented by a node $v$ that has $k(v)\in \{k-1, k, \dots, 2k\}$ buckets:
	\begin{equation}\label{eqn:existsBucketToMerge-potential}
		\Psi(v) := \sum_{i = 1}^{k(v)} \max\left\{0, b_i - 0.75\cdot C_b\cdot \frac{n_v}{k}\right\}\,,
	\end{equation}
	where $b_i$ is the counter of bucket $(\tau_{i-1}, \tau_i]$ of the sketch of $v$.
	Clearly, $\Psi(v) \in [0, n_v]$ and moreover, the total increase of the potential on path $P'_j$ from merging with smaller sketches is at most $n_t$. 
	The potential does not increase by joining buckets by Definition~\ref{def:joinableBuckets}.
	Using the same argument as in the streaming setting, every split at node $v$ of a bucket $i$ with counter $b_i > C_b\cdot n_v/k$ 
	decreases the potential by at least $\beta\cdot C_b\cdot n_{t} / (2.5\cdot k)$.
	Hence, the number of split-nodes on path $P'_j$ is at most 
	$$\frac{n_t}{\beta\cdot C_b\cdot \frac{n_{t}}{2.5\cdot k}} = \frac{1.25\cdot k}{\beta\cdot C_b}\,.$$
	We choose $C_b$ based on $\beta$ so that the number of split-nodes on path $P'_j$ is at most $k/18$.
	As every split results in three protected thresholds, the number of thresholds that get protected on path $P'_j$ is at most $k/6$.
	Since there are at least $k/3 + 2$ joinable pairs of buckets in node $t'$, i.e.,
	at most $k - k/3 - 2 = 2k/3 - 2$ thresholds are protected at node $t'$,
	and at most $k/6$ of bucket pairs get non-joinable due to falling in case (1),
	there will be a joinable pair of buckets in node $t$.
\end{proof}

\subsubsection{Analysis of the Error from Merge Operations}

In the mergeability setting, the error at a threshold $\tau_i$ may also increase due to merge operations,
as we take the union of buckets from two sketches.
We first show that under balanced mergeability, the error is essentially maintained;
the fact that we only merge sketches summarizing a similar number of items will be crucially used to show
that the total error resulting from merging the sketches will be at most $O(n \log (n/k) / k)$.

\mainthmBalancedMergeability*

\begin{proof}[Proof sketch.]
	As in the streaming setting, we maintain the invariant that at any node $t$ of the merge tree summarizing $n_t$ items, every item inserted into the buffer and buckets of SplineSketch has frequency at most $n_tt/k$,
	by running the Misra-Gries sketch~\cite{misra1982finding} of size $O(k)$ in parallel, as described in Section~\ref{sec:highFrequencyItems-Impl}.
	We note that the Misra-Gries sketch is fully mergeable~\cite{AgarwalCHPWY13}.
		
	We again analyze the rank error for any threshold $\tau_i$ of the final sketch.
We consider the binary tree $T_0$ corresponding to merging sketches and operations with buckets, as described in Section~\ref{sec:mergeTree}.
To this end, we define the error for any threshold $\tau_a$ in the sketch represented by a node $t$ as follows:
Let $\mathcal{I}_B^t$ be the multiset of items at leaves in the subtree of $t$ (i.e., the items summarized by the sketch at $t$) that are not in the buffer of the sketch represented by $t$ (i.e., they are accounted for in the buckets), 
let $\rank^t(\tau_a) = |\{y \in \mathcal{I}_B^t : y \le \tau_a\}|$ be the $\tau_a$'s rank with respect to items $\mathcal{I}_B^t$,
and let $\estRank^t(\tau_a) := \sum_{j = 1}^i b_j$ be the estimated rank of $\tau_a$ in $\mathcal{I}_B^t$ solely based on the buckets.
Then we define the rank error of $\tau_a$ in the sketch represented by $t$ as $\Err^t(\tau_a) := \rank^t(\tau_a) - \estRank^t(\tau_a)$.

We show that for any threshold $\tau_i$ of the final sketch at the root $r$, $\Err^r(\tau_i) \le O(\log (n/k)\cdot n/k)$.
The lower bound, i.e., $\Err^r(\tau_i) \ge -O(\log (n/k)\cdot n/k)$, can be obtained in a symmetric way.
Note that the error may increase only due to splitting a bucket or due to taking the union of buckets when merging sketches.

First, observe that the total error from merge nodes due to taking the union of buckets is $O(n \log (n/k) / k)$.
Indeed, the error at a threshold of the sketch represented by a merge-node $t$, with children $t'$ and $t''$,
is at most the sum of errors at corresponding thresholds of $t'$ and $t''$ plus $O(n_t / k)$ due to the split induced by the merge operation
(i.e., rank query inside a bucket of one of the source sketches).
Summing $n_t$ over all merge-nodes $t$ that have buckets (i.e., summarize $\Omega(k)$ items) gives a bound of $O(n \log (n/k) / k)$, which follows from that the merges are balanced;
in more detail, consider a binary tree $T'_0$ where we consider only merge-nodes and note
that sketches on each level of $T'_0$ contain $n$ items in total and
the number of levels of $T'_0$ with sketches that have buckets is bounded by $O(\log (n/k))$.

Second, we consider splits due to the heuristic error. The rank error at a threshold increases by at most $O(n_t / k)$
by these splits during \consolidate\ after a merge operation represented by $t$
as there are $O(1)$ such splits by the assumption.
In total, for any threshold, the error increase due to these splits is at most $O(n \log (n/k) / k)$.

Finally, we account for the error from splits due to exceeding the bucket bound.
Observe that the bucket bound may increase only when the full buffer is merged into the buckets
and since the buffer size is $\Theta(k)$, there are $O(n/k)$ executions of \consolidate\ which merge buffer into buckets.
For any such execution, we use a similar argument as in the streaming setting to show that
only $O(\log \alpha)$ splits during one execution of \consolidate\ affect the error of a single threshold. Furthermore, each such split represented by a node $t$ increases the error by at most $O(n_t / k)$.
Let $t_1, \dots, t_\lambda$ be all merge-nodes $t_i$ such that the buffer is merged into buckets during the merge operation represented by $t_i$, and let $n(t_i)$ be the number of items summarized by the sketch represented by $t_i$. Note that for any split represented by $t$ during the execution of \consolidate\ after merging represented by $t_i$ satisfies $n_t = n(t_i)$.
We claim that $\sum_{i = 1}^{\lambda} n(t_i) = O(n)$; combining this with the bounds above implies that the total error of a threshold of the final sketch is $O(\log \alpha \cdot n / k)$.
Indeed, using the assumption on balanced merging, on any leaf-to-root path,
there are only $O(1)$ merges of the buffer to buckets and each node $t_i$ representing such a merge
satisfies $n(t_i)\in \Theta(k)$. Therefore, the number of items in the resulting sketch can be charged to the contents of the buffer so that each items gets charged only $O(1)$ times (as it is added to the buckets from the buffer only once). This implies the claim that $\sum_{i = 1}^{\lambda} n(t_i) = O(n)$,
and concludes the proof.
\end{proof}

\paragraph{Full mergeability.}
Finally, we consider the most general setting of mergeability.
The previous versions of this paper contain claims that SplineSketch satisfies full mergeability~\cite{AgarwalCHPWY13} with the same error bounds as in the streaming setting, 
i.e., that the error guarantees hold even if the sketch is built by an arbitrary sequence of merge operations.
Unfortunately, the provided argument for this claim has a flaw; specifically, it does not account for the error from taking the union of buckets properly.
Here, we outline a ``counterexample'', i.e., a way of merging SplineSketches in an unbalanced way so that the worst-case error may be $\Omega(n)$,
instead of the claimed $O(\log \alpha\cdot n/k)$.
Then we show a simple (and standard) way to provide full mergeability at the cost of an additional $O(\log (n/k))$ factor in the space bound,
i.e., using space $O(k\log (n/k))$ memory words, while still achieving error $O((\log \alpha + \log (n/k))\cdot n/k)$.

Intuitively, the reason why merging sketches with substantially different number of items is problematic is as follows:
Suppose we somehow create a ``large'' sketch $S$ summarizing $n/2$ items. Next, we perform $k/2$ merges with ``smaller'' sketches, each summarizing $n/k$ items. 
Each of these merges may split a particular bucket of the large sketch and thus increase the worst-case error by $\Omega(n / k)$ for that bucket, leading to the worst-case error of $\Omega(n)$
at a particular threshold after these merges\footnote{
	One could potentially avoid splits of the large sketch and not take the union of bucket thresholds, only using the thresholds of the larger sketch;
	nevertheless, many splits of a single bucket in the large sketch may be enforced as the bucket bound needs to be maintained (this is an issue related to merging the buffer items in the streaming setting).}.
The $O(\log \alpha)$ bound on the number of splits of a single bucket does not apply
as splits induced by merging may be very close to one of the bucket's thresholds and not in the middle.

More generally, the error increase due to the merge represented by each merge-node $t$ of the merge tree can be $\Omega(n_t / k)$,
where $n_t$ is the number of items summarized by the sketch.
For balanced merging, summing over all merge-nodes gives overall worst-case error $\Omega(n \log(n/k) / k)$.
However, for an unbalanced merge tree, the error from merging may be up to $\Omega(\min(n, n^2 / k^2))$.

The simple way to fix it is to maintain $O(\log (n/k))$ instances of SplineSketch,
with the $i$-th one summarizing $(2^{i-1}\cdot k, 2^i\cdot k]$ items for $i = 1, \dots, \lfloor\log_2 (n/k) \rfloor$.
Then the merge operation only merges instances containing almost the same number of items, i.e., we have reduced to (perfectly) balanced merging.
Therefore, in space $O(k\log (n/k))$, we can guarantee error of $O((\log \alpha + \log (n/k)) \cdot n/k)$ by Theorem~\ref{thm:balancedMergeability}.

\end{document}